\def\dOi{13(3:21)2017}
	\pgfplotsset{compat=1.12}
	\crefname{enumi}{}{}
	\DeclareMathOperator{\NN}{\mathbb N}
	\DeclareMathOperator{\ZZ}{\mathbb Z}
	\DeclareMathOperator{\RR}{\mathbb R}
	\DeclareMathOperator{\DD}{\mathbb D}
	\DeclareMathOperator{\C}{\mathcal C}
	\DeclareMathOperator{\B}{\mathcal B}
	\DeclareMathOperator{\M}{\mathcal M}
	\newcommand{\PP}{\mathcal P}
	\newcommand{\NP}{\mathcal{NP}}
	\newcommand{\FP}{\mathcal{FP}}
	\newcommand{\capa}[1]{\mathrm{cap}\left(#1\right)}
	\newcommand{\size}[1]{\left|#1\right|}
	\newcommand{\bsize}[1]{\big|#1\big|}
	\newcommand{\aaT}{Arzelà-Ascoli\xspace}
	\newcommand{\fkT}{Fréchet-Kolmogorov\xspace}
	\newcommand{\dd}{\mathrm d}
	\newcommand{\lb}{\mathrm{lb}}
	\newcommand{\length}[1]{\left|#1\right|}
	\newcommand{\flength}[1]{\left|#1\right|}
	\newcommand{\sdzero}{\textup{\texttt{0}}}
	\newcommand{\sdone}{\textup{\texttt{1}}}
	\newcommand{\binalbe}{\ensuremath{\left\{\sdzero,\sdone\right\}}\xspace}
	\newcommand{\albe}{\Sigma}
	\newcommand{\demph}[1]{\textbf{#1}}
	\newcommand{\eval}{\mathrm{eval}}
	\newcommand{\convo}{\star}
	\newcommand{\cell}[2]{\left[#1\right]_{#2}}
	\newcommand{\LL}[1]{{L}^{#1}}
	\newcommand{\LLL}[2]{\LL{#1}\!\left(#2\right)}
	\newcommand{\Lp}{\LL{p}}
	\newcommand{\Lone}{\LL{1}}
	\newcommand{\xip}{\xi_{p}}
	\newcommand{\xic}{\xi_{C}}
	\newcommand{\xis}{\xi_s}
	\newcommand{\dom}{\mathrm{dom}}
	\newcommand{\grad}{\nabla}
	\newcommand{\p}{\mathcal{P}}
	\newcommand{\np}{\mathcal{NP}}
	\newcommand{\sharpp}{\#\mathcal{P}}
	\newcommand{\bin}[1]{\left\llbracket#1\right\rrbracket}
	\newcommand{\bind}[1]{\bin{#1}_d}
	\newcommand{\str}{\mathbf}
	\DeclareMathOperator{\inte}{int}
	\newcommand{\abs}[1]{\left|#1\right|}
	\newcommand{\norm}[1]{\left\|#1\right\|}
	\newcommand{\bnorm}[1]{\big\|#1\big\|}
	\DeclareMathOperator{\diam}{diam}
\begin{document}
	\title[Complexity of integrable functions]{Complexity theory for spaces\\ of integrable functions}

	\author[F.~Steinberg]{Florian Steinberg}	%required
	\address{Schloßgartenstraße 7, 64289 Darmstadt, Germany}	%required
	\email{steinberg@mathematik.tu-darmstadt.de}  %optional
	\thanks{The author was supported by the IRTG 1529 and EU IRSES 294962 COMPUTAL}	%optional

	%% required for running head on odd and even pages, use suitable
	%% abbreviations in case of long titles and many authors:

	%% mandatory lists of keywords and classifications:
	\keywords{second-order representation; computable analysis; second-order complexity theory; fast integration; Lp-modulus; integral modulus; Sobolev-space; \aaT; \fkT; metric entropy}
	\subjclass{
		• \textbf{Mathematics of computing$\sim$Integral calculus}   • Mathematics of computing$\sim$Differential calculus   • Theory of computation$\sim$Models of computation
 	}

	%\titlecomment{}
	%%%%%%%%%%%%%%%%%%%%%%%%%%%%%%%%%%%%%%%%%%%%%%%%%%%%%%%%%%%%%%%%%%%%%%%%%%%
	%first page

	\begin{abstract}
		\noindent This paper investigates second-order representations in the sense of Kawamura and Cook for spaces of integrable functions that regularly show up in analysis.
		It builds upon prior work about the space of continuous functions on the unit interval:
		Kawamura and Cook introduced a representation inducing the right complexity classes and proved that it is the weakest second-order representation such that evaluation is polynomial-time computable.

		The first part of this paper provides a similar representation for the space of integrable functions on a bounded subset of Euclidean space:
		The weakest representation rendering integration over boxes is polynomial-time computable.
		In contrast to the representation of continuous functions, however, this representation turns out to be discontinuous with respect to both the norm and the weak topology.

		The second part modifies the representation to be continuous and generalizes it to {\itshape Lp}-spaces.
		The arising representations are proven to be computably equivalent to the standard representations of these spaces as metric spaces and to still render integration polynomial-time computable.
		The family is extended to cover Sobolev spaces on the unit interval, where less basic operations like differentiation and some Sobolev embeddings are shown to be polynomial-time computable.

		Finally as a further justification quantitative versions of the \aaT and \fkT Theorems are presented and used to argue that these representations fulfill a minimality condition.
		To provide tight bounds for the \fkT Theorem, a form of exponential time computability of the norm of $\Lp$ is proven.
	\end{abstract}
	\maketitle

%!TEX root=Lp.tex

\section{Introduction}
	
	Classical computability and complexity theory are indispensable tools of theoretical computer science with numerous applications throughout computer science and discrete mathematics.

	In many cases, however, it is desirable to also be able to consider computations over continuous structures.
	Engineers want to use computers to solve partial differential equations that describe their problems.
	The take of mathematics on this is Numerics.
	Usually implementations of algorithms from numerics are done using floating point arithmetics.
	From the point of view of a logician this leads to a gap between between the mathematical part and the implementation:
	Proofs of convergence of an algorithm in numerics regularly rely on properties of real numbers that are not reflected by the floating point numbers.
	This can lead to uncontrollable error propagation in some situations (see the introduction of \cite{SchroederPhD} for examples).
	In other words: Floating point arithmetics are not the appropriate model of computation for Numerics.
	Furthermore, the finiteness of the set of machine numbers eludes a mathematically rigorous and comprehensible description.
	This inhibits the existence of proofs of incomputability and realistic and rigorous notions of resource consumption for computations using floating point arithmetics.
	
	These problems are well aware to numerical scientists and have been addressed by the development of interval arithmetics, multiple precision arithmetics, etc. which in turn are only partial solutions of the problem from our point of view.
	Computable analysis provides a model that fulfills our requirements by replacing a finite description of an object by a function that delivers on demand information about the object.
	Computations on objects can then be modeled by programs that are allowed subroutine calls to some function describing the object.
	Computable Analysis reaches back to \cite{turing1936computable,MR0072080,MR0086756}, the established mathematically rigorous description are Weihrauch's representations and his type two theory of effectivity (TTE) \cite{MR1795407}.
	Many of the results from computable analysis meet the intuition of numerical scientists.
	For example the empirical experience that testing for equality is not a good idea and should be replaced by an epsilon test is reflected in undecidability of equality in computable analysis but decidability of the fuzzy, multivalued version.

	A special case of complexity theory on continuous structures, namely a complexity theory for real functions, was developed by Ker-I Ko and Harvey Friedman \cite{MR666209}.
	With some technical efforts, their approach can be formulated within the TTE framework \cite{MR1795407}.
	However, complexity wise computations on real functions via the TTE are problematic.
	Thus, operators on real functions were investigated in a point wise manner instead.
	Important results were achieved:
	Ko and Friedman succeeded to prove that parametric maximization of continuous functions preserves polynomial-time computability point wise if and only if $\PP=\NP$ \cite{MR666209}.
	Whether or not the latter is the case is one of the millenium problems.
	Friedman related integration to the even stronger $\FP$ vs. $\sharpp$ problem \cite{MR748898}.
	These results remain true if the operators are restricted to smooth input, which deviates from the expectations of scientist from applied fields:
	Integration is regarded feasible, at least if smoothness assumptions are imposed, while maximization is considered more difficult.

	Of course it is also desirable to do uniform complexity theory on function spaces.
	Unfortunately the TTE turns out to yield a too restrictive model of fast computations on many structures.
	This can only partially be fixed without changing the setting \cite{MR1952428,MR2090390}.
	The problem is that the TTE only allows sequential access to the input.
	For instance for fast evaluation of a real function random access is necessary.
	An appropriate framework for complexity theory on more general spaces was more recently introduced by Kawamura and Cook \cite{MR2743298}.
	It relies on second-order complexity theory, that is complexity theory for functionals on Baire space (cf. \cite{MR0411947,MR1826285}), while maintaining the general idea of computable analysis to encode objects by integer functions.
	For this reason these encodings are called second-order representations.
	The framework is well accepted and investigated \cite{MR3239272,MR3219039,MR3259646}.
	Kawamura and Cook also specified a canonical second-order representation of the space of continuous functions on the unit interval, proving it to bring forth the same complexity classes and to be the minimal second-order representation to have polynomial-time evaluation up to polynomial-time reductions.

	While continuous functions are an important starting point, more general functions are needed for many applications.
	The emphasis on evaluation seems misplaced:
	It requires the functions considered to be continuous and polynomial-time evaluation does not suffice to carry out other important operations, like integration, effectively.
	Furthermore, the sets of functions should be considered as spaces:
	A representation induces a topology and this topology should fit the natural topology on the function space.
	If the representation is continuous and open, computability of an operator is a refinement of continuity of this operator.
	In this case there is hope that results from numerical analysis can be lifted to the computability level.
	This paper encounters discontinuous representations and discards them for this reason.

	Actually openness is sufficient but not necessary for the above hold.
	The appropriate notion from computable analysis would be the notion of admissibility of a representation.
	In the cases that turn up in this paper, admissibility is the same as equivalence to a Cauchy representation.
	To be able to hope for results from analysis to lead to algorithms that use bounded resources, a complexity theoretical equivalent of admissibility would be needed.
	Such a notion, however, does not yet exists.
	This paper very briefly mentions a representation that is the antithesis of having such a property at the end of Section \ref{sec:the singular representation}.
	The representations considered in the later chapters, in contrast, are probable candidates for having such a property.
	
	\subsection*{Content and organization of this paper}
		This paper specifies several second-order representations of spaces of integrable functions that appear in practice.
		All these representations provide oracle access to approximations of the integrals of over dyadic intervals or boxes.
		They differ, however, in the length a name of a function is given.
		On the one hand, this can be understood to modify the density of information and the time allotted for a computation on a function,
		on the other hand the length provides additional information and can be understood as \lq enrichment of data\rq \cite{MR717246}.

		Before we talk about the structure of the paper let us informally describe the representation for $\Lp$ in some more detail.
		The standard representation of the continuous functions on the unit interval establishes the following model of computation:
		A \lq program\rq\ computing a continuous function $f$ takes a rational number $r$ and a rational precision requirement $\varepsilon$ and returns an $\varepsilon$-approximation to $f(r)$ as well as a rational number $\delta$ such that all approximations stay valid whenever the input changed by less than $\delta$ (delta may only depend on $\varepsilon$, not on $r$).
		This paper claims that for functions from $\Lp$ the following is the right model of computation:
		A \lq program\rq\ computes an $\Lp$-function $f$ if it takes a rational box $[r,s]$ and a rational precision requirement $\varepsilon$ and returns an $\varepsilon$-approximation to $\int_r^sf\dd\lambda$ as well as a rational $\delta$ such that whenever the function is shifted by less than $\delta$ in the argument, it does not change more than $\varepsilon$ in $\Lp$-norm.
		The more straight forward version that $\delta$ is such that whenever the box only changes a little, the integrals do not change to much is also considered but disregarded as a discontinuous representation.

		In the later chapters, the focus shifts to justification and general recipes for how to construct useful representations.
		Interestingly, classification results for compact sets are of importance for these constructions.
		Quantitative versions of such results for concrete spaces can be connected to optimal running times of the metric with respect to any second-order representation.
		The kind of classification results that turn up have been investigated from different points of views:
		Approximation theory asked similar questions and comparable theorems turned up when constructive mathematicians tried to make analysis constructive \cite{MR0112032,MR1262128,MR804042}.

		This paper is structured as follows:
		The remainder of the first section lists some of the facts from computable analysis and real complexity theory that are regularly used throughout the paper.
		In particular it introduces Cauchy representations and the standard representations of the continuous functions on the unit interval and recollects some of the properties.
		This is mostly for easy reference and to fix a notation where more than one is common.
		
		\Cref{sec:the singular representation} introduces the singular representation:
		The weakest representation of the integrable functions that allows for the computation of integrals over boxes in polynomial-time.
		First in one dimension (Definitions~\ref{def:singularity modulus} and \ref{def:singular representation}), and then in full generality for arbitrary dimensions (Definition~\ref{def:singular representation d}).
		Theorem~\ref{resu:minimality of the singular representation d} proves that this representation is indeed minimal with respect to polynomial-time reduction.
		The singular representation is proven to be discontinuous in Theorem~\ref{resu:discontinuity norm d}.

		In \Cref{sec:a second-order representation of lp} a family of representations of the spaces $\Lp(\Omega)$ is defined and investigated.
		First, the $\Lp$-modulus, a replacement for the modulus of continuity, is discussed (Defintion~\ref{def:lp-moduli}).
		Then a representation of $\Lp(\Omega)$ is defined (Definition~\ref{d:xip}) and Theorem~\ref{resu:equivalence to the standard representation} proves it to be computably equivalent to the Cauchy representation.
		
		A straightforward extension to the Sobolev spaces $W^{m,p}([0,1])$ (Definition~\ref{d:ximp}) is presented and investigated in \Cref{sec:sobolev spaces}.
		The inclusions of the Sobolev spaces into the continuous and into the integrable functions are shown to be polynomial-time computable in Theorems~\ref{resu:sobolev functions as lp functions} and \ref{resu:Sobolev functions as continuous functions} for one derivative, and in Theorems~\ref{resu:higher sobolev functions as lp functions} and \ref{resu:higher sobolev functions as continuous functions} for higher derivatives.
		Corollary~\ref{resu:differentation} deduces that differentiation is polynomial-time computable.

		Section~\ref{sec:motivationg the use of the lp-modulus} explores minimality properties of the representations at hand:
		It introduces the concept of metric entropy (Definition~\ref{def:metric entropy and size}) and in Theorem~\ref{resu:metric entropy and complexity} proves a result that connects the metric entropy of a compact space to the minimal running time of the metric.
		This theorem is used as motivation to examine quantitative versions of theorems classifying the compact subsets of function spaces.
		Two results of this form are presented: A version of the \aaT Theorem~\ref{resu:aaT} which is already known from approximation theory and a version of the \fkT Theorem~\ref{resu:fkT} that, to the knowledge of the author, has not been stated in this generality before.
		To provide a tight upper bound for the latter (Theorem~\ref{resu:improved upper fkT}), a slightly modified representation is introduced (Definition~\ref{d:xipd}), and a strong form of exponential time computability of the norm on $\Lp$ with respect to this representation is proven in Theorem~\ref{resu:exponential time computability of the norm}.

	\subsubsection*{Sources and further readings}
		For the understanding of this paper a solid basic knowledge of computability and complexity theory is required.
		One of many excellent sources for read-up is \cite{MR2500087}.
		The topology needed and basics about metric spaces can be found for instance in \cite{munkres2000topology}.
		For the understanding of some results measure theory is necessary (for instance \cite{MR1787146}) and for the latter chapters it is beneficial to know basics about $\Lp$- and Sobolev-spaces \cite{MR2759829}.
		Furthermore, basics of real computability theory, in particular Weihrauch's type two theory of effectivity (TTE), are beneficial for understanding.
		All that is needed and more is described in detail in \cite{MR1795407}.
		For additional material on second-order complexity theory see for example \cite{MR0411947} and \cite{MR1374053}.
		For further information about the framework of Kawamura and Cook and how to apply this to computable analysis see \cite{MR2743298}.

		The results presented here are from the authors PhD-Thesis \cite{SteinbergPhD} and some were already mentioned in \cite{CIE2016}.

	\subsubsection*{Basic notational conventions}

		Fix the finite alphabet $\albe:=\binalbe$.
		The following subsets of the set $\albe^*$ of finite strings of zeros and ones are of relevance:
		\begin{description}
			\item[$\mathbf{\NN=\{\sdone,\sdone\sdzero,\sdone\sdone,\ldots\}}$]the set of strictly positive \demph{integers in binary} representation.
			\item[$\mathbf{\omega:= \{\varepsilon,\sdone,\sdone\sdone,\ldots\}}$]the set of positive \demph{integers in unary}, where $\varepsilon$ denotes the empty string interpreted as zero.
		\end{description}
		We denote elements of $\albe^*$ by $\str a,\str b, \ldots$ and elements of $\NN$ and $\omega$ by $n,m,\ldots$.
		If this leads to ambiguity we use $\sdone^n$ with $n\in\NN$ for the elements of $\omega$.
		Let $\length\cdot: \albe^*\to \omega$ denote the \demph{length function} replacing all $\sdzero$s by $\sdone$.
		To compute on $\NN$ and $\omega$ we use the following \demph{encodings} (i.e. \demph{notations} in the sense of Weihrauch \cite{MR1795407}):
		For $\NN$ the function $\nu_{\NN}:\Sigma^*\to \NN$ that eliminates leading zeros.
		For $\omega$ the function $\nu_{\omega}(\str a) := \length{\nu_{\NN}(\str a)}$.
	
		Computations on products are handled via pairing functions.
		Fix some \demph{pairing function} $\langle\cdot,\cdot\rangle:\albe^*\times \albe^*\to \albe^*$ (that is: Some bijective, polynomial-time computable function with polynomial-time computable projections).
		Furthermore, the pairing function is required to be monotone in both arguments, i.e. whenever the length of one of the input strings is increased, the length of the output string will not decrease.
		The standard pairing functions fulfill all of these requirements.
		The corresponding \demph{pairing of string functions} is defined as follows:
		\[ \langle \varphi,\psi\rangle(\str a) := \langle\varphi(\str a),\psi(\str a)\rangle. \]
		This function is bijective.

		The set of \demph{real numbers} is denoted by $\RR$.
		For $x\in \RR$ let $\lfloor x \rfloor$ resp.\ $\lceil x \rceil$ denote the largest integer smaller or equal resp.\ the least integer larger or equal to $x$.
		The binary logarithm of a number $x$ is denoted by $\lb(x)$.
		The following subsets of the real numbers are of importance to this work:
		\begin{description}
			\item[$\ZZ$] the set of \demph{integers}.
			\item[$\DD$] the set of numbers that can be written as $\frac r{2^n}$ with $r,n\in\NN$ called  \demph{dyadic numbers}.
			%\item[${[0,1]}$] the unit interval.
		\end{description}
		These sets are countable and can be handled by discrete computability and complexity theory via encodings.
		For the set $\ZZ$ use the encoding $\nu_{\ZZ}(\sdone\str a) := \nu_{\NN}(\str a)$ resp. $\nu_{\ZZ}(\sdzero\str a):=-\nu_{\NN}(\str a)$.
		For $\DD$ use the encoding $\bin{\str c} :=\frac{\nu_{\ZZ}(\str a)}{2^n}$ if $\str c= a_1\sdone a_2\sdone\ldots \sdone a_m\sdzero a_{m+1}\sdone \ldots \sdone a_{n+m}\sdzero\ldots\sdzero$, and $a_2=\sdone$, i.e. the binary expansion with comma.
		This encoding is chosen such that it allows arbitrary long codes while approximations to the number can always be read from a short beginning segment.

		For any dimension $d\in\NN$ define an encoding of $\DD^d$ by $\bind{\langle\str a_1,\langle \str a_2, \ldots,\langle\str a_{d-1},\str a_d\rangle\ldots\rangle\rangle} := (\bin{\str a_1}, \ldots, \bin{\str a_d})$. Since the dimension $d$ is usually fixed, it is often omitted.
		\demph{Dyadic boxes}, i.e. boxes with dyadic vertice coordinates and edges parallel to the axes, are denoted as
		\[ [\str a,\str b] := \left[\bind{\str a}, \bind{\str b}\right] = \{x\in \RR^d\mid \bind{\str a}\leq x\leq \bind{\str b}\}, \]
		where the inequalities have to be understood component wise.

		For some $\Omega\subseteq \RR^d$ denote the set of continuous functions from $\Omega$ to $\RR$ by $\C(\Omega)$.
	\subsection{Representations}

		Encodings allow computations on countable structures using discrete computability theory.
		Many of the spaces one would like to compute over, however, are uncountable.
		For instance the real numbers, or, to mention a compact one, the unit interval.
		Computable analysis overcomes this difficulty by encoding elements by infinite objects (infinite binary strings or string functions) instead of strings \cite{MR1795407}.
		The \demph{Baire space} is the space of all string functions $(\albe^*)^{\albe^*}$ equipped with the product topology and denoted by $\B$.

		\begin{defi}
			A \demph{representation} of a space $X$ is a partial surjective mapping $\xi:\subseteq\B\to X$.
			The elements of $\xi^{-1}(x)$ are called the \demph{names} of $x$.
		\end{defi}

		A space with a fixed representation is called a represented space.
		Like for topological spaces the representation is only mentioned explicitly if necessary to avoid ambiguities.
		An element of a represented space is called \demph{computable} if it has a computable name.
		It is said to lie within a complexity class if it has a name from that complexity class.

		On one hand, any represented space carries a natural topology: The final topology of the representation.
		On the other hand, one often looks for a representation suitable for a topological space.
		It is reasonable to require such a representation to induce the topology the space is equipped with.
		For this, \demph{continuity} is necessary but not sufficient.
		Continuity together with \demph{openness} is sufficient but not necessary.
		A related concept from computable analysis is \demph{admissibility} which for all representations this paper is concerned with is the same as continuous equivalence to Cauchy representations (to be introduced below) \cite{MR1795407}.
		It implies continuity but not openness (see \cite{SchroederPhD,MR1923914} for admissibility and \cite{MR1923900} for its connection to openness).

		Recall from the introduction that $\NN\subseteq \Sigma^*$.
		\begin{defi}\label{def:metric spaces}
			Let $\M:=(M,d,(x_m)_{m\in\NN})$ be a triple such that $(M,d)$ is a complete separable metric space and $(x_m)_{m\in\NN}$ is a dense sequence.
			Define the \demph{Cauchy representation} $\xi_{\M}$ of $M$: A string function $\varphi\in\B$ is a $\xi_{\M}$-name of $x\in M$ if and only if
			\[ \forall n\in \NN : d(x,x_{\varphi(n)}) < 2^{-n}. \]
		\end{defi}
		Cauchy representations are continuous and open with respect to the metric topology.

		Recall the pairing function $\langle\cdot,\cdot\rangle:\B\times\B\to\B$ on string functions from the introduction.
		\begin{defi}\label{def:product representation}
			Let $\xi_X$ and $\xi_Y$ be representations of spaces $X$ and $Y$.
			Define the \demph{product representation} $\xi_{X\times Y}$ of the product $X\times Y$ by
			\[ \xi_{X\times Y}(\langle\varphi,\psi\rangle) := (\xi_X(\varphi),\xi_Y(\psi)) \text{, whenever $\varphi\in\dom(\xi_X)$ and $\psi\in\dom(\xi_Y)$}. \]
		\end{defi}
		This construction is used self-evidently throughout the paper.
		
		\subsection{Second-order complexity theory}\label{sec:sub:second-order complexity theory}

		Computing functions between represented spaces is done by operating on names and computing functions on Baire space:
		\begin{defi}\label{def:realizer}
			Let $\xi_X$ and $\xi_Y$ representations of spaces $X$ and $Y$.
			A partial function $F:\subseteq\B\to\B$ is called a \demph{realizer} of a function $f:X\to Y$, if
			\[ \varphi\in\xi_X^{-1}(x) \quad \Rightarrow \quad F(\varphi)\in \xi_Y^{-1}(f(x)). \]
		\end{defi}

		That is: A realizer translates names of $x$ into names of $f(x)$.
		$F$ being a realizer of a function $f$ can be visualized by the diagram in \Cref{fig:diagram}.
		However, the domain of $F$ is allowed to be bigger than that of $\xi_X$.
		Therefore, $F$ being a realizer of $f$ does not translate to the diagram being commutative in the usual way.

		\begin{wrapfigure}{r}{2.5cm}
			\xymatrix{
				\B \ar[r]^{F}\ar[d]_{\xi_X} & \B\ar[d]^{\xi_Y} \\
				X \ar[r]_f & Y,
			}
			\vspace{-.2cm}
			\caption{}\label{fig:diagram}
		\end{wrapfigure}

		On the Baire space there exists a well-established computability theory originating from \cite{MR0051790}, see \cite{MR2143877} for an overview.
		A functional $F:\subseteq\B\to\B$ is called computable if there is an oracle Turing machine $M^?$ such that $M^\varphi(\str a) = F(\varphi)(\str a)$ for all string functions $\varphi$ from the domain of $F$.
		Or spelled out: The computation of $M^?$ with oracle $\varphi$ and on input $\str a$ halts with the string $F(\varphi)(\str a)$ written on the output tape.
		A function between represented spaces is called \demph{computable} if it has a computable realizer.

		Complexity theory for functionals is called \demph{second-order complexity theory}.
		It was originally introduced by Mehlhorn \cite{MR0411947}.
		This paper uses a characterization via resource bounded oracle Turing machines due to Kapron and Cook \cite{MR1374053} as definition.
		The convention for time consumption of oracle queries is the following: When a query is asked, the answer is written on the answer tape within one time step, only reading it requires further time.
		Such a machine is granted time depending on the size of the input.
		The string functions are considered the input.

		\begin{defi}
			The \demph{size} or \demph{length} $\flength{\varphi}:\omega\to\omega$ of a string function $\varphi\in\B$ is defined by
			\[ \flength{\varphi}(\sdone^n) := \max\{\length{\varphi(m)}\mid \length m \leq \sdone^n\}. \]
		\end{defi}
		For instance: Each polynomial-time computable string function is of polynomial size.

		A running time is a mapping that assigns to sizes of the inputs an allowed number of steps.
		Therefore, it is of type $\omega^\omega\times \omega \to \omega$.
		The subclass of running times that are considered polynomial, namely second-order polynomials, are recursively defined as follows:
		\begin{itemize}
			\item For $p$ a positive integer polynomial $(l,n)\mapsto p(n)$ is a second-order polynomial.
			\item If $P$ and $Q$ are second-order polynomials, so are $P+Q$ and $P\cdot Q$.
			\item If $P$ is a second-order polynomial, then so is $(l,n)\mapsto l(P(l,n))$.
		\end{itemize}
		An example for a second-order polynomial is the mapping $(l,n)\mapsto l(l(n^2+5)+l(l(n)^2))$.
		Second order polynomials have turned up independently from second-order complexity theory (compare for instance \cite{MR1462200}).

		\begin{defi}\label{def:polynomial-time computable}
			A functional $F:\subseteq\B\to\B$ is \demph{polynomial-time computable}, if there is an oracle Turing machine $M^?$ and a second-order polynomial $P$ such that for all string functions $\varphi\in\dom(F)$ and strings $\str a$ the computation of $M^{\varphi}(\str a)$ terminates within at most $P(\flength\varphi,\length{\str a})$ steps.
		\end{defi}
		A function $f:X\to Y$ between represented spaces $X$ and $Y$ is called \demph{polynomial-time computable} if it has a polynomial-time computable realizer (compare Definition~\ref{def:realizer}).

		An important special case is the following:
		If $\xi$ and $\xi'$ are representations of  the same space $X$, then $\xi$ is called \demph{polynomial-time reducible} to $\xi'$ if the identity from $(X,\xi')$ to $(X,\xi)$ is polynomial-time computable.
		A realizer of the identity is called a \demph{translation} and we often say \demph{$\xi'$ is translatable to $\xi$} to express that $\xi$ is reducible to $\xi'$, as this avoids confusion if the directions are important.
		The representations are \demph{polynomial-time equivalent} if polynomial-time computable translations in both directions exist.
		If the translations are merely computable resp.\ continuous, one speaks of computable resp.\ continuous reduction and equivalence.

	\subsection{Second-order representations}

		The length $\flength{\varphi}(\sdone^n) =\max\{\length{\varphi(m)}\mid \length m \leq \sdone^n\}$ of a string function cannot be computed from the string function in polynomial-time:
		To find the maximum in the definition $\varphi$ has to be queried an exponential number of times.
		For many applications, polynomial-time computability of the length of names is desirable.
		\begin{defi}\label{def:length monotone}
			A string function $\varphi$ is called \demph{length-monotone} if for all strings $\str a$ and $\str b$
			\[ \length{\str a} \leq \length{\str b} \quad\Rightarrow\quad \length{\varphi(\str a)} \leq \length{\varphi(\str b)}.\]
			The set of length-monotone string functions is denoted by $\Sigma^{**}$.
		\end{defi}
		For a length-monotone string function it holds that $\flength{\varphi}(\length{\str a}) = \length{\varphi(\str a)}$,
		thus the length function restricted to $\Sigma^{**}$ is polynomial-time computable.
		\begin{defi}\label{def:second-order representation}
			A \demph{second-order representation} is a representation whose domain is contained in $\Sigma^{**}$.
		\end{defi}
		Equivalently: A second-order representation $\xi$ of a space $X$ is a partial surjective mapping $\xi:\Sigma^{**}\to X$ from the length-monotone string functions to the space.
		The prefix \lq second-order\rq\ is for applicability of second-order complexity theory, and does not indicate the use of higher order objects than for regular representations.

		The restriction of a representation to the length-monotone functions is usually still surjective and thus a second-order representation.
		All representations this paper is concerned with are second-order representations.
		For brevity \lq second-order\rq\ is sometimes omitted.

		Recall that we fixed an encoding $\bin{\cdot}$ of the dyadic numbers in the introduction.
		\begin{exa}[The standard representation of reals]\label{ex:standard representation of the reals}
			Define a second-order representation $\xi_{\RR}$ of $\RR$ by letting a length-monotone string function $\varphi$ be a name of $x$ if for all $n\in\NN$
			\[ \abs{\bin{\varphi(\sdone^n)} - x} < 2^{-n}. \]
		\end{exa}
		A proof that this second-order representation induces the established notions of computability and polynomial-time computability of reals and real functions (i.e. the notions from \cite{MR1137517} or \cite{MR1795407}) can be found in \cite{MR2275414}.
		It is computably equivalent to the Cauchy representation of the reals from Definition~\ref{def:metric spaces} if the standard enumeration of the dyadic numbers is chosen as dense sequence.
		Polynomial-time equivalence fails since the input is encoded in unary, not in binary.

		The pairing functions are carefully chosen such that the second-order representations are closed under the products from Definition~\ref{def:product representation}.
		All the encodings from the introduction assign arbitrary big codes to each element.
		Since this paper only considers representations whose names return codes from one of these, an arbitrary name can always be padded to a length monotone one.
		Therefore, all representations this paper introduces restrict to $\Sigma^{**}$ and are introduced as second-order representations right away.
		
	\subsection[The standard representation of continuous functions]{The standard representation of \texorpdfstring{$C([0,1])$}{continuous functions}}

		Denote the supremum norm on $\RR^d$ by $\abs\cdot_\infty$ and fix some bounded $\Omega\subseteq\RR^d$.
		\begin{defi}\label{def:modulus of continuity}
			A function $\mu:\omega\to\omega$ is called a \demph{modulus of continuity} of $f\in C(\Omega)$ if for all $x,y\in\Omega$ and $n\in\omega$
			\[ \abs{x-y}_\infty\leq 2^{-\mu(n)} \quad\Rightarrow\quad |f(x)-f(y)| < 2^{-n} \]
			and  $\mu(n)\neq 0 \Rightarrow \mu(n+1)>\mu(n)$, that is: $\mu$ is strictly increasing whenever non-zero.
		\end{defi}
		This modulus should be called modulus of uniform continuity to distinguish it from a point-wise modulus of continuity.
		However, point-wise moduli are not mentioned in this work and we omit the \lq uniform\rq\ for brevity.

		Any continuous function on a compact set has a modulus of continuity.
		On a connected set a function is Hölder continuous if and only if it has a linear modulus of continuity, and Lipschitz continuous if and only if it has a modulus of the form $\mu(n) = n+C$.
		If the set is convex, $2^C$ is a Lipschitz constant.

		\begin{rem}
			The definition slightly differs from the most common one (compare \cite{MR1137517} or \cite{MR1795407}):
			Usually, there is no growth condition on the modulus.
			However, whenever $\mu$ is a modulus of continuity, $\Omega$ is convex and $\mu(n)-1$ is not negative, it is a valid value for the modulus of continuity on $n-1$.
			Thus, the condition of being strictly increasing when non-zero is a reasonable one and in particular fulfilled by the least modulus of continuity.
			Its importance becomes apparent in the proof of Theorem~\ref{resu:fkT}.
		\end{rem}

		Recall that $\DD$ denotes the set of dyadic numbers and that computations on $\DD$ are carried out via the encoding $\bin\cdot$ fixed in the introduction.
		The following result is the starting point of many generalizations:		
		\begin{thm}[\cite{MR1137517}]\label{resu:characterization of polynomial-time computable functions}
			A function $f:[0,1] \to \RR$ is polynomial-time computable if and only if both of the following are fulfilled:
			\begin{itemize}
				\item There is a polynomial-time computable function $\varphi:\DD \times \omega\to \DD$ such that for any $r\in[0,1]\cap \DD$
				\[ \abs{\varphi(r,\sdone^n) -f(r)} < 2^{-n}. \]
				\item The function allows a polynomial modulus of continuity.
			\end{itemize}
		\end{thm}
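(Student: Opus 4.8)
The plan is to prove the two implications separately. Throughout I will use that a $\xi_{\RR}$-name $\psi$ of $x$ satisfies $\abs{\bin{\psi(\sdone^k)} - x} < 2^{-k}$ for every $k$, and that polynomial-time computability of $f$ means there is an oracle machine $M^?$ and a second-order polynomial $P$ such that $M^\psi(\sdone^n)$ outputs a code of a dyadic within $2^{-n}$ of $f(x)$ within $P(\flength\psi, n)$ steps. The easy direction is to build a realizer from the two conditions; the hard direction is to extract the two conditions from a realizer, the genuine obstacle being the construction of the modulus.

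For the direction from the two conditions to polynomial-time computability I would construct a realizer directly. Given a name $\psi$ of $x$ and a precision request $\sdone^n$, query $\psi$ once at $\sdone^{\mu(n+1)}$ to read off $r := \bin{\psi(\sdone^{\mu(n+1)})}$ with $\abs{r - x} < 2^{-\mu(n+1)}$ (clamping $r$ into $[0,1]$, which can only decrease its distance to $x$), and output $\varphi(r,\sdone^{n+1})$. The triangle inequality together with the modulus property gives
\[ \abs{\varphi(r,\sdone^{n+1}) - f(x)} \le \abs{\varphi(r,\sdone^{n+1}) - f(r)} + \abs{f(r) - f(x)} < 2^{-(n+1)} + 2^{-(n+1)} = 2^{-n}, \]
so the realizer is correct. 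For the running time, the single query is at a string of unary length $\mu(n+1)+1$, its answer has bit-length at most $\flength\psi(\sdone^{\mu(n+1)})$, and evaluating the polynomial-time $\varphi$ on it costs time polynomial in that bit-length and in $n$; since $\mu$ is an ordinary polynomial, composing these bounds yields a second-order polynomial in $\flength\psi$ and $n$.

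For the converse, extracting the dyadic approximation function is routine. To each dyadic $r\in[0,1]$ associate the length-monotone name $\psi_r$ that returns a fixed encoding of $r$ at every argument, padded so that $\flength{\psi_r}(\sdone^k) = \max\{c_r,k\}$ with $c_r$ the bit-length of the code of $r$. Running $M^{\psi_r}(\sdone^n)$ then produces a code of a dyadic within $2^{-n}$ of $f(r)$, and since $\flength{\psi_r}$ grows only linearly, substituting it into the second-order polynomial $P$ collapses $P$ to an ordinary polynomial in $c_r$ and $n$. Hence the map $(r,\sdone^n)\mapsto M^{\psi_r}(\sdone^n)$ is a polynomial-time computable $\varphi:\DD\times\omega\to\DD$ of the required form.

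The crux is producing a polynomial modulus of continuity, and this is where I expect the main difficulty. I would first restrict to the canonical names returning best dyadic approximations, whose sizes are uniformly dominated by some $L$ with $L(\sdone^k)=k+O(1)$; substituting $L$ into $P$ gives an ordinary polynomial $p$, so computing $M^\psi(\sdone^{n+1})$ takes at most $m:=p(n+1)$ steps and, writing at most one query symbol per step, only ever queries strings of length at most $m$. The key geometric observation is that for $x,y\in[0,1]$ with $\abs{x-y}\le 2^{-(m+1)}$ one can construct a name $\psi$ of $x$ and a name $\psi'$ of $y$ that agree on every string of length at most $m$: at each $\sdone^k$ with $k\le m$ pick one dyadic $d_k$ within $2^{-(k+2)}$ of $(x+y)/2$, which by the triangle inequality satisfies $\abs{d_k-x}<2^{-k}$ and $\abs{d_k-y}<2^{-k}$, and have both names use identical (padded) values on all remaining strings of length at most $m$, diverging to genuine approximations of $x$ resp.\ $y$ only beyond length $m$. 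Because the two names coincide on everything the machine can reach within its time bound, the runs $M^\psi(\sdone^{n+1})$ and $M^{\psi'}(\sdone^{n+1})$ are identical and output one dyadic $d$ with $\abs{\bin d - f(x)}<2^{-(n+1)}$ and $\abs{\bin d - f(y)}<2^{-(n+1)}$ simultaneously, whence $\abs{f(x)-f(y)}<2^{-n}$. Setting $\mu(n):=p(n+1)+1$ (and enforcing the growth clause by passing to a pointwise maximum with $n$) yields a polynomial modulus. The obstacle requiring the most care is precisely this simultaneous-name construction: one must keep $\psi,\psi'$ length-monotone and of size dominated by $L$ so that the bound $p$ still applies, which forces the $d_k$ to be chosen of bit-length $O(k)$.
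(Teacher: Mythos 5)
The paper gives no proof of this theorem --- it is quoted from \cite{MR1137517} --- and your argument is the standard proof of that result: the forward direction by a single oracle query at precision $\mu(n+1)$ followed by a triangle inequality, and the converse by restricting to names of linear length so that the second-order time bound collapses to an ordinary polynomial $p$, then running the adversary argument with two names of nearby points $x,y$ built from approximations of $(x+y)/2$ that agree on every string of length at most $p(n+1)$, forcing identical runs and hence $\abs{f(x)-f(y)}<2^{-n}$. The only slip is cosmetic: the pointwise maximum of $p(n+1)+1$ with $n$ need not satisfy the clause $\mu(n)\neq 0\Rightarrow\mu(n+1)>\mu(n)$ of Definition~\ref{def:modulus of continuity} when the polynomial part is locally constant; take $\mu(n):=n+p(n+1)+1$ with $p$ monotone instead, which is still a valid (larger) modulus.
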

		
		This theorem can be used to define complexity of functions between arbitrary effective metric spaces \cite{MR1795248}.
		Another application is to show that the following definition leads to the usual set of polynomial-time computable functions on the unit interval.
		Recall that the length $\flength{\varphi}$ of a length-monotone string function is given by $\flength{\varphi}(\length{\str a}) = \length{\varphi(\str a)}$.
		\begin{defi}\label{def:standard rep}
			Define the \demph{standard representation $\xic$ of $C([0,1])$}:
			A string function $\varphi\in\Sigma^{**}$ is a $\xic$-name of $f$ if for all strings $\str a$ with $\bin{\str a}\in[0,1]$ and all $n\in\NN$
			\[ |\bin{\varphi(\langle \str a,\sdone^n\rangle)}-f(\bin{\str a})| < 2^{-n} \]
			 and $\flength{\varphi}$ is a modulus of continuity of $f$.
		\end{defi}

		$\C([0,1])$ is a metric space and the Cauchy representation with respect to the standard enumeration of the rational polynomials as dense sequence (cf. Definition~\ref{def:metric spaces}) induces the metric topology.
		The following is closely connected to the well-known computable Weierstraß approximation theorem (compare for instance \cite{MR1137517}):
		\begin{thm}
			$\xic$ is computably equivalent to the Cauchy representation of $\C([0,1])$.
		\end{thm}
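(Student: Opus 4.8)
The plan is to exhibit computable translations in both directions between $\xic$ and the Cauchy representation $\xi_{\M}$ of $\C([0,1])$, where $\M$ takes the standard enumeration $(p_m)_{m\in\NN}$ of rational polynomials as its dense sequence. Recall that a $\xi_{\M}$-name of $f$ returns on input $n$ an index $\psi(n)$ with $\norm{f-p_{\psi(n)}}_\infty<2^{-n}$, whereas a $\xic$-name provides dyadic approximations to the values of $f$ at dyadic points together with a modulus of continuity of $f$ encoded in its length. The two directions have rather different flavours: translating $\xic$ to $\xi_{\M}$ is the effective Weierstraß approximation theorem, while the reverse translation is mostly a matter of bookkeeping about lengths.

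For the translation of a $\xic$-name $\varphi$ into a $\xi_{\M}$-name I would use that $\mu:=\flength\varphi$ is a modulus of continuity of $f$ and can be evaluated in the oracle model. On input $n$, I work with the Bernstein polynomial $B_N f(x)=\sum_{k=0}^N f(k/N)\binom Nk x^k(1-x)^{N-k}$. The classical quantitative Bernstein estimate bounds $\norm{B_N f-f}_\infty$ by a constant multiple of the oscillation of $f$ over intervals of length $N^{-1/2}$, so choosing the degree $N=2^m$ with $m\geq 2\mu(n+2)$ forces $\norm{B_N f-f}_\infty<2^{-n-1}$; since $N$ is a power of two, the sample points $k/N$ are dyadic. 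Querying $\varphi$ at $\langle\str a_k,\sdone^{n'}\rangle$ with $\bin{\str a_k}=k/N$ and a suitable $n'$ yields dyadic values $d_k$ with $\abs{d_k-f(k/N)}<2^{-n'}$, and because the Bernstein basis is nonnegative and sums to one, replacing $f(k/N)$ by $d_k$ perturbs the polynomial by at most $2^{-n'}$ in sup-norm. Thus the resulting rational polynomial lies within $2^{-n}$ of $f$, and I search the enumeration $(p_m)$ for one of its indices and output it.

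For the reverse translation of a $\xi_{\M}$-name $\psi$ into a $\xic$-name, the value condition is easy: on input $\langle\str a,\sdone^n\rangle$ with $\bin{\str a}\in[0,1]$ the polynomial $p_{\psi(n+1)}$ satisfies $\norm{f-p_{\psi(n+1)}}_\infty<2^{-n-1}$, so evaluating it at $\bin{\str a}$ to precision $2^{-n-1}$ gives a dyadic $2^{-n}$-approximation of $f(\bin{\str a})$. For the modulus I would exploit that a rational polynomial of known degree $D$ and coefficient bound $A$ has $\norm{p'}_\infty\leq D^2A$ on $[0,1]$, hence an explicitly computable Lipschitz constant and, by the remark following Definition~\ref{def:modulus of continuity}, a computable modulus $\mu_p$ of the form $\mu_p(n)=n+C$. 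Combining $\norm{f-p_{\psi(n+2)}}_\infty<2^{-n-2}$ with such a modulus of $p_{\psi(n+2)}$ through the triangle inequality produces a modulus $\mu$ of $f$ given by $\mu(n):=\mu_{p_{\psi(n+2)}}(n+1)$, which is computable from $\psi$; enforcing strict increase where nonzero keeps it a legal modulus.

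The main obstacle is organizational and lives entirely in this second direction: the defining conditions of $\xic$ couple the \emph{bit content} of a name to the values of $f$ and the \emph{length} of a name to the modulus, so the constructed $\varphi$ must simultaneously decode to the dyadic approximations above and satisfy $\flength\varphi=\mu$ while staying length-monotone. Since the pairing function is monotone and the encoding $\bin\cdot$ tolerates arbitrarily long codes from which an approximation is read off a short initial segment, I can pad each output $\varphi(\langle\str a,\sdone^n\rangle)$ to any prescribed length without changing the decoded dyadic. The care needed is to make the output length depend on $n$ in a way that reproduces $\mu$ uniformly in $\str a$ and is monotone in the combined input length; this is exactly the point at which the growth condition built into the definition of a modulus of continuity, and the monotonicity of the pairing, are used.
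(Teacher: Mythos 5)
Your proposal is correct and is essentially the argument the paper has in mind: the paper gives no proof of this theorem, deferring to the ``well-known computable Weierstra\ss{} approximation theorem'' of \cite{MR1137517}, and your two translations (Bernstein polynomials sampled at dyadic points for $\xic\to\xi_{\M}$, and polynomial evaluation plus a derivative bound yielding a Lipschitz-type modulus for $\xi_{\M}\to\xic$) are precisely the standard way that citation is filled in. The bookkeeping you flag about padding output lengths so that $\flength\varphi$ realizes the computed modulus while remaining length-monotone is indeed the only delicate point, and your handling of it (using monotonicity of the pairing and the fact that any pointwise-larger function satisfying the growth condition is still a modulus) is sound.
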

		In particular, $\xic$ is a continuous mapping.
		The strict inequality in the definition of the modulus of continuity guarantees that $\xic$ is an open mapping.
		The standard representation has been characterized as the weakest representation that permits polynomial-time evaluation up to polynomial-time equivalence.
		Recall the evaluation operator given by
		\[ \eval: C([0,1]) \times [0,1] \to \RR, \quad (f,x)\mapsto f(x). \]
		Here and in the following, both $[0,1]$ and $\RR$ are equipped with the standard representation $\xi_{\RR}$ of the real numbers from Example~\ref{ex:standard representation of the reals} and its range-restriction.
		For a subset $F\subseteq C([0,1])$ the evaluation operator on $F$ is the restriction of the above operator to $F\times [0,1]$.

	 	\begin{thm}[minimality]\label{resu:minimality of the standard representation}
	 		For a second-order representation $\xi$ of a subset $F\subseteq C([0,1])$ the following are equivalent:
	 		\begin{itemize}
	 			\item $\xi$ renders the evaluation operator on $F$ polynomial-time computable.
	 			\item $\xi$ is polynomial-time translatable to the range-restriction of $\xic$ to $F$.
	 		\end{itemize}
	 	\end{thm}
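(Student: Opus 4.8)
The plan is to prove the two implications separately, the second being the substantial one. For the direction from translatability to polynomial-time evaluation, I would first record that evaluation is polynomial-time computable with respect to $\xic$ itself. Given a $\xic$-name $\varphi$ of $f$ and a $\xi_{\RR}$-name $\rho$ of some $x\in[0,1]$, to output a $2^{-n}$-approximation to $f(x)$ one reads off $m:=\flength{\varphi}(n+1)=\length{\varphi(\sdone^{n+1})}$ (a single query, as $\varphi$ is length-monotone), extracts from $\rho$ a dyadic $r$ with $\abs{x-r}\le 2^{-m}$, and returns $\bin{\varphi(\langle r,\sdone^{n+1}\rangle)}$; the two errors sum to less than $2^{-n}$ precisely because $\flength{\varphi}$ is a modulus of continuity (Definition~\ref{def:modulus of continuity}). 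All quantities are bounded by second-order polynomials in $\flength{\langle\varphi,\rho\rangle}$ and $n$, so this realizer is polynomial-time in the sense of Definition~\ref{def:polynomial-time computable}. Given a polynomial-time translation $G$ from $\xi$-names to $\xic$-names, composing $G$ with this evaluation realizer yields a polynomial-time realizer of evaluation on $F$; here one uses the closure of second-order polynomial-time computability under composition, the size $\flength{G(\varphi)}$ being bounded by a second-order polynomial in $\flength{\varphi}$.

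For the converse, suppose a realizer $E$ evaluates $\eval$ on $F$ within a second-order polynomial $P$. The crux is to manufacture, from a $\xi$-name $\varphi$ of $f$, a $\xic$-name of the same $f$. The first ingredient is a modulus of continuity of $f$ read off from the running time of $E$. Fixing for each dyadic $x\in[0,1]$ a canonical $\xi_{\RR}$-name $\rho_x$ whose size is dominated by a universal linear function (a $2^{-k}$-approximation needs $O(k)$ bits under $\bin\cdot$), the size $\flength{\langle\varphi,\rho_x\rangle}$ is dominated, uniformly in $x$, by a second-order polynomial $L$ in $\flength{\varphi}$. Hence on query $\sdone^{n+1}$ the computation $E^{\langle\varphi,\rho_x\rangle}$ halts within $T:=P(L(\flength\varphi),n+1)$ steps, and in that many steps it can inspect its oracle only at arguments of length at most $T$. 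Consequently the computed $2^{-(n+1)}$-approximation to $f(x)$ depends on $x$ only through such queries, hence only up to precision $2^{-T}$: if $\abs{x-y}\le 2^{-(T+2)}$ one may choose names $\rho_x,\rho_y$ agreeing on all arguments of length $\le T$, whence the two approximations coincide and $\abs{f(x)-f(y)}<2^{-n}$. Thus $\mu(n):=P(L(\flength\varphi),n+1)+2$ is a modulus of continuity of $f$ that is itself a second-order polynomial in $\flength\varphi$.

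The second ingredient encodes this modulus into the length of the output name. I would choose $\ell$ to be a strictly increasing second-order polynomial in $\flength\varphi$ dominating $\mu$ and meeting the growth condition of Definition~\ref{def:modulus of continuity}, and define $\psi(\langle\str a,\sdone^n\rangle)$ to be the $\bin\cdot$-code of $E^{\langle\varphi,\rho_{\bin{\str a}}\rangle}(\sdone^n)$, padded with trailing symbols to length exactly $\ell(\length{\langle\str a,\sdone^n\rangle})$ (possible since $\bin\cdot$ admits arbitrarily long codes of a number without changing its value), setting $\psi(\str c)$ equal to a fixed string of length $\ell(\length{\str c})$ on all remaining inputs. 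Because $\ell$ is increasing and $\abs{\psi(\str c)}=\ell(\length{\str c})$ holds for every $\str c$, the function $\psi$ is length-monotone with $\flength\psi=\ell$, a modulus of continuity; the padding leaves the approximation condition of Definition~\ref{def:standard rep} intact for all $\str a$ with $\bin{\str a}\in[0,1]$. Hence $\psi$ is a $\xic$-name of $f$, and $\varphi\mapsto\psi$ is polynomial-time: computing $\ell(\length{\str c})$ needs only $\flength\varphi(\length{\str c})=\length{\varphi(\sdone^{\length{\str c}})}$ from a single query, running $E$ costs a second-order polynomial number of steps, and padding is cheap. This exhibits the required polynomial-time translation of $\xi$ to the range-restriction of $\xic$ to $F$.

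I expect the modulus-extraction step of the second paragraph to be the main obstacle: turning the abstract running-time bound for $E$ into an honest modulus of continuity hinges on the uniform size bound $L$ on the canonical point-names and on the observation that a $T$-step computation cannot resolve $x$ beyond precision $2^{-T}$. The remainder is bookkeeping — arranging the padding so that $\flength\psi$ equals the chosen modulus while keeping $\psi$ length-monotone and computable in second-order polynomial time.
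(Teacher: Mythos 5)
Your proposal is correct and follows essentially the same route the paper takes: the paper defers this statement to Kawamura and Cook and reproduces the identical strategy in its proof of Theorem~\ref{resu:minimality of the singular representation d} --- the easy direction by composing with a polynomial-time evaluation realizer for $\xic$, and the hard direction by extracting a modulus of continuity from the second-order polynomial running-time bound (a $T$-step computation only sees its real argument to precision $2^{-T}$) and padding the outputs of the evaluation machine so that the length of the constructed name realizes that modulus. No gaps; the bookkeeping (the uniform size bound on canonical point-names, and that the chosen length $\ell$ dominates both the modulus and the output lengths) is handled as in the reference proof.
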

	 	A proof of this theorem for $F=C([0,1])$ can be found in \cite{MR2743298} and is easily seen to work for an arbitrary $F$ as well.
	 	Note that the computable version of this theorem holds in a more general setting: $[0,1]$ and $\RR$ can be replaced by arbitrary represented spaces $X$ and $Y$ and $\xic$ by the function space representation of the continuously representable functions from $X$ to $Y$.
	 	Under suitable assumptions about the represented
                spaces, this is a well behaved representation of the
                continuous functions from $X$ to $Y$.	

%!TEX root=Lp.tex

\section{The singular representation}\label{sec:the singular representation}
	
	Fix some bounded measurable set $\Omega\subseteq \RR^d$.
	Recall that $\LLL1{\Omega}$ denotes the set of functions on $\Omega$ integrable with respect to the Lebesgue measure $\lambda$, where functions are identified if they coincide almost everywhere.
	Equipped with the norm $\norm{f}_1:= \int_\Omega\abs{f}\dd\lambda$ the space $\LLL1{\Omega}$ is a Banach space.
	This section specifies the weakest representation of $\LLL1{\Omega}$ that renders integration polynomial-time computable.
	More formally the following operator is supposed to be polynomial-time computable:
	\begin{equation}\label{the integration operator}\tag{INT}
		\inte:\LLL1{\Omega}\times \Omega^2 \to \RR,\quad (f,x,y)\mapsto \int_{[x,y]\cap\Omega} f\dd\lambda,
	\end{equation}
	where $[x,y]$ denotes the smallest box with edges parallel to the axis and corners $x$ and $y$.
	Here, $\RR^d$ is equipped with the $d$-fold product of the standard representation of the real numbers and $\Omega$ with its range-restriction.
	
	First consider the case $\Omega = [0,1]$:
	Define an operator $\Phi:\LLL1{[0,1]} \to C([0,1])$ by
	\[ \Phi(f)(x) := \int_0^x f(t)\dd t. \]
	This defines a linear continuous operator between Banach spaces with $\|\Phi\| = 1$.
	The operator $\Phi$ translates the integration operator into the evaluation operator:
	\[ \eval(\Phi(f),x) = \inte(f,0,x), \quad \inte(f,x,y) = \eval(\Phi(f),y)-\eval(\Phi(f),x). \]
	The image of $\Phi$ is the set $\mathcal{AC}_0([0,1])$ of absolutely continuous functions that vanish in zero.
	Furthermore, $\Phi$ is injective and therefore invertible on its image.

	From the above it follows that $\Phi^{-1}\circ\xic|^{\mathcal{AC}_0([0,1])}$ is a minimal representation:
	Whenever $\xi$ renders integration polynomial-time computable, $\Phi\circ \xi$ renders evaluation polynomial-time computable.
	Thus, the polynomial-time translation from $\xic|^{\mathcal{AC}_0([0,1])}$ to $\Phi\circ \xi$ that exists by the minimality of $\xic$ from Theorem~\ref{resu:minimality of the standard representation} is also a polynomial-time translation from $\Phi^{-1}\circ \xic|^{\mathcal{AC}([0,1])}$ to $\Phi^{-1}\circ\Phi\circ \xi =\xi$.
	
	Since $\Phi^{-1}$ is a linear discontinuous operator between Banach spaces, this representation cannot be continuous:
	An abstract argument for this can be found in \cite{SchroederPhD}.
	This chapter specifies an alternative description of the above representation that allows for generalizations and proves that the representation and its multidimensional generalizations are discontinuous.

	\subsection{Singularity moduli}\label{sec:singularity moduli}
		With the notation from the introduction of this section:
		For $f\in\Lone(\Omega)$ a function $\mu$ is a modulus of continuity of $\Phi(f)$ if and only if
		\[ \abs{x-y}\leq 2^{-\mu(n)} \quad \Rightarrow\quad \abs{\int_x^y f\dd\lambda} < 2^{-n} \]
		and it is strictly increasing whenever it is non-zero.
		This motivates the following definition.
		Let $\Omega$ be a measurable subset of $\RR$ (it is no longer assumed to be bounded).
		For $f\in\Lone(\Omega)$ denote by $\tilde f$ the extension of $f$ to all of the real line by zero.
		The following modulus measures how bad the singularities of a function are:
		\begin{defi}\label{def:singularity modulus}
			A function $\mu:\omega\to\omega$ is called a \demph{singularity modulus} of $f\in \LLL1\Omega$, if for any $n\in\omega$ and $x,y\in \RR$
			\[ \abs{x-y}\leq 2^{-\mu(n)}\quad \Rightarrow \quad\abs{\int_{[x,y]} \tilde f\dd\lambda} < 2^{-n}. \]
			%and  $\mu(n)\neq 0 \Rightarrow \mu(n+1)>\mu(n)$, that is it is strictly increasing whenever it is not zero.
		\end{defi}
		Like any continuous function on the unit interval allows a modulus of continuity, any function from $\Lone(\Omega)$ possesses a singularity modulus.
		For $\Omega =[0,1]$ any modulus of continuity of the function $\Phi(f)$ from the introduction of this section may be chosen.
		It is possible to prove that any integrable function has a singularity modulus.
		
		If $\Omega$ is bounded, the existence of a singularity modulus implies integrability.
		If the interior of $\Omega$ is unbounded the situation is more involved:
		On the one hand, there are non-integrable, but locally integrable functions that permit a singularity modulus.
		On the other hand not all locally integrable functions allow a singularity modulus.
		In the following, however, only bounded sets are considered.

		The next proposition uses $\Lp$-spaces, that are recollected in \Cref{sec:a second-order representation of lp} in more detail.
		The case $p=\infty$, is understandable if one recalls that $\LL\infty$ are the essentially bounded functions and $\norm\cdot_\infty$ the essential supremum norm.
		\begin{prop}[small moduli]\label{resu:classes with small moduli}
			For a function $f\in\LLL1{[0,1]}$ and an integer $C\in\omega$ the following hold:
			\begin{enumerate}
				\item\label{item:singmod} if $n\mapsto n+C$ is a singularity modulus of $f$, then $f\in\LLL\infty{[0,1]}$ and $\lb(\|f\|_\infty)\leq C$.
				\item\label{item:lp} If $f\in\LLL p{[0,1]}$ for some $1< p\leq \infty$ and $C> \lb(\|f\|_p)$ and $D\geq\big(1-\frac1p\big)^{-1}$ are integer constants, then $n\mapsto D(n+C)$ is a singularity modulus of $f$.
			\end{enumerate}
		\end{prop}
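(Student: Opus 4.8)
The two parts are established by different standard tools: part~\cref{item:singmod} rests on the Lebesgue differentiation theorem, while part~\cref{item:lp} is a direct application of Hölder's inequality. Throughout, recall that $\tilde f\in\Lone(\RR)$ is the extension of $f$ by zero, so that $\norm{\tilde f}_p=\norm f_p$ for every $p$.

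For part~\cref{item:singmod} the plan is to convert the integral bound supplied by the singularity modulus into a bound on averages and then pass to the limit. Assuming $n\mapsto n+C$ is a singularity modulus, specialize to intervals $I$ of length exactly $2^{-(n+C)}$: the defining inequality gives $\abs{\int_I\tilde f\dd\lambda}<2^{-n}$, hence the average obeys
\[ \frac1{\abs I}\abs{\int_I\tilde f\dd\lambda}<\frac{2^{-n}}{2^{-(n+C)}}=2^C. \]
Now fix a Lebesgue point $x$ of $\tilde f$ and apply this to the shrinking intervals $I_n:=[x,x+2^{-(n+C)}]$. The Lebesgue differentiation theorem yields $\frac1{\abs{I_n}}\int_{I_n}\tilde f\dd\lambda\to\tilde f(x)$, while each average is bounded in modulus by $2^C$; therefore $\abs{\tilde f(x)}\le 2^C$. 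Since almost every point is a Lebesgue point, $\abs f\le 2^C$ almost everywhere on $[0,1]$, that is $f\in\LLL\infty{[0,1]}$ with $\lb(\norm f_\infty)\le C$.

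For part~\cref{item:lp} the plan is to estimate the integral over a short interval by Hölder's inequality, the exponent $(1-1/p)^{-1}$ being the conjugate of $p$. For any $x,y$ with $\abs{x-y}\le 2^{-D(n+C)}$, Hölder on $[x,y]$ together with $\norm{\tilde f}_{\LLL p{[x,y]}}\le\norm f_p$ gives
\[ \abs{\int_{[x,y]}\tilde f\dd\lambda}\le\norm{\tilde f}_{\LLL p{[x,y]}}\,\abs{x-y}^{1-1/p}< 2^C\cdot 2^{-D(n+C)(1-1/p)}\le 2^{-n}, \]
where the strict inequality uses $\norm f_p<2^C$ (from $C>\lb(\norm f_p)$) and the final inequality uses $D(1-1/p)\ge 1$, so that $D(n+C)(1-1/p)\ge n+C$. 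Thus $n\mapsto D(n+C)$ is a singularity modulus. The limiting case $p=\infty$ is the same computation with $\norm{\tilde f}_\infty$ replacing the $L^p$-norm and $D\ge 1$.

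The only genuinely delicate point is the passage from the averaged bound to the pointwise bound in part~\cref{item:singmod}: one must invoke the Lebesgue differentiation theorem and argue at Lebesgue points, checking that the intervals $I_n$ shrink regularly to $x$ so that the averages indeed converge to $\tilde f(x)$. Part~\cref{item:lp} is a routine Hölder estimate once the arithmetic of the exponents is lined up, and I expect no real obstacle there.
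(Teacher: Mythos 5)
Your proof is correct and follows essentially the same route as the paper: part (1) via the Lebesgue differentiation theorem applied to intervals of length $2^{-(n+C)}$ (the paper uses centered intervals, you use one-sided ones, which is an immaterial difference at Lebesgue points), and part (2) via Hölder's inequality with the conjugate-exponent arithmetic that the paper leaves implicit. No gaps.
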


		\noindent For the proof recall the following theorem, a proof of which can be found in \cite{MR924157}.
		\begin{thm}[Lebesgue Differentiation Theorem]\label{resu:lebesgue differentiation thoerem}
			Let $f\in \Lone(\RR)$.
			Then for any representative $g$ of $f$ and almost all $x\in\RR$ it holds that
			\[ g(x) = \lim_{m\to\infty} 2^{m}\int_{x-2^{-m-1}}^{x+2^{-m-1}} g\dd \lambda. \]
		\end{thm}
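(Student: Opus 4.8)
The plan is to run the classical Hardy--Littlewood maximal-function argument. First observe that the right-hand side is exactly the average of $g$ over the symmetric interval of length $2^{-m}$ centred at $x$, since $2^m\cdot 2^{-m}=1$; the assertion is thus that these symmetric averages converge to $g(x)$ for almost every $x$. Because each average depends only on the integral of $g$ over an interval, it is unchanged if $g$ is replaced by another representative. Hence it suffices to prove the statement for one fixed representative: if it holds off a null set $N$ for $g$, then for any other representative $g'$, which equals $g$ off a null set $Z$, it holds off $N\cup Z$, so ``for any representative'' follows automatically.

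I would introduce the centred maximal function
\[ (Mf)(x) := \sup_{r>0}\frac1{2r}\int_{x-r}^{x+r}\abs{g}\dd\lambda, \]
which dominates the modulus of every average appearing in the statement. The one genuine analytic input is the weak-type bound $\lambda(\{Mf>\alpha\})\le \frac3\alpha\norm{f}_1$ for all $\alpha>0$. This I would prove by a one-dimensional Vitali covering lemma: given a compact $K\subseteq\{Mf>\alpha\}$, each of its points is the centre of an interval whose average of $\abs g$ exceeds $\alpha$, so that interval $B$ satisfies $\lambda(B)<\frac1\alpha\int_B\abs g$; covering $K$ by finitely many such intervals and extracting a disjoint subfamily $B_1,\dots,B_k$ whose triples still cover $K$ yields $\lambda(K)\le 3\sum_i\lambda(B_i)<\frac3\alpha\sum_i\int_{B_i}\abs g\le\frac3\alpha\norm f_1$, and taking the supremum over $K$ gives the claim.

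Next I would dispose of the dense class of continuous, compactly supported functions $h$: by uniform continuity the symmetric averages of $h$ converge to $h(x)$ at \emph{every} $x$, so writing $(\Theta f)(x):=\limsup_{m}\abs{2^m\int_{x-2^{-m-1}}^{x+2^{-m-1}}g\dd\lambda-g(x)}$ for the error limsup, we have $\Theta h\equiv0$. For general $f\in\Lone(\RR)$ and $\varepsilon>0$, I would choose $h\in C_c(\RR)$ with $\norm{f-h}_1<\varepsilon$ and use linearity of the averages to get the subadditivity $(\Theta f)(x)\le(\Theta(f-h))(x)+(\Theta h)(x)=(\Theta(f-h))(x)$, together with the pointwise estimate $(\Theta(f-h))(x)\le (M(f-h))(x)+\abs{(f-h)(x)}$, the first term bounding the limsup of the averages and the second coming from the subtracted value.

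Combining the maximal inequality with Markov's inequality then gives, for every $\alpha>0$,
\[ \lambda\big(\{\Theta f>\alpha\}\big)\le\lambda\big(\{M(f-h)>\tfrac\alpha2\}\big)+\lambda\big(\{\abs{f-h}>\tfrac\alpha2\}\big)\le\frac{6\varepsilon}\alpha+\frac{2\varepsilon}\alpha. \]
Letting $\varepsilon\to0$ forces $\lambda(\{\Theta f>\alpha\})=0$ for each fixed $\alpha$, and taking the union over $\alpha=1/k$ yields $\Theta f=0$ almost everywhere, which is exactly the asserted convergence. The main obstacle is the weak-type maximal inequality, where the geometry of $\RR$ enters through the covering lemma; the rest is the standard ``$\varepsilon$ of room'' reduction to a dense class and is routine.
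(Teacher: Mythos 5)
Your proof is correct: it is the classical Hardy--Littlewood maximal-function argument (weak-type $(1,1)$ bound via a Vitali-type covering lemma, then reduction to the dense class $C_c(\RR)$), and all the estimates, including the independence of the representative and the constants $6\varepsilon/\alpha + 2\varepsilon/\alpha$, check out. Note, however, that the paper does not prove this statement at all --- it quotes the Lebesgue Differentiation Theorem as a known result and refers to the literature (the cited reference \cite{MR924157}) for a proof, and the proof given there is essentially the one you reconstructed.
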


		\begin{proof}[Proof of Propoition~\ref{resu:classes with small moduli}]
			First prove \cref{item:singmod}.
			For this assume that $n\mapsto n+C$ is a singularity modulus of $f$ and let $g$ be a representative of the function considered.
			By the Lebesgue Differentiation Theorem~\ref{resu:lebesgue differentiation thoerem} there exists a set $A\subseteq [0,1]$ of measure one such that for any $x\in A$
			\[ \abs{g(x)} = \lim_{m\to\infty} 2^{m}\abs{\int_{x-2^{-m-1}}^{x+2^{-m-1}} g\dd \lambda} = \lim_{n\to\infty} 2^{n+C}\abs{\int_{x-2^{-n+C-1}}^{x+2^{-n+C-1}} f\dd \lambda} < 2^C. \]
			This proves that $\norm f_\infty \leq 2^C$ and in particular that $f\in \LLL{\infty}{[0,1]}$. 

			To prove \cref{item:lp} use Hölder's inequality (see Corollary~\ref{cor:hoelder}) to deduce
			\[ \abs{\int_x^{x+h}f(t)\dd t} \leq \int_x^{x+h} |f(t)|\dd t \leq \|f\|_ph^{1-\frac1p}. \]
			From this it is easy to see that the assertion is true.
			It remains true for $p=\infty$ if the convention $\frac1\infty = 0$ is used.
		\end{proof}
		In particular the functions with singularity modulus of form $n+C$ for some $C$ are exactly the functions contained in $\LL\infty$.
		The class of functions with linear modulus with slope $(1-1/p)^{-1}$ contains $\Lp$, however, the inclusion is strict as can be seen by considering the function $x^{-1/p}$.
		The corresponding classes for the modulus of continuity are the Lipschitz and Hölder-continuous functions.
		
	\subsection{The singular representation in one dimension}\label{sec:sub:in one dimension}

		Recall that the information about a continuous function can be divided into two parts by the characterization of polynomial-time computable functions from Theorem~\ref{resu:characterization of polynomial-time computable functions}.
		The first part being approximations to the values on dyadic numbers and the second part being a modulus of continuity.
		Definition~\ref{def:standard rep} uses this to introduce the standard representation of continuous functions.

		The following definition carries this idea to the set of integrable functions, where integrals over dyadic intervals replace the point evaluations and the singularity modulus replaces the modulus of continuity.
		Recall that $\DD$ denotes the set numbers of the form $\frac{m}{2^n}$ for $m\in\ZZ$ and $n\in\omega$, that $\bin\cdot:\albe^* \to \DD$ is the encoding fixed in the introduction and that the length $\flength{\varphi}$ of a length-monotone string function is given by $\flength{\varphi}(\length{\str a}) = \length{\varphi(\str a)}$.
		\begin{defi}\label{def:singular representation}
			Define the \demph{singular representation} $\xi_{s}$ of $\LLL1{[0,1]}$:
			A length-monotone string function $\varphi$ is a name of $f\in\LLL1{[0,1]}$ if for all strings $\str a,\str b$ with $\bin{\str a}, \bin{\str b}\in[0,1]$
			\[ \abs{\int_{\bin{\str a}}^{\bin{\str b}} f\dd \lambda -\bin{\varphi(\langle \str a,\str b, \sdone^n\rangle)}} < 2^{-n} \]
			and $\flength{\varphi}$ is a singularity modulus of $f$.
		\end{defi}
		This definition is well posed:
		Firstly, for any distinct integrable functions there exists a dyadic interval such that their integrals over this interval differ.
		Thus, the above indeed defines a partial function.
		Secondly, any integrable function has a singularity modulus and therefore the mapping is surjective.

		It can easily be verified that this representation renders the vector space operations of $\Lone([0,1])$ polynomial-time computable.
		The representation $\xis$ is chosen such that it is polynomial-time equivalent to the representation from the introduction of this section.
		As a result, it possesses the same minimality property.
		We state this as a theorem, note however, that it is also covered by Theorem~\ref{resu:minimality of the singular representation d}, which contains a more explicit statement and a direct proof.
		\begin{thm}[minimality]
			$\xis$ is a minimal representation of $\Lone([0,1])$ such that the integration operator is polynomial-time computable.
		\end{thm}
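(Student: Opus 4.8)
The plan is to avoid a fresh argument and instead reduce the statement to the minimality of the standard representation $\xic$, which is already available. Write $\zeta := \Phi^{-1}\circ\xic|^{\mathcal{AC}_0([0,1])}$ for the representation of $\LLL1{[0,1]}$ introduced at the start of this section; the inline argument given there, which runs $\Phi$ through Theorem~\ref{resu:minimality of the standard representation}, already establishes that $\zeta$ renders integration polynomial-time computable and is minimal among such representations. It therefore suffices to show that $\xis$ and $\zeta$ are polynomial-time equivalent, since both the property and the minimality then transport across the equivalence. The bridge is the dictionary recorded above: for $f\in\LLL1{[0,1]}$ and $x,y\in[0,1]$ one has $\Phi(f)(x)=\inte(f,0,x)$ and $\int_{\bin{\str a}}^{\bin{\str b}}f\,\dd\lambda=\Phi(f)(\bin{\str b})-\Phi(f)(\bin{\str a})$, while—as observed in \Cref{sec:singularity moduli}—every modulus of continuity of $\Phi(f)$ is a singularity modulus of $f$, and a singularity modulus of $f$ satisfies, on arguments from $[0,1]$, exactly the defining inequality of a modulus of continuity of $\Phi(f)$, only the growth condition having to be restored. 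Thus a $\xis$-name of $f$ and a $\xic$-name of $\Phi(f)$ carry the same two pieces of data, merely recorded against slightly different inputs.

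For the translation $\zeta\to\xis$, turning a $\xic$-name $\varphi$ of $g:=\Phi(f)$ into a $\xis$-name of $f$: on input $\langle\str a,\str b,\sdone^n\rangle$ query $\varphi$ at $\langle\str a,\sdone^{n+1}\rangle$ and $\langle\str b,\sdone^{n+1}\rangle$ and return (suitably rounded and padded) the difference of the two returned dyadics. By the dictionary this lies within $2^{-(n+1)}+2^{-(n+1)}=2^{-n}$ of $\int_{\bin{\str a}}^{\bin{\str b}}f\,\dd\lambda$, and since $\flength{\varphi}$ is a modulus of continuity of $g$ it is in particular a singularity modulus of $f$, so the output is a legitimate $\xis$-name. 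Each value costs a constant number of oracle calls on inputs of size linear in $\length{\langle\str a,\str b,\sdone^n\rangle}$ together with one dyadic subtraction, so the realizer runs within a second-order polynomial. For the reverse translation $\xis\to\zeta$, turning a $\xis$-name $\varphi$ of $f$ into a $\xic$-name of $g=\Phi(f)$: on input $\langle\str a,\sdone^n\rangle$ return $\varphi(\langle\str z,\str a,\sdone^n\rangle)$, where $\str z$ is a fixed code of the dyadic $0$; this is within $2^{-n}$ of $\Phi(f)(\bin{\str a})=\int_0^{\bin{\str a}}f\,\dd\lambda$, and $\flength{\varphi}$, being a singularity modulus of $f$, yields a modulus of continuity of $g$ once the growth condition is forced by padding. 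The realizer is again polynomial-time, which gives $\xis\equiv\zeta$ in polynomial time.

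With the equivalence in hand the two halves of the statement follow. First, integration is polynomial-time computable with respect to $\xis$: translate a $\xis$-name into a $\zeta$-name and use $\inte(f,x,y)=\eval(\Phi(f),y)-\eval(\Phi(f),x)$ together with the polynomial-time evaluation provided by $\xic$ and the polynomial-time subtraction of reals. Second, for minimality let $\xi$ be any representation of $\LLL1{[0,1]}$ rendering $\inte$ polynomial-time computable; the inline argument above produces a polynomial-time realizer turning every $\xi$-name of $f$ into a $\zeta$-name of $f$, and composing it with the polynomial-time translation $\zeta\to\xis$ just constructed—second-order polynomials being closed under composition—yields a polynomial-time realizer turning every $\xi$-name of $f$ into a $\xis$-name of $f$. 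Hence $\xis$ is minimal.

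The one genuinely technical point throughout is the length bookkeeping, and this is where I expect the main obstacle to lie. The definitions of both $\xic$ and $\xis$ tie $\flength{\varphi}$ of a name to the relevant modulus, so each translated name must be padded so that its length function, as a function of the precision parameter, reproduces exactly the transferred modulus while remaining length-monotone. The dictionary makes the moduli coincide and the value data interconvert by a single subtraction, so the analytic content is light; forcing the output length functions to realise the intended moduli, and in the $\xis\to\zeta$ direction restoring the strict-growth condition that a bare singularity modulus need not satisfy, is the delicate part of the verification.
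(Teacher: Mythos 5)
Your proposal is correct and follows essentially the same route the paper takes for this statement: the surrounding text justifies the theorem precisely by the polynomial-time equivalence of $\xis$ with $\Phi^{-1}\circ\xic|^{\mathcal{AC}_0([0,1])}$ and the transport of the minimality of $\xic$ through $\Phi$, and your write-up merely spells out the two translations (including the correct handling of the growth condition on the modulus) that the paper leaves implicit. The only difference is that the paper's fully detailed argument is the direct proof given later for the $d$-dimensional Theorem~\ref{resu:minimality of the singular representation d}, of which this theorem is the special case.
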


	\subsection{Higher dimensions}

		Definition~\ref{def:singular representation} allows a straight forward generalization to higher dimensions.
		Fix some dimension $d$, let $\Omega\subseteq\RR^d$ be a bounded measurable set and recall that $\tilde f$ denotes the extension of a function to the whole space by zero.
		\begin{defi}\label{def:singularity modulus d}
			A function $\mu:\omega\to\omega$ is called a \demph{singularity modulus} of $f\in \LLL1\Omega$ if it is a singularity modulus (in the sense of Definition~\ref{def:singularity modulus}) for each of the functions
			\[ f_i(x):= \int_{\RR^{d-1}} \tilde f(x_1,\ldots x_{i-1},x,x_{i+1},\ldots, x_{d})\dd x_1\cdots \dd x_{i-1} \dd x_{i+1}\cdots \dd x_{d}. \]
			(compare \Cref{fig:sing.mod.}).
		\end{defi}
		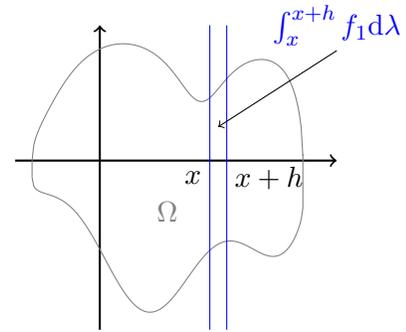
\begin{wrapfigure}{r}{5.2cm}
			\vspace{-1.2cm}
			\center
			\begin{tikzpicture}[scale=.45]
				\draw[thick,->] (-2.5,0) -- (7,0);
				\draw[thick,->] (0,-5) -- (0,4);
				\draw[domain = 0:8,color=gray] plot[samples=160] (\x-2, {sqrt(16-(\x-4)^2)*(1-1/(\x^2+1) + sin(\x) + (\x/7)^5-\x/6 +1.5- 1/((\x-5)^2+1))/2});
				\draw[domain = 0:8,color=gray] plot[samples=160] (\x-2, {1/(\x^2+1)/2 + sin(80*\x) + (\x/7)^5-\x/6 -1-sqrt(16-(\x-4)^2)/2});
				\draw[color=gray] (-2,0) -- (-2,-.5);
				\draw[color=gray] (6,.19) -- (6,-1.42);
				\node[color=gray] at (2,-1.5) {$\Omega$};
				\draw[color=blue] (3.25,-5) -- (3.25,4);
				\draw[color=blue] (3.75,-5) -- (3.75,4);
				\node at (2.75,-.5) {$x$};
				\node at (5,-.5) {$x+h$};
				\node[color=blue] at (7,4) {$\int_x^{x+h}f_1\dd\lambda$};
				\draw[->] (7,3.25) -- (3.5,1);
			\end{tikzpicture}
			\caption{$f_1$ in two dimensions.}\label{fig:sing.mod.}
			\vspace{-1.2cm}
		\end{wrapfigure}

		Recall from the introduction that for $x,y\in\RR^d$ the smallest box with corners $x$ and $y$ and edges parallel to the axis is denoted by $[x,y]$ and that the box $\left[\bind{\str a},\bind{\str b}\right]$ is abbreviated as $[\str a,\str b]$.
		\begin{defi}\label{def:singular representation d}
			Define the \demph{singular representation} $\xis$ of $\LLL1\Omega$:
			A length-monotone string function $\varphi$ is a $\xis$-name of $f \in\LLL1\Omega$ if for all strings $\str a,\str b$ with $\bin{\str a}, \bin{\str b}\in\DD^d$\\
			\begin{minipage}{.6\textwidth}
				\[ \abs{\int_{\left[\str a,\str b\right]} \tilde f\dd\lambda -\bin{\varphi(\langle \str a,\str b, \sdone^n\rangle)}} < 2^{-n} \]
			\end{minipage}\\
			and $\flength{\varphi}$ is a singularity modulus of $f$.
		\end{defi}

		Since no source for a multidimensional generalization of the minimality of the standard representation for continuous functions from Theorem~\ref{resu:minimality of the standard representation} is known to the author, a direct proof of the minimality is given for the multidimensional case.
		\begin{thm}[minimality of the singular representation]\label{resu:minimality of the singular representation d}
			For a second-order representation $\xi$ of $\LLL1\Omega$ the following are equivalent:
			\begin{itemize}
				\item $\xi$ renders the integration operator from \cref{the integration operator} polynomial-time computable.
				\item $\xi$ is polynomial-time translatable to the singular representation $\xis$.
			\end{itemize}
		\end{thm}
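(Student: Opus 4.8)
\subsection*{Proof proposal}

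The plan is to prove the two implications separately. For the backward implication I would compose realizers, so it suffices to treat the special case $\xi=\xis$, i.e.\ to show that $\xis$ itself renders \cref{the integration operator} polynomial-time computable, and then precompose with the translation from $\xi$ to $\xis$. The forward implication is the converse construction: turning a realizer of $\inte$ into a procedure producing $\xis$-names. Throughout I fix a dyadic $M$ with $\Omega\subseteq[-M,M]^d$, so that $\tilde f$ vanishes outside $[-M,M]^d$; this boundedness is what lets me pass between the full marginals $f_i$ and honest box integrals.

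Forward direction. Given a $\xi$-name $\varphi$ of $f$ and a polynomial-time realizer $M^{?}$ of $\inte$ with second-order polynomial $P$, I would assemble a $\xis$-name $\psi$. The box-integral part is immediate: since $\int_{[\str a,\str b]}\tilde f=\inte(f,\bin{\str a},\bin{\str b})$ after clipping the corners into $\Omega$ (which changes nothing, as $\tilde f$ vanishes off $\Omega$), I feed $\varphi$ and fast names of the dyadic corners into $M^{?}$ and read off the required approximation. The real work is to make $\flength\psi$ a singularity modulus of $f$, and here I would run the fooling argument behind Theorem~\ref{resu:minimality of the standard representation}: within its time budget $P(\flength\varphi,\cdot)$ the machine inspects each corner only to some precision $2^{-K}$ with $K$ bounded by $P$, so two $\Omega$-corner boxes agreeing to that precision yield the same output. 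A box that is thinner than $2^{-K}$ in coordinate $i$ can therefore be given corner names indistinguishable from those of a degenerate box, on which the integral is $0$; hence every such box integral is $<2^{-n}$. Letting the cross-section of the slab exhaust $\Omega$ and using boundedness of $\Omega$, the limit $\int_s^{s+h}f_i$ inherits the bound $<2^{-n}$, so a suitable second-order polynomial in $\flength\varphi$ is a singularity modulus of each $f_i$. I then pad $\psi$ so that $\flength\psi$ equals this modulus and is length-monotone.

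Backward direction. Mirroring the one-dimensional reduction through $\Phi$, I would introduce the corner function $F(z):=\int_{\prod_i[-M,z_i]}\tilde f\dd\lambda$ (a dyadic box integral when $z$ is dyadic, hence supplied by the $\xis$-name) and recover the box integral by inclusion--exclusion, $\int_{[x,y]}\tilde f=\sum_{c}(-1)^{\sigma(c)}F(c)$ over the $2^d$ corners $c$ with $c_i\in\{x_i,y_i\}$ and $\sigma(c)=\#\{i:c_i=x_i\}$. Evaluating $\inte$ thus reduces to evaluating $F$ at real points, which I would do by approximating each corner by a dyadic point to precision $2^{-\mu(n+2)}$, reading the nearby value of $F$ from the name, and combining. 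The main obstacle—and the step where I expect the real work to lie—is bounding the increments of $F$ along each coordinate: a step of length $h\le 2^{-\mu(n)}$ in coordinate $i$ changes $F$ by a slab integral that is \emph{cut off} in the remaining coordinates, whereas $\mu$ only controls the \emph{full} marginal $f_i$. Reconciling this gap is precisely the delicate multidimensional point, and it is where boundedness of $\Omega$ is essential: I would use it to complete the cut-off cross-sections to full ones and thereby reduce each increment to the full-marginal estimates $\big|\int_s^{s+h}f_i\big|<2^{-n}$ that $\mu$ provides. Granting that the marginal singularity moduli yield a genuine modulus of continuity for $F$, summing the corner contributions and the name's own error keeps the total below $2^{-n}$, and all index bookkeeping is polynomial with running times that are second-order polynomials in $\flength\varphi$ (hence in $\mu$) and in $n$; composing with a polynomial-time translation from $\xi$ to $\xis$ then yields the backward implication.
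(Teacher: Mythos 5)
Your forward direction is essentially the paper's: simulate the polynomial\mbox{-}time integration machine to produce the box-integral values, and extract a singularity modulus from its running time by the fooling argument --- within its budget the machine cannot distinguish corner names of a slab of thickness below $2^{-K}$ from corner names of a degenerate box of measure zero, so the slab integral must be below $2^{-n}$. (The paper takes the cross-section to be a fixed bounding box $[\str c,\str d]^{d-1}$ containing the projection of $\Omega$ rather than passing to a limit, but that is cosmetic.) The gap is in the backward direction, at precisely the step you isolate and then grant. The assertion that the marginal singularity moduli control the increments of the corner function $F$ --- equivalently, that perturbing one corner coordinate of a box by $h\le 2^{-\mu(n)}$ changes the box integral by less than $2^{-n}$ --- is false for $d\ge 2$, and ``completing the cut-off cross-sections to full ones'' cannot repair it: the full-slab integral says nothing about its sub-slabs, because cancellation over the full cross-section can conceal arbitrary mass in a cut-off piece. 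Concretely, on $\Omega=[0,1]^2$ let $f(x_1,x_2):=u_k(x_1)v(x_2)$ with $u_k:=2^k\big(\chi_{[0,2^{-k}]}-\chi_{[2^{-k},2^{-k+1}]}\big)$ and $v:=\chi_{[0,1/2]}-\chi_{[1/2,1]}$. Both marginals $f_1$ and $f_2$ vanish identically because each factor integrates to zero, so by Definition~\ref{def:singularity modulus d} \emph{every} $\mu$ is a singularity modulus of $f$; moreover every dyadic box integral of $f$ is at most $1/2$ in absolute value, so $f$ admits $\xis$-names whose length is bounded by $m\mapsto Cm+C$ with $C$ independent of $k$. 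Nevertheless $\int_{[(0,0),(t,1/2)]}f\dd\lambda=2^{k-1}t$ for $0\le t\le 2^{-k}$ climbs from $0$ to $1/2$ as $t$ traverses an interval of length $2^{-k}$, while a machine whose running time is a second-order polynomial in these uniformly linear lengths and in $n$ can, for fixed $n$, inspect the corner names and the function name only to a precision independent of $k$; for $k$ large it therefore cannot approximate $\inte\big(f,(0,0),(t,1/2)\big)$ even to within $2^{-2}$.

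You have in fact put your finger on a weakness of the paper's own argument rather than missed a trick it supplies: the published proof compresses this step into ``an easy triangle inequality argument'', which is genuinely easy for $d=1$ (there the pieces of the symmetric difference of the true box and its dyadic approximation are full slabs, i.e.\ intervals, so the singularity modulus applies directly), but which does not go through for $d\ge2$ with the marginal notion of singularity modulus from Definition~\ref{def:singularity modulus d}. Note also that your forward direction already proves more than it records: the fooling argument bounds the integral of $\tilde f$ over \emph{every} box that is thin in some coordinate, not only over the full slabs $[\str c,\str d]^{d-1}\times[x,x+h]$. If the singularity modulus were defined by that stronger condition, your backward direction would close --- the increments of $F$ are exactly integrals over thin boxes --- and the stated equivalence would hold. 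As it stands, to complete your proof you must either strengthen the modulus in this way (and adjust Definition~\ref{def:singular representation d} accordingly) or restrict the claim to dimension one.
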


		\begin{proof}
			The proof is very similar to the proof of \cite[Lemma 4.9]{MR2743298}, i.e. of the minimality of $\xic$ from Theorem~\ref{resu:minimality of the standard representation}.
			First assume that $\xi$ is a representation such that the integration operator from \cref{the integration operator} polynomial-time computable.
			Describe an oracle Turing machine that whenever given a $\xi$-name $\varphi$ of a function $f\in\LLL1\Omega$ returns correct values of a $\xis$-name of $f$:
			This machine simulates a machine computing the integration operator in polynomial-time to obtain approximations to the integrals of $f$ from $\varphi$.
			
			To obtain a singularity modulus of the input function let $P$ be a second-order polynomial bounding the running time of the integration operator and $p$ a polynomial such that any $(x,y)\in\Omega^2$ has a name of length $p$ (This depends on the concrete encoding of dyadic numbers and products chosen, but exists for reasonable choices and bounded $\Omega$).
			Then $\mu:\omega\to\omega,$ $n\mapsto P(\langle|\varphi|,p\rangle,n+1)$ is a singularity modulus of $f$:		
			When queried for an approximation with quality $2^{-n-1}$ the machine computing the integration operator can at most take $\mu(n)$ steps.
			Therefore, it knows the boundaries $a$ and $b$ of the integral with precision at most $2^{-\mu(n)}$.
			Recall the definition of the singularity modulus from Definition~\ref{def:singularity modulus d}, in particular that $f_i$ was $f$ with all but the $i$-th variable integrated over.
			Since $\Omega$ is bounded, there are some $\str c$ and $\str d$ such that $\Omega\subseteq [\str c,\str d]^d$.
			Note that for any $x\in[\str c,\str d]$ and $h\in\RR$ with $\abs h\leq 2^{-\mu(n)}$ there is a dyadic vector $a = \bin{\str a}$ which is a valid $2^{-\mu(n)}$ approximation for both $x$ and $x+h$.
			
			The argument works the same for any $i$.
			Set $i=d$ from now on to simplify notation.
			Define length-monotone string functions $\varphi_a^+$ and $\varphi_a^-$ by
			\[ \varphi_a^+(\str b) := \langle\str d,\ldots,\langle \str d,\str a\rangle\ldots\rangle\text{, resp. } \varphi_a^-(\str b):=\langle\str c,\ldots,\langle \str c, \str a\rangle\ldots\rangle. \]
			Let $q$ be the approximation encoded in the output of the machine computing $\inte$ when handed $\varphi$ as function name, $\langle\varphi_a^-,\varphi_a^+\rangle$ as boundaries of the integral and $\sdone^{n+1}$ as precision requirement.

			Since $a$ is an approximation to both $x$ and $x+h$ and $[\str c, \str d]^{d-1}\times[a,a]$ is a set of Lebesgue measure zero
			\begin{align*}
				\abs{\int_x^{x+h}f_i\dd \lambda} %& \leq \abs{\int_{\RR^{d-1} \times[x,x+h]} \tilde f\dd \lambda - q} + \abs{q}\\
				& \leq \abs{\int_{[\str c,\str d]^{d-1} \times[x,x+h]} \tilde f\dd \lambda - q} + \abs{q -\int_{[\str c,\str d]^{d-1} \times[a,a]} \tilde f\dd \lambda}
				< 2^{-n}.
			\end{align*}
			
			It is left to show that $\xis$ renders the integration operator polynomial-time computable.
			Assume a $\xis$ name $\varphi$ of a function $f$, an oracle for a box and a precision requirement $\sdone^n$ are given.
			Get approximations to the vertices of the box with precision $\sdone^{\flength\varphi(\sdone^n)+\lceil\lb(d)\rceil+1}$ and query $\varphi$ for a $2^{-n-1}$ approximation over this box.
			An easy triangle inequality argument shows that this is a valid approximation to the integral over the box.
		\end{proof}

		The result includes null sets:
		In this case $\LLL1\Omega$ only contains one element and the integration operator is the constant zero function.

	\subsection{Discontinuity}\label{sec:sub:discontinuity}

		Under reasonable assumptions, the singular representation is discontinuous.
		Since the proof of discontinuity is most naturally stated for the unit interval, we state a restricted version first:

		\begin{prop}[discontinuity]\label{resu:discontinuity norm}
			The singular representation $\xi_s$ is not continuous with respect to the norm topology on $\LLL1{[0,1]}$.
		\end{prop}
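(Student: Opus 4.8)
The plan is to produce an explicit point of discontinuity. Since $\B$ carries the product topology and is metrizable, it suffices to exhibit one $\xis$-name $\varphi_0$ of the zero function, together with a sequence of $\xis$-names $\varphi_k$ converging to $\varphi_0$ in $\B$, such that the represented functions $\xis(\varphi_k)=f_k$ stay bounded away from $\xis(\varphi_0)=0$ in $\norm{\cdot}_1$. The entire difficulty is then concentrated in finding functions $f_k$ that look like $0$ to the representation — their integrals over every fixed dyadic interval become negligible — while remaining far from $0$ in norm.

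Such functions are produced by rapid oscillation; concretely I would take $f_k(x):=\operatorname{sign}\!\left(\sin(2^k\pi x)\right)$, the $k$-th Rademacher function, which is $\pm1$-valued on the dyadic intervals of length $2^{-k}$. Three features make it suitable. First, $\norm{f_k}_1=1$ for every $k$, so the $f_k$ never approach $0$ in $\LLL{1}{[0,1]}$. Second, for any fixed dyadic interval $[\str a,\str b]$ whose endpoints have common denominator $2^{j}$, the value $\int_{\bin{\str a}}^{\bin{\str b}}f_k\dd\lambda$ vanishes exactly once $k\ge j+1$, since the interval is then a union of whole $\pm1$-periods; hence the dyadic $0$ is an admissible approximation of this integral at every precision for all large $k$. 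Third, $\norm{f_k}_\infty=1$ uniformly, so by \cref{item:lp} of Proposition~\ref{resu:classes with small moduli} (with $p=\infty$, $C=1$, $D=1$) the single function $n\mapsto n+1$ is a singularity modulus of every $f_k$ and of $0$. This last point neutralises the length constraint built into $\xis$.

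With these $f_k$ in hand I would assemble the names as follows. Fix an increasing $\rho:\omega\to\omega$ with $\rho(m)\ge m+1$ and large enough that a $2^{-n}$-accurate code of any number in $[-1,1]$ fits into $\rho(\length{w})$ symbols for every admissible input $w=\langle\str a,\str b,\sdone^n\rangle$ (a suitable linear $\rho$ works, since $\length{w}\ge n$). Padding every output to length exactly $\rho(\length{w})$ makes each name length-monotone with length function $\rho$, which is a singularity modulus of the represented function by the previous paragraph. Let $\varphi_0$ output the padded code of $0$ on every input; it is a valid name of the zero function. For $\varphi_k$, output that same padded code of $0$ on every input $w$ with $\int_{\bin{\str a}}^{\bin{\str b}}f_k\dd\lambda=0$, and a padded $2^{-n}$-approximation of the integral otherwise. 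By the second feature above, for each fixed $w$ one has $\varphi_k(w)=\varphi_0(w)$ as soon as $k$ exceeds the denominator exponent of $w$, so $\varphi_k\to\varphi_0$ in $\B$; yet $\norm{f_k-0}_1=1$ for all $k$, contradicting continuity.

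The step I expect to require the most care is the assembly in the last paragraph: one must simultaneously guarantee that each $\varphi_k$ is a genuine $\xis$-name — correct integral approximations and a length function that is a valid singularity modulus — and that its outputs coincide with those of $\varphi_0$ on each fixed input for all sufficiently large $k$. The tension is that product-topology convergence forces the outputs to stabilise input by input, while the definition of $\xis$ constrains both the numerical content and, through the length, the admissible output lengths. Choosing $L^\infty$-bounded $f_k$ (hence a uniform singularity modulus) and padding all outputs to the common length $\rho(\length{w})$ is precisely what lets these two requirements coexist; checking that the padding is always possible and that $\rho$ is genuinely a singularity modulus is the only fiddly part.
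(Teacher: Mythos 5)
Your proposal is correct and follows essentially the same route as the paper: the same rapidly oscillating $\pm1$-valued functions on dyadic intervals of length $2^{-k}$, the common singularity modulus obtained from the uniform $\LL\infty$-bound via Proposition~\ref{resu:classes with small moduli}, and names that agree with a name of the zero function on every fixed input for large $k$ while $\norm{f_k}_1=1$. The only (immaterial) difference is that you exploit the exact vanishing of the integrals over a fixed dyadic interval once $k$ exceeds its denominator exponent, whereas the paper instead bounds the integrals by $2^{-m}$ and observes that short inputs can only demand precision $2^{-k}$ with $k<m$, so that $0$ is an admissible output there.
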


		\begin{proof}
			\quad\vspace{-.4cm}

			\noindent
			\begin{minipage}{.55\textwidth}
				\hspace{1.1cm} Consider the sequence of functions on the unit interval defined by
				\[ f_m(x):= (-1)^{\min\{k\in \NN\mid k2^{-m}\geq x\}}. \]
				That is: Divide $[0,1]$ into $2^m$ equally sized intervals and let the function values alternate between constantly being $1$ an $-1$ respectively on these intervals	(compare \Cref{fig:fm}).
				The functions $f_m$ are \quad\quad\quad
				\vspace{-.3cm}
			\end{minipage}
			\hfill
			\begin{minipage}{.45\textwidth}
				\vspace{-.4cm}
				\centering
				\begin{tikzpicture}
					\draw (-.1,1) -- (.1,1);
					\node at (-.2,1) {$1$};
					\draw (-.1,-1) -- (.1,-1);
					\node at (-.35,-1) {$-1$};
					\draw[->,thick] (0,-1.2) -- (0,1.5);
					\draw[->,thick] (0,0) -- (4.5,0);
					\node[left] at (0,0) {$0$};
					\draw (4,.1) -- (4,-.1);
					\node at (4,-.35) {$1$};
					\draw[blue] (0,-1) -- (.4,-1);
					\draw (.4,.1) -- (.4,-.1);
					\node at (.6,-.35) {$2^{-m}$};
					\draw[blue,dotted] (.4,-1) -- (.4,1);
					\draw[blue] (.4,1) -- (.8,1);
					\draw[blue,dotted] (.8,-1) -- (.8,1);
					\draw[blue] (.8,-1) -- (1.2,-1);
					\draw[blue,dotted] (1.2,-1) -- (1.2,0);				
					\node at (2,-.25) {$\ldots$};
					\draw[blue,dotted] (3.2,0) -- (3.2,-1);
					\draw[blue] (3.2,-1) -- (3.6,-1);
					\draw[blue,dotted] (3.6,-1) -- (3.6,1);
					\draw[blue] (3.6,1) -- (4,1);
					\draw[blue,dotted] (4,1) -- (4,0);
				\end{tikzpicture}
				
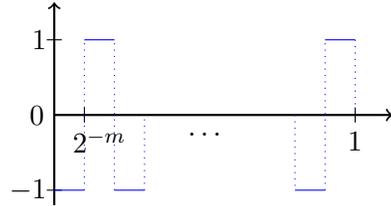
\captionof{figure}{The function $f_m$}\label{fig:fm}
				\vspace{.2cm}
			\end{minipage}
			bounded by $1$ in the norm $\norm\cdot_\infty$ and thus allow the common singularity modulus $n\mapsto n+1$ by Proposition~\ref{resu:classes with small moduli}.
			Observe that the integrals of $f_m$ over an interval is always smaller than the minimum of the length of the interval and $2^{-m}$.
			Thus, since from $\length{\langle \str a,\str b, 1^{k}\rangle}<\sdone^m$ it follows that $k<m$, it is possible to choose a name $\varphi_m$ of $f_m$ such that upon this input a string $\str c$ of length more than $\length{\langle \str a,\str b, \sdone^{k}\rangle}+1$ with $\bin{\str c}=0$ is returned.
			The sequence $\varphi_m$ converges to a name of the zero function in Baire-space.
			However, $\xis(\varphi_m) = f_m$ has norm $1$ for all $m$ and therefore does not converge to the zero function in norm.
			This proves discontinuity of $\xis$.
		\end{proof}

		\begin{thm}[discontinuity]\label{resu:discontinuity norm d}
			Whenever $\Omega\subset \RR^d$ is a bounded set with non-empty interior, the singular representation $\xis$ of $\Lone(\Omega)$ from Definition~\ref{def:singular representation} is discontinuous.
		\end{thm}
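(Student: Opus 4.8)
The plan is to reduce to the one-dimensional situation of Proposition~\ref{resu:discontinuity norm} by letting a fixed oscillation live in a single coordinate direction inside the interior of $\Omega$. Since $\Omega$ has non-empty interior, I first fix a non-degenerate closed box $B=[u,v]\subseteq\inte\Omega$ with $u=(u_1,\dots,u_d)$, $v=(v_1,\dots,v_d)$ and volume $V:=\prod_{i=1}^d(v_i-u_i)>0$. Writing $f_m$ for the alternating $\pm1$ function on $[0,1]$ from the proof of Proposition~\ref{resu:discontinuity norm}, I would set $g_m(x):=f_m\!\left(\frac{x_1-u_1}{v_1-u_1}\right)$ for $x\in B$ and $g_m(x):=0$ otherwise. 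Since $\abs{g_m}=1$ on $B$, each $g_m$ has $\norm{g_m}_1=V>0$; this uniform lower bound on the norm is what will eventually obstruct continuity, exactly as in the one-dimensional proof.

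Next I would verify that the $g_m$ share a common singularity modulus in the sense of Definition~\ref{def:singularity modulus d}, i.e.\ that their marginals $(g_m)_i$ do. Integrating out the coordinates $x_2,\dots,x_d$ shows that $(g_m)_1$ is just $f_m$ rescaled by the constant $\prod_{j\geq2}(v_j-u_j)$, hence essentially bounded by a constant independent of $m$; for $i\geq2$ the marginal $(g_m)_i$ additionally integrates over the oscillating first coordinate and is therefore even smaller, again $\norm\cdot_\infty$-bounded uniformly in $m$. Estimating the integral of an essentially bounded function over a short interval directly (this is the content of the second part of Proposition~\ref{resu:classes with small moduli} for $p=\infty$), I obtain a single constant $C$ for which $n\mapsto n+C$ is a singularity modulus of every $g_m$.

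The technical heart is the uniform smallness of the integrals over dyadic boxes. For any dyadic box $Q=[\str a,\str b]$ I would apply Fubini, integrating the first coordinate first: the oscillation of $f_m$ bounds that inner integral by $(v_1-u_1)2^{-m}$, and the remaining integrations contribute at most the edge lengths of $B$, so that $\bsize{\int_{Q}\tilde g_m\dd\lambda}\leq V2^{-m}$, uniformly in $Q$ and $m$. As in the one-dimensional proof this allows me to pick names $\varphi_m$ of $g_m$ whose length function equals the common modulus $n\mapsto n+C$ and which, on every query $\langle\str a,\str b,\sdone^k\rangle$ short enough that $k<m-\lceil\lb V\rceil$, return one fixed padded encoding of $0$ (valid since then $V2^{-m}<2^{-k}$), while on the remaining, longer queries they return genuine approximations chosen so as to keep $\varphi_m$ length-monotone.

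Because the output on each fixed query string is the same fixed encoding of $0$ for all sufficiently large $m$, the $\varphi_m$ converge in Baire space to the name returning $0$ everywhere, which is a $\xis$-name of the zero function; yet $\xis(\varphi_m)=g_m$ has norm $V>0$ for every $m$, so the images cannot converge to $0$ in $\Lone(\Omega)$, and $\xis$ is discontinuous. I expect the main obstacle to be the simultaneous bookkeeping in the definition of the $\varphi_m$: one must force the length function to realise the prescribed singularity modulus, keep the names length-monotone, and make the short-query outputs literally coincide across $m$ so that Baire-space convergence to the zero name actually holds. The uniform bound $\bsize{\int_Q\tilde g_m\dd\lambda}\leq V2^{-m}$, though only a routine Fubini estimate, is precisely what makes this bookkeeping go through and is therefore the crux of the argument.
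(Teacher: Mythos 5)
Your proposal is correct and follows essentially the same route as the paper's (sketched) proof: pick a box in the interior of $\Omega$, lift the alternating functions $f_m$ from Proposition~\ref{resu:discontinuity norm} into that box independently of the remaining coordinates, and rerun the one-dimensional name construction. You in fact supply more detail than the paper does (the Fubini bound $\bsize{\int_Q \tilde g_m\dd\lambda}\leq V2^{-m}$, the uniform boundedness of the marginals giving a common singularity modulus, and the bookkeeping for length-monotone names), and these details check out.
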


		\begin{proof}[sketch of the proof]
			Since the interior of $\Omega$ is non-empty, there exists a small box with edges parallel to the axis and dyadic endpoints completely included in $\Omega$.
			Lift the function sequence from the proof of Proposition~\ref{resu:discontinuity norm} to a box by assuming the functions to be independent of the additional variables.
			Scale this box and the functions to fit inside of $\Omega$.
			The arguments of the above proof still work for this new sequence and show discontinuity.
		\end{proof}

		This result can be strengthened to prove discontinuity of the singular representation with respect to the weak topology on $\Lone(\Omega)$.
		It is not known to the author if the final topology of $\xi_s$ coincides with any topology of $\Lone$ that has been considered before.

		Using $\xi_s$ it is possible to construct a weakest representation computably equivalent to the Cauchy representation, such that integration is polynomial time computable.
		However, this representation has very undesirable properties from a complexity theoretical point of view:
		While it renders many operations, like the metric,
                computable, only those that are already computable in
                bounded time with respect to the singular
                representation become computable in bounded time.
                
%!TEX root=Lp.tex

\section[Lp-spaces]{\texorpdfstring{$\Lp$}{Lp}-spaces}\label{sec:a second-order representation of lp}

		The previous chapter introduced the weakest representation $\xis$ of integrable functions such that the integration operator is polynomial-time computable and showed its discontinuity.
		However, the way the representation was introduced allows for straight forward generalizations:
		Like for the continuous functions the information was divided into a discrete part, the integrals over dyadic intervals, and a topological part: the singularity modulus.
		This section discusses a replacement for the singularity modulus which leads to a continuous representation.
		More precisely, for any $p$ a modulus is defined that exists if and only if the function is an element of $\LLL p\Omega$.

		First recall some basic facts about spaces of integrable functions: 
		Let $\lambda$ denote the Lebesgue measure of any dimension.
		In the following $\Omega$ denotes a bounded, measurable set.
		Recall that the space $\LLL p\Omega$ is the Banach space of equivalence classes of measurable functions up to equality almost everywhere such that
		\[ \norm f_p = \norm{f}_{p,\Omega} := \left(\int_\Omega\abs{f}^p\dd\lambda\right)^{\frac 1p} < \infty. \]
		And that for the case $p=\infty$ the norm $\norm\cdot_\infty$ is defined to be the essential supremum norm.
		If $\Omega$ is bounded with non-zero Lebesgue measure, then $\C(\Omega)\subsetneq \LLL p\Omega \subsetneq \LLL q\Omega$ whenever $1\leq q<p\leq\infty$.
		The inclusions are continuous.
		This can be seen using the following well known result from analysis:
		\begin{thm}[Hölder's Inequality]\label{thm:hoelder}
			For any measurable subset $\Omega\subseteq\RR^d$, any measurable functions $f$, $g$ on $\Omega$ and any $p\in[1,\infty]$ the inequality
			\[ \|fg\|_1 \leq \|f\|_p \|g\|_q \]
			holds. Where $q := \frac1{1-\frac1p}$ is the conjugate exponent of $p$ and $q =\infty$ if $p=1$.
		\end{thm}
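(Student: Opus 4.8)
The plan is to reduce the entire statement to the elementary pointwise estimate known as Young's inequality: for nonnegative reals $a,b$ and conjugate exponents $1<p,q<\infty$ with $\frac1p+\frac1q=1$ one has $ab\leq \frac{a^p}{p}+\frac{b^q}{q}$. Once this is available, everything else is bookkeeping around the degenerate cases together with a single normalization trick.

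First I would dispose of the cases in which the right-hand side is zero or infinite. If $\norm{f}_p=0$, then $f=0$ almost everywhere, hence $fg=0$ almost everywhere and both sides vanish; the case $\norm{g}_q=0$ is symmetric. If either norm is $+\infty$ the right-hand side is infinite and there is nothing to prove, so we may assume $0<\norm{f}_p,\norm{g}_q<\infty$. Next I would treat the endpoints $p=1$ (so $q=\infty$) and $p=\infty$ (so $q=1$) on their own, since Young's inequality is stated only for finite exponents. These follow directly from the definition of the essential supremum: for $p=1$ one has $\abs{g}\leq\norm{g}_\infty$ almost everywhere, whence $\int_\Omega\abs{fg}\dd\lambda\leq\norm{g}_\infty\int_\Omega\abs{f}\dd\lambda=\norm{f}_1\norm{g}_\infty$, and $p=\infty$ is obtained by symmetry.

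For the main range $1<p<\infty$ I would normalize. Set $F:=\abs{f}/\norm{f}_p$ and $G:=\abs{g}/\norm{g}_q$, so that $\norm{F}_p=\norm{G}_q=1$. Applying Young's inequality pointwise gives $F(x)G(x)\leq \frac{F(x)^p}{p}+\frac{G(x)^q}{q}$ for almost every $x\in\Omega$, and integrating over $\Omega$ yields
\[ \int_\Omega FG\,\dd\lambda\leq \frac1p\int_\Omega F^p\,\dd\lambda+\frac1q\int_\Omega G^q\,\dd\lambda=\frac1p+\frac1q=1. \]
Multiplying through by $\norm{f}_p\norm{g}_q$ and observing that the left-hand side equals $\norm{fg}_1/\bigl(\norm{f}_p\norm{g}_q\bigr)$ produces the claim.

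The only nontrivial ingredient, and therefore the step I would regard as the heart of the argument, is Young's inequality itself; the measure-theoretic points (measurability of $F,G$ and integrability of the pointwise bound) are routine. The cleanest route is convexity of the exponential, or equivalently concavity of the logarithm: for $a,b>0$ one writes $ab=\exp\!\left(\tfrac1p\log a^p+\tfrac1q\log b^q\right)\leq \tfrac1p a^p+\tfrac1q b^q$, the inequality being exactly the statement that the exponential of a convex combination of $\log a^p$ and $\log b^q$ is bounded by the corresponding convex combination of $a^p$ and $b^q$; the case $a=0$ or $b=0$ is immediate. I expect no genuine obstacle beyond invoking this convexity estimate correctly.
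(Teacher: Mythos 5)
Your proof is correct. Note, however, that the paper does not prove this statement at all: it is quoted as a ``well known result from analysis'' and used as a black box (its only role in the paper is via Corollary~\ref{cor:hoelder}). Your argument --- dispatching the degenerate cases where a norm is zero or infinite, handling the endpoints $p=1$ and $p=\infty$ directly from the definition of the essential supremum, and reducing the range $1<p<\infty$ by normalization to the pointwise Young inequality $ab\leq\frac{a^p}{p}+\frac{b^q}{q}$, itself obtained from convexity of the exponential --- is the standard textbook proof and is complete; in particular you correctly order the case analysis so that the $0\cdot\infty$ ambiguity on the right-hand side never arises.
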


		A corollary from this is particularly often useful for our purposes:
		\begin{cor}\label{cor:hoelder}
			For any measurable function $f$ on a measurable set $\Omega$ it holds that
			\[ \int_{\Omega} |f| d\lambda \leq \lambda(\Omega)^{1-\frac1p}\|f\|_{p}. \]
		\end{cor}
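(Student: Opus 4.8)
The plan is to obtain this directly from Hölder's Inequality (Theorem~\ref{thm:hoelder}) by choosing the second factor to be the constant function. First I would write the left-hand side as an $\LL1$-norm of a product, namely $\int_\Omega \abs f\dd\lambda = \norm{\abs f\cdot \seq 1}_1$, where $\seq 1$ denotes the function on $\Omega$ that is constantly equal to $1$. Applying Theorem~\ref{thm:hoelder} to the pair $(\abs f, \seq 1)$ then yields
\[ \int_\Omega \abs f\dd\lambda = \norm{\abs f\cdot \seq 1}_1 \leq \norm{f}_p\,\norm{\seq 1}_q, \]
where $q = \frac1{1-1/p}$ is the conjugate exponent of $p$.

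It remains to identify the second factor. Since $\seq 1$ is constantly $1$, a short computation gives $\norm{\seq 1}_q = \left(\int_\Omega 1^q\dd\lambda\right)^{1/q} = \lambda(\Omega)^{1/q}$, and the exponents satisfy $\frac1q = 1 - \frac1p$. Substituting this back into the displayed estimate immediately produces the claimed bound $\int_\Omega \abs f\dd\lambda \leq \lambda(\Omega)^{1-\frac1p}\norm{f}_p$.

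There is no genuine obstacle here; the only thing requiring a little care is the behaviour at the two endpoints of the range of $p$, which I would check separately to confirm the formula stays correct. For $p=\infty$ the conjugate exponent is $q=1$, so $\norm{\seq 1}_q = \lambda(\Omega)$ and the estimate reduces to the familiar $\int_\Omega \abs f\dd\lambda \leq \lambda(\Omega)\norm f_\infty$, consistent with the convention $\frac1\infty = 0$ used elsewhere in this section. For $p=1$ one has $q=\infty$ and $\norm{\seq 1}_\infty = 1$, so the exponent $1-\frac1p$ vanishes and the inequality degenerates to the trivial equality $\int_\Omega\abs f\dd\lambda = \norm f_1$. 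In both boundary cases the stated inequality holds, so the single application of Hölder's Inequality covers the whole range $p\in[1,\infty]$.
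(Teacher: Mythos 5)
Your proof is correct and is exactly the intended derivation: the paper states this as an immediate consequence of Theorem~\ref{thm:hoelder} without writing out a proof, and the implicit argument is precisely the application of Hölder's Inequality to $\abs f$ and the constant function $1$, with $\norm{1}_q=\lambda(\Omega)^{1-\frac1p}$. Your check of the endpoint cases $p=1$ and $p=\infty$ is a welcome extra precaution but introduces nothing beyond the paper's (implicit) route.
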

		For $\Omega =[0,1]$, $f:= \abs{g}^r$ and $p:=\frac sr$ the above proves that $\|g\|_r\leq \|g\|_s$ whenever $r<s$.

	\subsection[The Lp-modulus]{The \texorpdfstring{$\Lp$}{Lp}-modulus}
		In the following let $1\leq p <\infty$.
		Recall that $\tilde f$ denotes the extension of $f$ to all of $\RR^d$ by zero.
		For $h\in\RR^d$ the shift operator $\tau_h$ is defined by $(\tau_h f)(x) := f(x+h)$.

		\begin{defi}\label{def:lp-moduli}
			A function $\mu:\omega\to \omega$ is called an \demph{$\Lp$-modulus} of $f\in\Lp(\Omega)$ if
			\[ \abs{h}\leq 2^{-\mu(n)} \quad\Rightarrow\quad \| \tilde f - \tau_{h}\tilde f\|_{p}< 2^{-n}, \]
			and  $\mu(n)\neq 0 \Rightarrow \mu(n+1)>\mu(n)$ (i.e. it is strictly increasing whenever non-zero).
		\end{defi}
		Due to the assumption $p<\infty$ any $\Lp$-function has an $\Lp$-modulus (see for instance \cite[Lemma 4.3]{MR2759829}).

		Whenever $1\leq q \leq p<\infty$, an $\LL q$-modulus can be obtained from an $\LL p$-modulus by shifting with a constant.
		Let $f\in\C(\Omega)$ be such that its extension $\tilde f$ is continuous.
		In this case a modulus of continuity $\mu$ of $f$ is also a modulus of $\tilde f$ and can be converted into an $\Lp$-modulus of $f$:
		Whenever $|h|\leq 2^{-\mu(n)}$, then $|x-(x+h)|\leq 2^{-\mu(n)}$ and therefore
		\begin{align*}
			\|\tilde f-\tau_h\tilde f\|_p & = \left(\int_{\RR^d} |\tilde f(x)-\tilde f(x+h)|^p\dd x\right)^{\frac1p} %\\
			%& \leq 2^{-n} \lambda(\Omega\cup(\Omega+h))^{\frac1p} \\
			 < 2^{-n+\frac{1+\lb(\lambda(\Omega))}p}.
		\end{align*}
		Thus, $n\mapsto \mu\left(n+\lceil(1+\lb(\lambda(\Omega)))/{p}\rceil\right)$ is an $\Lp$-modulus of $f$.
		If $f$ can not be continuously extended, additional information about the function and the domain is needed to obtain an $\Lp$-modulus from a modulus of continuity (compare Lemma~\ref{resu:from cont. mod. to Lp-mod.}).

		The modulus of continuity does not contain any information about the norm of a function as it does not change under shift with a constant function.
		In contrast to that a norm-bound can be deduced from an $\Lp$-modulus.
		Recall the diameter of a set:
		\[ \diam(\Omega) :=\sup\{\abs{x-y}_\infty\mid x,y\in\Omega\}, \]
		where $\abs{\cdot}_\infty$ denotes the supremum norm on $\RR^d$.
		\begin{lem}[norm estimate]\label{resu:norm bound from modulus}
			Whenever $\mu$ is an $\Lp$-modulus of $f\in\LLL{p}{\Omega}$, then
			\[ \|f\|_p < \lceil{\diam(\Omega)}\rceil2^{\mu(0)-\frac1p}. \]
		\end{lem}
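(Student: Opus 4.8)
The plan is to exploit the fact that $\tilde f$ is supported inside $\Omega$, so shifting $\tilde f$ by more than $\diam(\Omega)$ along a single coordinate moves its support entirely off itself; the difference then has norm exactly $2^{1/p}\norm{f}_p$, and such a large shift can be reached by stacking many small shifts, each of which is controlled by the modulus at level $0$. Throughout I would use that extension by zero is norm-preserving, $\norm{\tilde f}_p=\norm f_p$, and that each translation $\tau_h$ is an $\Lp$-isometry.

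First I would fix the coordinate vector $e_1=(1,0,\ldots,0)$ and set $N:=\lceil\diam(\Omega)\rceil$, so that $h^\ast:=N e_1$ has the property that $\supp(\tau_{h^\ast}\tilde f)=\supp(\tilde f)-h^\ast$ is disjoint from $\supp(\tilde f)$ up to a null set: the first coordinate ranges over $\Omega$ within an interval of length at most $\diam(\Omega)\le N$, so the translate is pushed off by at least the width of that interval, leaving at most a boundary hyperplane of measure zero. Because at almost every point at most one of $\tilde f$ and $\tau_{h^\ast}\tilde f$ is nonzero, the integrand of $\abs{\tilde f-\tau_{h^\ast}\tilde f}^p$ splits pointwise, and integrating gives
\[ \norm{\tilde f-\tau_{h^\ast}\tilde f}_p=\bigl(\norm{\tilde f}_p^p+\norm{\tau_{h^\ast}\tilde f}_p^p\bigr)^{1/p}=2^{1/p}\norm f_p. \]

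Next I would decompose $h^\ast$ into equal small steps. Writing $K:=N\cdot 2^{\mu(0)}$, which is a positive integer since $\mu(0)\in\omega$, and $h_k:=k\,2^{-\mu(0)}e_1$ for $0\le k\le K$, one has $h_K=h^\ast$ and the telescoping identity $\tau_{h^\ast}\tilde f-\tilde f=\sum_{k=0}^{K-1}\tau_{h_k}\bigl(\tau_{2^{-\mu(0)}e_1}\tilde f-\tilde f\bigr)$. Each summand is an isometric image of $\tau_{2^{-\mu(0)}e_1}\tilde f-\tilde f$, and since $\abs{2^{-\mu(0)}e_1}=2^{-\mu(0)}$ the defining property of the $\Lp$-modulus at $n=0$ bounds its norm strictly by $2^0=1$. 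The triangle inequality then yields
\[ 2^{1/p}\norm f_p=\norm{\tau_{h^\ast}\tilde f-\tilde f}_p< K=N\,2^{\mu(0)}, \]
and dividing by $2^{1/p}$ gives $\norm f_p<\lceil\diam(\Omega)\rceil\,2^{\mu(0)-\frac1p}$, as claimed.

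The only genuinely delicate point is the disjointness of the shifted supports: I must verify that a shift of size $N\ge\diam(\Omega)$ along one axis separates $\supp(\tilde f)$ from its translate up to measure zero, so that the cross term in $\abs{\tilde f-\tau_{h^\ast}\tilde f}^p$ vanishes and the clean factor $2^{1/p}$ emerges. Everything else is the routine telescoping-plus-triangle-inequality estimate together with the bookkeeping that exactly $N\,2^{\mu(0)}$ steps of length $2^{-\mu(0)}$ are needed to traverse a distance of $N$.
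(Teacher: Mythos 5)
Your proof is correct and follows essentially the same route as the paper's: separate $\tilde f$ from its translate by $\lceil\diam(\Omega)\rceil$ along one axis to extract the factor $2^{1/p}$, then telescope that shift into $\lceil\diam(\Omega)\rceil 2^{\mu(0)}$ steps of length $2^{-\mu(0)}$, each bounded by $1$ via the modulus at $n=0$. The paper's version is just a terser rendering of exactly this argument.
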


		\begin{proof}
			Fix some unit vector $e\in\RR^d$.
			The intersection of $\Omega +\lceil\diam(\Omega)\rceil e$ and $\Omega$ has zero Lebesgue measure.
			Thus:
			\begin{align*}
				2^{\frac1p}\|f\|_p & \leq \sum_{i=1}^{\lceil\diam(\Omega)\rceil2^{\mu(0)}}\|\tilde f - \tau_{2^{-\mu(0)} e}\tilde f\|_p %\\
				%&
				<% \lceil\diam(\Omega)\rceil2^{\mu(0)}2^0 =
				\lceil\diam(\Omega)\rceil2^{\mu(0)},
			\end{align*}
			which proves the assertion.
		\end{proof}

	\subsection[Representing Lp]{Representing \texorpdfstring{$\Lp$}{Lp}}
	
		Let $\Omega\subseteq \RR^d$ be a bounded measurable set of non-zero measure.

		\begin{defi}\label{d:xip}
			Define the \demph{second-order representation $\xip$ of $\LLL p{\Omega}$}: A length-monotone string function $\varphi$ is a name of $f\in \LLL p{\Omega}$ if and only if for any strings $\str a, \str b\in\albe^*$ and $n\in \NN$
			\[ \abs{\int_{[\str a,\str b]} \tilde f\dd\lambda - \bin{\varphi(\langle \str a,\str b, \sdone^n\rangle)}} < 2^{-n}, \]
			and $\flength \varphi$ is an $\Lp$-modulus of $f$.
		\end{defi}

		Again, the vector space operations of $\Lp(\Omega)$ are easily shown to be polynomial-time computable with respect to this representation.
		Recall that for bounded $\Omega$ and $p\geq q$ it holds that $\LLL p\Omega\subseteq\LLL q\Omega$.
		Since an $\LL q$-modulus can be obtained from an $\LL p$-modulus by shifting with a constant, this inclusion is polynomial-time computable.
		The partial inverse of this inclusion is discontinuous and therefore not computable.
		To compare $\xip$ to the singular representation from Definition~\ref{def:singular representation d} note:

		\begin{prop}\label{resu: from lp-modulus to singularity modulus}
			Let $\mu$ be an $\Lp$-modulus of a function.
			Then
			\[ n\mapsto \mu(n+1+ \lb(\lambda(\Omega))+(d-1)\lb(\diam(\Omega))) \]
			is a singularity modulus of the function.
		\end{prop}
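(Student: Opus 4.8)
The plan is to unfold the multidimensional Definition~\ref{def:singularity modulus d}: it suffices to prove that the claimed function
\[ \nu(n):=\mu(n+1+\lb(\lambda(\Omega))+(d-1)\lb(\diam(\Omega))) \]
is a one-dimensional singularity modulus (Definition~\ref{def:singularity modulus}) of each of the functions $f_i$, and by symmetry I would treat only $i=d$. Writing points of $\RR^d$ as $(y,t)$ with $y\in\RR^{d-1}$ and $t\in\RR$, and abbreviating $g:=\tilde f-\tau_{he_d}\tilde f$, the task reduces to showing $\abs{\int_x^{x+h}f_d\dd\lambda}<2^{-n}$ whenever $\abs h\le 2^{-\nu(n)}$.

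The heart of the argument is the identity
\[ \int_x^{x+h}f_d\dd\lambda=\int_{\RR^{d-1}\times[x,\infty)}g\dd\lambda, \]
which rewrites a thin-slab integral of $f$ as a shift difference. First, $f_d(t)=\int_{\RR^{d-1}}\tilde f(y,t)\dd y$, so by Fubini $\int_x^{x+h}f_d=\int_{\RR^{d-1}}\big(\int_x^{x+h}\tilde f(y,t)\dd t\big)\dd y$. Since $\Omega$ is bounded we have $\LLL p\Omega\subseteq\LLL1\Omega$ and hence $\tilde f\in\LLL1{\RR^d}$; thus for almost every $y$ the section $\tilde f(y,\cdot)$ lies in $\LLL1\RR$, and for such an integrable section the elementary identity $\int_x^{x+h}\phi=\int_x^\infty(\phi-\tau_h\phi)$ holds (for $h\ge 0$; for $h<0$ one integrates towards $-\infty$). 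Applying this sectionwise and integrating in $y$ yields the displayed identity.

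From the identity and monotonicity of the integral,
\[ \abs{\int_x^{x+h}f_d\dd\lambda}\le\int_{\RR^{d-1}\times[x,\infty)}\abs g\dd\lambda\le\norm g_1. \]
Because $\supp g\subseteq\Omega\cup(\Omega-he_d)$ has measure at most $2\lambda(\Omega)$, Hölder's inequality in the form of Corollary~\ref{cor:hoelder} gives $\norm g_1\le(2\lambda(\Omega))^{1-\frac1p}\norm g_p$. The defining property of the $\Lp$-modulus $\mu$ (Definition~\ref{def:lp-moduli}) then supplies $\norm g_p=\norm{\tilde f-\tau_{he_d}\tilde f}_p<2^{-m}$ as soon as $\abs h\le 2^{-\mu(m)}$. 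Taking $m:=n+1+\lb(\lambda(\Omega))+(d-1)\lb(\diam(\Omega))$ and absorbing the coefficient by the crude estimate $2\lambda(\Omega)\le 2\lambda(\Omega)\diam(\Omega)^{d-1}$ followed by $z^{1-\frac1p}\le z$ for $z\ge 1$, the product drops strictly below $2^{-n}$; since $\nu(n)=\mu(m)$, this is exactly the required bound. Strict monotonicity of $\mu$ where it is nonzero transfers to $\nu$, as $\nu$ is a precomposition of $\mu$ with the shift $n\mapsto n+m$, so $\nu$ is a bona fide singularity modulus.

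The one genuinely delicate point I expect is the slab identity itself, together with the Fubini and integrability bookkeeping for the zero-extension over the unbounded half-space and the handling of both signs of $h$; everything after it is routine estimation. I would also remark that the global Hölder bound really only needs the factor $2\lambda(\Omega)$, so the summand $(d-1)\lb(\diam(\Omega))$ in the statement is a safe, non-tight overestimate—harmless because enlarging the shift only enlarges the modulus, and it may be arranged to be a nonnegative integer after rounding and, if necessary, after replacing $\Omega$ by a containing box so that $\diam(\Omega)\ge 1$.
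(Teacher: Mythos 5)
Your proof is correct and follows essentially the same route as the paper's: rewrite the slab integral $\int_x^{x+h}f_i$ as a difference of tail integrals (equivalently, as an integral of $\tilde f-\tau_{he_i}\tilde f$), bound it by an $\Lone$-norm, pass to the $\Lp$-norm via Corollary~\ref{cor:hoelder}, and invoke the $\Lp$-modulus. The only cosmetic difference is that you apply H\"older once in $\RR^d$ to $\tilde f-\tau_{he_d}\tilde f$ directly, while the paper first forms the one-dimensional function $f_i-\tau_hf_i$ and applies H\"older twice (in the transverse variables and then in one dimension), which is where its $\diam(\Omega)^{(d-1)(1-1/p)}$ factor comes from --- consistent with your closing observation that the $(d-1)\lb(\diam(\Omega))$ summand is slack.
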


		\begin{proof}
			Let $f\in\Lp(\Omega)$ be a function and let $\mu$ be an $\Lp$-modulus of $f$.
			Recall from Definition~\ref{def:singularity modulus d} that $f_i$ denotes the function where all but the $i$-th variable has been integrated over.
			Apply the transformation rule and the version of Hölder's inequality from Corollary~\ref{cor:hoelder} to get
			\begin{align*}
				\abs{\int_x^{x+h}f_i(t)\dd t} &\leq \abs{\int_x^\infty f_i(t) \dd t - \int_{x+h}^\infty f_i(t) \dd t} \leq \int_{-\infty}^\infty\abs{f_i-\tau_{h} f_i}\dd\lambda \\
				& \leq (2\lambda(\Omega))^{1-\frac1p}\norm{f_i-\tau_{h} f_i}_p \leq (2\lambda(\Omega))^{1-\frac 1p}\diam(\Omega)^{(d-1)-\frac{d-1}p}\bnorm {\tilde f-\tau_{h e_i} \tilde f}_p.
			\end{align*}
			Since $\abs{h}=\abs{h e_i}_\infty$, the assertion follows from $\mu$ being an $\Lp$-modulus of $f$.
		\end{proof}

		This proposition implies that $\xis$ is polynomial-time reducible to $\xip$.
		Thus:
		\begin{thm}[efficientcy of integration]\label{resu:polynomial-time computablity of integration}
			$\xip$ renders the restriction of the integration operator from \cref{the integration operator} to $\LLL p\Omega\times \Omega^2$ polynomial-time computable.
		\end{thm}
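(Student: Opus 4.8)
The plan is to deduce the statement from the already-proven efficiency of integration for the singular representation, using Proposition~\ref{resu: from lp-modulus to singularity modulus} as the sole new ingredient. The crucial observation is that the integral-approximation clause in Definition~\ref{d:xip} is word-for-word the one in Definition~\ref{def:singular representation d}: a $\xip$-name and a $\xis$-name carry the very same dyadic data $\bin{\varphi(\langle\str a,\str b,\sdone^n\rangle)}$ about $\int_{[\str a,\str b]}\tilde f\dd\lambda$, and the two representations differ only in what is demanded of the length $\flength\varphi$ — an $\Lp$-modulus for $\xip$, a singularity modulus for $\xis$. Proposition~\ref{resu: from lp-modulus to singularity modulus} converts the former into the latter by the linear shift $n\mapsto n+c$ with $c:=1+\lb(\lambda(\Omega))+(d-1)\lb(\diam(\Omega))$, a constant fixed by $\Omega$ and $d$.

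First I would make precise the assertion, stated informally before the theorem, that $\xis$ is polynomial-time reducible to $\xip$, i.e.\ that $\xip$-names can be translated to $\xis$-names in polynomial time. Given a $\xip$-name $\varphi$ of $f$, the translating machine copies the integral approximation $\varphi(\langle\str a,\str b,\sdone^n\rangle)$ unchanged, which by the shared clause is already a valid $2^{-n}$-approximation of $\int_{[\str a,\str b]}\tilde f\dd\lambda$. What has to be arranged is that the length of the output name is a singularity modulus of $f$. Because $\varphi$ is length-monotone, one query yields $\flength\varphi(\sdone^{m})=\length{\varphi(\sdone^m)}$, so $n\mapsto\flength\varphi(\sdone^{n+c})$ is computable in polynomial time and, by Proposition~\ref{resu: from lp-modulus to singularity modulus}, is a singularity modulus. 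Exploiting that the dyadic encoding $\bin\cdot$ tolerates padding — a number is read off a short beginning segment, so appending further symbols does not change the decoded value — I would pad the copied outputs so that the length of the new name realizes this shifted modulus while its decoded integral data stay intact and the name remains length-monotone.

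With this reduction in hand the theorem is immediate by composition. Theorem~\ref{resu:minimality of the singular representation d} provides a polynomial-time realizer of the integration operator from \cref{the integration operator} relative to $\xis$; prepending the above translation produces a polynomial-time realizer relative to $\xip$, since composites of polynomial-time computable functions between represented spaces are again polynomial-time computable. As $\Omega$ is bounded we have $\LLL p\Omega\subseteq\LLL1\Omega$, so this composite is exactly a realizer of the operator restricted to $\LLL p\Omega\times\Omega^2$, which is the claim.

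I expect the only genuine obstacle to be the length-engineering of the second paragraph: one must verify that the padding produces a name whose length function is precisely the shifted modulus $n\mapsto\flength\varphi(\sdone^{n+c})$ (or at worst a valid singularity modulus dominating it), that this is carried out uniformly and within a second-order polynomial in $\flength\varphi$, and that it never disturbs the decoded approximations. An equally clean alternative that sidesteps the explicit construction of the translated name is to run the $\xis$-integration algorithm from the proof of Theorem~\ref{resu:minimality of the singular representation d} directly on $\varphi$, merely substituting $\flength\varphi(\sdone^{n+c})$ for the singularity modulus it reads off; the same triangle-inequality estimate then goes through verbatim. Either way the mathematical content sits entirely in Proposition~\ref{resu: from lp-modulus to singularity modulus}, and the remainder is routine.
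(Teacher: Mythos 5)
Your proposal is correct and follows essentially the same route as the paper: the paper likewise pads the length of a $\xip$-name to $n\mapsto\flength{\varphi}(n+N)$ for an integer $N$ bounding $1+\lb(\lambda(\Omega))+(d-1)\lb(\diam(\Omega))$, invokes Proposition~\ref{resu: from lp-modulus to singularity modulus} to see this yields a polynomial-time translation to $\xis$, and concludes via the minimality Theorem~\ref{resu:minimality of the singular representation d}. The only cosmetic difference is that you should take an integer upper bound for your shift constant $c$ rather than the (possibly non-integral) expression itself.
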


		\begin{proof}
			Let $N$ be a natural number that bounds $1+\lb(\lambda(\Omega))+(d-1)\lb(\diam(\Omega))$.
			By the previous proposition the mapping padding the length of a $\xip$-name $\varphi$ of some function to have length $n\mapsto \length{\varphi}(n+N)$ is a polynomia-time translation to $\xis$.
			The assertion now follows from the minimality of the singular representation from Theorem~\ref{resu:minimality of the singular representation d}.
		\end{proof}

	\subsection{Equivalence to the Cauchy representation}\label{sec:sub:equivalence to the standard representation}
		
		Recall from Definition~\ref{def:metric spaces} that to obtain a Cauchy representation of $\LLL p{\Omega}$ it is sufficient to fix a dense subsequence.
		The obvious choice for this sequence are the dyadic step functions:
		Call a function a dyadic step function, if it is a dyadic linear combination of characteristic functions of sets of the form $[\str a,\str b]$ with $\bin{\str a}, \bin{\str b} \in \DD^d$.
		An enumeration of this set is cumbersome to write down, but there is one such that both the enumeration and its inverse are polynomial-time computable if a dyadic step function is identified with the list of the boxes on which it is non-zero and the corresponding values.
		The corresponding Cauchy representation is well established at least for investigating computability in $\LLL p{[0,1]}$ (cf. \cite{MR1005942,MR1724414,MR1694445} and many more).

		The goal of this section is to prove the following:

		\begin{thm}[equivalence to the Cauchy representation]\label{resu:equivalence to the standard representation}
			$\xip$ is computably equivalent to the Cauchy representation of $\LLL p\Omega$.
		\end{thm}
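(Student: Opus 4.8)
The plan is to prove the equivalence by establishing two separate reductions, since computable equivalence of representations means computable translations exist in both directions. Let $\xi_{\M}$ denote the Cauchy representation of $\LLL p\Omega$ with respect to the dyadic step functions as the dense sequence. I must produce a computable realizer of the identity from $(\LLL p\Omega, \xi_{\M})$ to $(\LLL p\Omega, \xip)$ and one in the reverse direction.

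First I would translate $\xip$ to $\xi_{\M}$, i.e.\ from an $\xip$-name produce a Cauchy name. Given an $\xip$-name $\varphi$ of $f$, I know $\flength\varphi$ is an $\Lp$-modulus, and I have oracle access to approximate integrals over all dyadic boxes. The idea is to approximate $f$ in $\Lp$-norm by a dyadic step function built from cell-averages of $f$. Concretely, for a target precision $2^{-n}$ I choose a fine dyadic grid of mesh $2^{-\mu(n+c)}$ (where $\mu=\flength\varphi$ and $c$ absorbs dimension and $\diam(\Omega)$ constants) and define a step function whose value on each grid cell is the average of $f$ over that cell, computed from the queried integral divided by the cell's measure. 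The $\Lp$-modulus controls precisely how close such a locally averaged step function is to $f$: on each cell the $\Lp$-deviation of $f$ from its average is bounded using that shifts by at most the cell diameter move $f$ by less than $2^{-(n+c)}$ in $\Lp$-norm. Summing these local errors and adding the dyadic rounding error yields a dyadic step function within $2^{-n}$ of $f$; emitting its index gives the Cauchy name. The main technical estimate here is bounding $\norm{f - S_n}_p$ where $S_n$ is the cell-average step function, using the $\Lp$-modulus together with Jensen/Hölder on each cell.

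For the reverse translation, from a Cauchy name to an $\xip$-name, I am given a sequence of dyadic step functions $g_k$ with $\norm{f-g_k}_p < 2^{-k}$ and must produce (a) approximate box-integrals of $f$ and (b) an $\Lp$-modulus of $f$ as the length of the output name. For the integrals, I use that $\abs{\int_B(\tilde f-\tilde g_k)\dd\lambda}\le \lambda(B)^{1-1/p}\norm{f-g_k}_p$ by Corollary~\ref{cor:hoelder}, so integrating the explicitly-known step function $g_k$ over the box for large enough $k$ yields any desired precision. The harder ingredient is the modulus: I need to extract an $\Lp$-modulus of $f$ from the data. Since $\norm{\tilde f-\tau_h\tilde f}_p \le \norm{\tilde f-\tilde g_k}_p + \norm{\tilde g_k-\tau_h\tilde g_k}_p + \norm{\tau_h(\tilde g_k-\tilde f)}_p$, and the shift is an $\Lp$-isometry, I can bound the modulus of $f$ in terms of the explicitly computable $\Lp$-modulus of the step function $g_k$ plus twice the approximation error $2^{-k}$. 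A dyadic step function has an easily computable $\Lp$-modulus (it is piecewise constant, so one need only control mass near the finitely many jump hyperplanes), so for each $n$ I locate a $k$ with $2^{-k}$ small enough and compute a shift threshold making the step-function term small; this produces a valid $\Lp$-modulus of $f$. I must then pad the output to be length-monotone and to have this modulus as its length, which is routine given that all encodings admit arbitrarily long codes.

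\textbf{The main obstacle I anticipate} is the reverse direction's modulus extraction, specifically making the argument \emph{uniform and computable} rather than merely existential. Proving that $f$ \emph{has} some $\Lp$-modulus is standard (Definition~\ref{def:lp-moduli} already grants this for $p<\infty$), but here I must uniformly compute, from the Cauchy data alone, an explicit function that provably is an $\Lp$-modulus of $f$ and arrange it to coincide with the length of a length-monotone name. The delicate points are: controlling the $\Lp$-modulus of each dyadic step function $g_k$ effectively (the number and total measure of the jump cells can grow with $k$, so the threshold must be chosen carefully so the step-function shift term stays below the target); ensuring the resulting modulus satisfies the strict-monotonicity growth condition of Definition~\ref{def:lp-moduli}; and the length-monotone padding. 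The forward direction and the integral computations are comparatively routine applications of Corollary~\ref{cor:hoelder} and the definition of the $\Lp$-modulus.
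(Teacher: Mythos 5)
Your proposal is correct and follows essentially the same route as the paper: one direction builds a dyadic step function from queried cell averages and bounds $\norm{f-S_n}_p$ via the $\Lp$-modulus (the paper does this through the intermediate continuous approximations $f_m$ of Definition~\ref{def:continuous approximations}, a two-scale version of your single-scale Jensen/Hölder estimate), and the other direction extracts an $\Lp$-modulus of $f$ from the explicitly computable modulus of an approximating step function (measure of the jump set) plus twice the Cauchy error, exactly as in the paper's first half. The obstacles you flag (uniformity of the modulus extraction, the growth condition, padding) are precisely the points the paper handles by recomputing and monotonizing the candidate values for all $n\le\length{\str c}$.
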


		One of the translations is easy and specified now.
		The other direction is more complicated and postponed until the end of the chapter.

		\begin{proof}[Proof that the Cauchy representation translates to $\xip$]\footnote{This proof was considerably simplified and strengthened thanks to a comment of an anonymous referee.}
			An oracle Turing machine that translates a name $\varphi$ of a function $f$ in the Cauchy representation into a $\xip$-name can be specified as follows:
			Given $\varphi$ as oracle and a string $\str c$ as input set $n:=\length{\str c}$.
			The machine obtains a valid value $\mu(n)$ of an $\Lp$-modulus of $f$ as follows:
			Let $f_{n+2}$ be the function encoded by $\varphi(n+2)$.
			Since this function is encoded as a list of the boxes it does not vanish on and its values on these boxes, the machine can obtain a bound $2^k$ on the number of boxes, $2^l$ of their diameters and $2^m$ of the values.
			Note that a dyadic step function that is defined as a linear combination of $2^k$ characteristic functions on sets of size $2^l$ can, when shifted by $y$, at most differ from the original function on a set of size $d\cdot\abs y_\infty\cdot2^{(d-1)l+k}$.
			Thus, since the difference can be majorized by $2^{m+1}$ on the set where it is nonzero, get
			\[ \sup_{\abs y_\infty \leq h}\|f_{n+2} - \tau_yf_{n+2}\|_p\leq 2^{m+1}\cdot\left(d\cdot h\cdot2^{(d-1)l}2^k\right)^{\frac 1p}. \]
			This means that $r:=\lceil p\rceil(d-1)l+k+\lceil\lb(d)\rceil+\lceil p\rceil(m+n+2)$ is a valid value of an $\Lp$-modulus of $f_{n+2}$ in $n+1$.
			Now, whenever $\abs y_\infty \leq 2^{-r}$ then
			\begin{align*}
				\| f - \tau_y f\|_p  & \leq \|f- f_{n+2}- \tau_y f + \tau_y  f_{n+2} \|_p +\norm{f_{n+2}-\tau_y f_{n+2}}_p \\
				& \leq 2\|f - f_{n+2}\|_p +\norm{f_{n+2} - \tau_y f_{n+2}}_p < 2^{-n}.
			\end{align*}
			Thus, $r$ is indeed a candidate for a value of an $\Lp$-modulus of $f$ in $n$.
			By repeating the procedure for all values of $n$ smaller than $\length{\str c}$ and increasing $r$ if necessary, the machine computes a value $\mu(\length{\str c})$ of an $\Lp$-modulus of $f$.

			Next, the machine checks if the input string is of the form $\str c = \langle \str a, \str b,\sdone^n\rangle$.
			If it is, it computes approximations of the integrals by returning the integrals of a dyadic step function which approximates the function accurately enough in $\Lp$.
			Before returning it, it pads the encoding of that approximation to length at least $\mu(\length{\str c})$.
			If the input string $\str c$ is not of the form $\langle \str a,\str b,\sdone^{n}\rangle$, the machine returns $\sdone^{\mu(\length{\str c})}$ (in this case only the length is relevant).
		\end{proof}
		The above proof can be checked to actually construct a polynomial-time reduction.

		The basic idea for the other direction is to approximate the function from $\LLL p\Omega$ by step functions where the values are the integrals over boxes.	
		For easier notation write
		\[ \cell x m :=  x +[-2^{-m-1},2^{-m-1}]^d. \]
		Note, that the Lebesgue measure of these sets is given by $\lambda(\cell xm) = 2^{-dm}$.

		\begin{defi}\label{def:continuous approximations}
			Let $f\in \Lp$ be a function.
			Define the \demph{sequence $(f_m)_{m\in\NN}$ of continuous approximations} to $f$ by
			\[ f_m(x) := 2^{dm}\int_{\cell xm}\tilde f \dd\lambda. \]
		\end{defi}

		A modulus of continuity of $f_m$ can be obtained from an $\Lp$-modulus of $f$:
		\begin{lem}[continuity]\label{resu:modulus of the continuous approximations}
			Whenever $\mu$ is an $\Lp$-modulus of $f\in \Lp$, the function $n\mapsto\mu(n+\lceil\frac dp\rceil m)$ is a modulus of continuity of $f_m$.
		\end{lem}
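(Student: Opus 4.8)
The plan is to verify directly the two defining clauses of a modulus of continuity from Definition~\ref{def:modulus of continuity} for the function $\nu(n):=\mu(n+\lceil\frac dp\rceil m)$. The growth clause is inherited for free: since $n\mapsto n+\lceil\frac dp\rceil m$ is increasing, whenever $\nu(n)=\mu(n+\lceil\frac dp\rceil m)\neq 0$ the corresponding value of $\mu$ is non-zero, so $\mu(n+1+\lceil\frac dp\rceil m)>\mu(n+\lceil\frac dp\rceil m)$, i.e. $\nu(n+1)>\nu(n)$. Hence the real content is the oscillation bound, and for that the key idea is to rewrite the difference of the two local averages as a single shifted difference that the $\Lp$-modulus controls.

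First I would carry out a change of variables. Writing $Q:=[-2^{-m-1},2^{-m-1}]^d$, so that $\cell xm=x+Q$ and $\lambda(Q)=2^{-dm}$, translation invariance of Lebesgue measure gives $\int_{\cell xm}\tilde f\dd\lambda=\int_Q\tilde f(x+t)\dd t$, and likewise for $y$. Setting $h:=y-x$ and using $(\tau_h\tilde f)(x+t)=\tilde f(x+t+h)=\tilde f(y+t)$, this yields
\[ f_m(x)-f_m(y)=2^{dm}\int_Q\big(\tilde f-\tau_h\tilde f\big)(x+t)\dd t. \]
Thus, up to the normalizing factor $2^{dm}$, the difference of the two averages is the integral of the single shifted difference $\tilde f-\tau_h\tilde f$ over one fixed cube.

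Next I would apply the Hölder estimate from Corollary~\ref{cor:hoelder} on the set $x+Q$, whose measure is $2^{-dm}$, and then bound the $\Lp$-norm over $x+Q$ by the $\Lp$-norm over all of $\RR^d$:
\[ \abs{f_m(x)-f_m(y)}\leq 2^{dm}\int_{x+Q}\abs{\tilde f-\tau_h\tilde f}\dd\lambda\leq 2^{dm}\big(2^{-dm}\big)^{1-\frac1p}\bnorm{\tilde f-\tau_h\tilde f}_p=2^{\frac{dm}p}\bnorm{\tilde f-\tau_h\tilde f}_p. \]
The whole point of the calculation is visible in the last equality: the normalization $2^{dm}$ against the $\lambda(\cdot)^{1-1/p}$ factor of Hölder's inequality leaves exactly the factor $2^{dm/p}$.

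Finally I would feed in the hypothesis on $\mu$. If $\abs{x-y}_\infty\leq 2^{-\nu(n)}$, then the shift $h=y-x$ satisfies $\abs h=\abs{x-y}_\infty\leq 2^{-\mu(n+\lceil\frac dp\rceil m)}$, so $\mu$ being an $\Lp$-modulus of $f$ (Definition~\ref{def:lp-moduli}) gives $\bnorm{\tilde f-\tau_h\tilde f}_p<2^{-n-\lceil\frac dp\rceil m}$. Combined with the displayed estimate,
\[ \abs{f_m(x)-f_m(y)}<2^{\frac{dm}p}\cdot 2^{-n-\lceil\frac dp\rceil m}=2^{-n}\cdot 2^{\frac{dm}p-\lceil\frac dp\rceil m}\leq 2^{-n}, \]
since $\lceil\frac dp\rceil\geq\frac dp$ makes the last exponent non-positive. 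This is precisely the oscillation bound defining a modulus of continuity for $f_m$, which completes the argument. I do not expect a genuine obstacle here; the only thing to get right is the bookkeeping of the exponent, in particular using the ceiling $\lceil\frac dp\rceil$ rather than $\frac dp$ itself so that $\nu$ takes values in $\omega$ and the residual exponent $\frac{dm}p-\lceil\frac dp\rceil m$ stays $\leq 0$.
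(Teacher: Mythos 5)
Your proposal is correct and follows essentially the same route as the paper: rewrite $f_m(x)-f_m(y)$ as a single integral of $\tilde f-\tau_{x-y}\tilde f$ over one cell, apply Corollary~\ref{cor:hoelder} to get the factor $2^{\frac dpm}$, and then invoke the $\Lp$-modulus. The only difference is that you spell out the final exponent bookkeeping and the growth clause of Definition~\ref{def:modulus of continuity}, which the paper dismisses as obvious.
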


		\begin{proof}
			Use the version of Hölder's inequality from Corollary~\ref{cor:hoelder} to conclude
			\begin{align*}
				\abs{f_m(x)-f_m(y)} %& = 2^{dm} \abs{\int_{\cell xm}\tilde f \dd \lambda - \int_{\cell ym} \tilde f \dd \lambda} \\
				& \leq 2^{dm} \int_{\cell xm} \big|\tilde f(t) - \tilde f(t-(x-y))\big| \dd t \leq 2^{\frac dpm}\bnorm{\tilde f-\tau_{x-y}\tilde f}_p.
			\end{align*}
			From this the assertion is obvious.
		\end{proof}

		How good an approximation $f_m$ is to $f$ can be read off from an $\Lp$-modulus of $f$:

		\begin{lem}[approximation]\label{resu:continuous approximations}
			Let $\mu$ be an $\Lp$-modulus of $f$.
			Then $\|\tilde f-f_{\mu(n)}\|_p < 2^{-n}$.
		\end{lem}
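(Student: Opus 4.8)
The plan is to realise $f_m$ as an average of translates of $\tilde f$ and then convert the pointwise control supplied by the $\Lp$-modulus into an $\Lp$-estimate by an averaging (generalised Minkowski) argument.

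First I would rewrite the defining integral by the substitution $t=x+h$. Writing $Q_m:=[-2^{-m-1},2^{-m-1}]^d$, a cube with $\lambda(Q_m)=2^{-dm}$, this gives
\[ f_m(x) = 2^{dm}\int_{Q_m}\tilde f(x+h)\,\dd h = \frac{1}{\lambda(Q_m)}\int_{Q_m}(\tau_h\tilde f)(x)\,\dd h, \]
so that $f_m(x)$ is exactly the mean of the shifts $\tau_h\tilde f$ over $h\in Q_m$. Subtracting $\tilde f(x)=\frac{1}{\lambda(Q_m)}\int_{Q_m}\tilde f(x)\,\dd h$ yields
\[ \tilde f(x) - f_m(x) = \frac{1}{\lambda(Q_m)}\int_{Q_m}\big(\tilde f(x) - (\tau_h\tilde f)(x)\big)\,\dd h. \]

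Next I would pass to the norm. Applying Jensen's inequality to the convex map $t\mapsto t^p$ against the probability measure $\dd h/\lambda(Q_m)$ on $Q_m$ bounds $\abs{\tilde f(x)-f_m(x)}^p$ by $\frac{1}{\lambda(Q_m)}\int_{Q_m}\abs{\tilde f(x)-(\tau_h\tilde f)(x)}^p\,\dd h$; integrating in $x$ and interchanging the order of integration by Tonelli's theorem gives
\[ \norm{\tilde f - f_m}_p^p \leq \frac{1}{\lambda(Q_m)}\int_{Q_m}\bnorm{\tilde f-\tau_h\tilde f}_p^p\,\dd h. \]
(One may instead quote the integral form of Minkowski's inequality directly.) Now I specialise to $m=\mu(n)$: every $h\in Q_{\mu(n)}$ satisfies $\abs h\leq 2^{-\mu(n)-1}\leq 2^{-\mu(n)}$, so the defining property of the $\Lp$-modulus (Definition~\ref{def:lp-moduli}) forces $\norm{\tilde f-\tau_h\tilde f}_p<2^{-n}$ for all such $h$. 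Feeding this bound into the integral produces $\norm{\tilde f-f_{\mu(n)}}_p^p<2^{-np}$, hence $\norm{\tilde f-f_{\mu(n)}}_p<2^{-n}$, which is the claim.

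There is no real obstacle here; the substance is the single averaging inequality above, and the only point needing a word is keeping the estimate \emph{strict}. This is automatic: the integrand $h\mapsto\bnorm{\tilde f-\tau_h\tilde f}_p^p$ is measurable and, for every $h\in Q_{\mu(n)}$, strictly smaller than the positive constant $2^{-np}$; integrating a function that is pointwise strictly below a positive constant over a set of positive measure yields a strict inequality, so the average stays below $2^{-np}$. The remaining tools—the substitution, Jensen (or integral Minkowski), and Tonelli—are entirely routine.
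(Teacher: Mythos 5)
Your proof is correct and follows essentially the same route as the paper's: both realise $\tilde f - f_m$ as an average of $\tilde f - \tau_h\tilde f$ over the cube, pass the $p$-th power inside the average by convexity (the paper invokes Corollary~\ref{cor:hoelder}, you invoke Jensen for the normalised measure — the same inequality), swap the order of integration, and then apply the $\Lp$-modulus to each shift $h$ with $\abs h_\infty\leq 2^{-\mu(n)-1}\leq 2^{-\mu(n)}$. Your extra remark justifying strictness of the final inequality is a small refinement the paper leaves implicit, but the substance is identical.
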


		\begin{proof}
			Using $\int_{\cell 0m} 2^{dm} \dd\lambda = 1$ one sees that
			\begin{align*}
				\|\tilde f-f_{m}\|_p^p & \leq \int_{\RR^d} \left|\tilde f(s) - 2^{dm}\int_{\cell 0m}\tilde f(t+s) \dd t\right|^p \dd s \\
				%& = 2^{dmp}\int_{\RR^d} \left| \int_{\cell 0m} (\tilde f(s) - \tilde f(t+s))\dd t\right|^p\dd s \\
				& \leq 2^{dmp}\int_{\RR^d} \left( \int_{\cell 0m} |\tilde f(s)-\tilde f(t+s)| \dd t\right)^p \dd s.
			\end{align*}
			Use the version of Hölder's inequality from Corollary~\ref{cor:hoelder} and Fubini to get
			\[ \|\tilde f-f_{m}\|_p^p \leq 2^{dm}\int_{\cell 0m}\int_{\RR^d}  |\tilde f(s)-\tilde f(t+s)|^p \dd s \dd t. \]
			Set $m:=\mu(n)$ and use that $\mu$ is an $\Lp$-modulus to see that
			\[ \|\tilde f-f_{m}\|_p < \left(2^{dm}\int_{\cell 0m} 2^{-pn} \dd t\right)^{\frac 1p} = 2^{-n}, \]
			which proves the assertion.
		\end{proof}

		We are now prepared to prove the second half of the main theorem of this section:
		\begin{prop}
			$\xip$ computably translates to the Cauchy representation of $\Lp(\Omega)$.
		\end{prop}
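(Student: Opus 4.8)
The plan is to route the reduction through the continuous approximations $f_m$ from Definition~\ref{def:continuous approximations}. By Lemma~\ref{resu:continuous approximations} the function $f_m$ is a good $\Lp$-approximation to $f$ as soon as $m$ is read off from the $\Lp$-modulus $\flength\varphi$, and by Lemma~\ref{resu:modulus of the continuous approximations} it is continuous with a modulus of continuity that is again computable from $\flength\varphi$. Crucially, its values $f_m(x) = 2^{dm}\int_{\cell xm}\tilde f\dd\lambda$ at dyadic points are, up to the fixed factor $2^{dm}$, integrals over dyadic boxes, so they can be approximated to arbitrary precision by querying the given $\xip$-name. It then remains to discretise the now-accessible continuous function $f_m$ into a dyadic step function, exactly as one approximates a continuous function of known modulus of continuity.

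On input $n$ the machine first computes $m := \flength\varphi(\sdone^{n+1})$ (possible since $\varphi$ is length-monotone), so that Lemma~\ref{resu:continuous approximations} yields $\norm{\tilde f - f_m}_p < 2^{-n-1}$. Because $\Omega$ is bounded, fix once and for all a dyadic box $B$ containing $\Omega$; its measure $\lambda(B)$ is a known constant. By Lemma~\ref{resu:modulus of the continuous approximations} the map $\nu(l) := \flength\varphi(l + \lceil d/p\rceil m)$ is a modulus of continuity of $f_m$. Pick $l$ with $2^{-l+1}\lambda(B)^{1/p} < 2^{-n-1}$ (a computable choice, as $\lambda(B)$ is fixed) and set $k := \nu(l)$.

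Now partition $B$ into cells of side length $2^{-k}$. On each cell $Q$, with dyadic centre $x_Q$, the box $\cell{x_Q}m$ is dyadic, so the machine queries the oracle for a $2^{-l-dm}$-approximation to $\int_{\cell{x_Q}m}\tilde f\dd\lambda$, multiplies by $2^{dm}$, and takes the resulting dyadic number $s_Q$, which lies within $2^{-l}$ of $f_m(x_Q)$. Let $s$ be the dyadic step function equal to $s_Q$ on each $Q$; it belongs to the fixed dense sequence, and its index is computable from the list of boxes and values. For $x\in Q$ the triangle inequality, together with $\abs{x - x_Q}_\infty \le 2^{-k} = 2^{-\nu(l)}$ and the modulus bound, gives $\abs{f_m(x) - s_Q} < 2^{-l+1}$; summing over the cells of $B$ yields $\norm{f_m - s}_p \le 2^{-l+1}\lambda(B)^{1/p} < 2^{-n-1}$. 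Since $\Omega\subseteq B$, combining this with $\norm{\tilde f - f_m}_p < 2^{-n-1}$ gives $\norm{f - s}_p < 2^{-n}$, so returning the index of $s$ at stage $n$ produces a Cauchy name of $f$.

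The whole procedure uses only oracle queries, dyadic arithmetic, and evaluations of $\flength\varphi$, hence yields a computable realizer. The point that needs care is the bookkeeping of the two stacked errors: the $\Lp$-distance from $f$ to its continuous approximation $f_m$ on one side, and the pointwise discretisation error of $f_m$ on the other, the latter being controlled jointly by the modulus $\nu$, the grid fineness $2^{-k}$, and the oracle precision $2^{-l-dm}$ (which must beat the factor $2^{dm}$ relating $f_m$ to the queried integral).
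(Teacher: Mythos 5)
Your proof is correct and follows essentially the same route as the paper's: approximate $f$ in $\Lp$ by the continuous approximation $f_m$ with $m$ read off the $\Lp$-modulus (Lemma~\ref{resu:continuous approximations}), then discretise $f_m$ into a dyadic step function on a grid whose fineness is dictated by the modulus of continuity from Lemma~\ref{resu:modulus of the continuous approximations}, with the oracle precision boosted by $dm$ to absorb the $2^{dm}$ scaling factor. The paper's choice of constants ($N=\mu(n+1)$, $M=\mu(n+\lceil d(N+C)/p\rceil+2)$, $k=dN+n+2$) packages the $\lambda(\Omega)^{1/p}$ factor slightly differently from your separate choice of $l$, but the decomposition of the error and the bookkeeping are identical.
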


		\begin{proof}[Proof (also of Theorem~\ref{resu:equivalence to the standard representation})]\label{proof:second part of equivalence}
			Let $\varphi$ be a $\xip$-name of $f\in\LLL p{\Omega}$.
			Set $C:=\lceil\lb(\lambda(\Omega))\rceil$.
			For any $z\in\DD^d$ fix some binary encoding $\str a_z\in \albe^*$ (for example the unique canceled encoding).
			Furthermore, let $e$ be the constant one vector $(1,\ldots,1)$ and set
			\[ d_{z,k,N}:= 2^{dN}\bin{\varphi(\langle\str a_{z-2^{-N-1}e},\str a_{z+2^{-N-1}e},\sdone^{k}\rangle)}\quad\text{and}\quad \mu:=\flength{\varphi}. \]
			Thus, $d_{z,k,n}$ is a $2^{-k}$-approximation to the integral of $f$ over $\cell zN$ therefore also an approximation to the value of $f_N$ in $z$.

			Consider the step function
			\[ F_{k,N,M}:= \sum_{z\in \DD_M} d_{z,k,N}\chi_{\cell zM}, \]
			where $\DD_M$ denotes the set of $z\in\DD^d$ such that each component is of the form $\frac{m}{2^M}$ and such that $\cell zM\cap \Omega\neq \emptyset$.
			Since $\Omega$ is bounded there is a constant $D$ such that $\# \DD_M \leq 2^{dM+D}$.

			Obviously, the step function $F_{k,N,M}$ can be uniformly computed from the name $\varphi$ and the constants $k,N$ and $M$.
			To see how to choose $k$, $N$ and $M$ write
			\begin{align}
				\|f-F_{k,N,M}\|_p & \leq
				\|f-f_{N}\|_p + \norm{f_N-F_{k,N,M}}_p
			\end{align}

			By the approximation property of $f_N$ from Lemma~\ref{resu:continuous approximations} for the first summand to be smaller than $2^{-n}$, $N$ should be chosen $\mu(n+1)$.
			For the second summand, note that each $x\in\Omega$ is $2^{-M}$ close to some $z\in \DD_M$ and that for these $z$
			\[ \abs{f_N(z)-F_{k,N,M}(z)}<2^{dN-k}. \]
			Choosing $M:=\mu(n+\lceil d(N+C)/p \rceil+2)$ and $k:= dN+n+2$, using the modulus of continuity of $f_N$ from Lemma~\ref{resu:modulus of the continuous approximations} and that $F_{k,N,M}$ is piecewise constant obtain
			\[ \norm{f_N-F_{k,N,M}}_p \leq\lambda(\Omega)^{\frac 1p} \norm{f_N - F_{k,N,M}}_\infty < 2^{-n-1} \]
			Summing up, the result is smaller than $2^{-n}$.
		\end{proof}

		The Cauchy representation of $\Lp$ is continuous.
		Thus, the above proves that $\xip$ is a continuous mapping.
		Whenever $p$ is computable, the $\Lp$-norm is computable with respect to the Cauchy representation, and therefore also with respect to $\xip$.
		The above translation does not run in polynomial time as it accesses the oracle an exponential number of times.
		That no polynomial-time reduction exists can be seen from the results of the last chapter: With respect to $\xip$ the norm is not polynomial time computable.

		It is not to difficult to see that a minimality result like the ones for the representation of continuous functions (Theorem~\ref{resu:minimality of the standard representation}) and the singular representation (Theorem~\ref{resu:minimality of the singular representation d}) cannot be proven for this representation.
		This remains true if only the continuous
                representations are required to polynomial-time
                reduce.

%!TEX root = Lp.tex
	\section[Sobolev spaces]{Sobolev spaces}\label{sec:sobolev spaces}
		
		This chapter only considers the simplest domain $\Omega=[0,1]$.
		To simplify notation the domain is often omitted.
		For a function $f\in \LLL 1{[0,1]}$ a function $f'\in \LLL1{[0,1]}$ is called a \demph{weak derivative}, if for any $g\in \C^\infty([0,1])$ with $g(0) = 0 = g(1)$ it holds that
		\[ \int_{[0,1]} f g' \dd\lambda = - \int_{[0,1]} f' g \dd\lambda. \]
		Recall that, if it exists, the weak derivative of a function is uniquely determined (as an element of $\Lone$).
		Furthermore, if an element of $\Lone$ allows a weak derivative $f'$, then there is a continuous representative $f$ that fulfills
		\[ f(y)-f(x) = \int_{[x,y]} f'\dd\lambda. \]
		While in higher dimensions weakly differentiable functions may have singularities, in one dimension they are continuous.
		The following refinement of the continuity of a weakly differentiable function follows directly from Proposition~\ref{resu: from lp-modulus to singularity modulus}:
		\begin{lem}[differentiability and moduli]\label{resu:Lp-modulus of derivative to modulus of continuity}
			Whenever $f$ is weakly differentiable and $\mu$ is an $\Lp$-modulus of $f'$ then $n\mapsto \mu(n+1)$ is a modulus of continuity of $f$.
		\end{lem}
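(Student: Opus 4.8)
The plan is to obtain the result as a direct corollary of Proposition~\ref{resu: from lp-modulus to singularity modulus}, specialized to the one-dimensional domain $[0,1]$, combined with the fundamental-theorem identity for weak derivatives recalled just above the lemma.

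First I would specialize the proposition. Here $\Omega = [0,1]$, so $d = 1$, $\lambda(\Omega) = 1$, and the factor $(d-1)\lb(\diam(\Omega))$ vanishes; the shift $n + 1 + \lb(\lambda(\Omega)) + (d-1)\lb(\diam(\Omega))$ therefore collapses to $n + 1$. Applying Proposition~\ref{resu: from lp-modulus to singularity modulus} to the $\Lp$-modulus $\mu$ of $f'$ thus shows that $n \mapsto \mu(n+1)$ is a singularity modulus of $f'$; that is, for every $n$ and all $x, y \in \RR$ the implication $\abs{x-y} \leq 2^{-\mu(n+1)} \Rightarrow \abs{\int_{[x,y]} \tilde{f'} \dd\lambda} < 2^{-n}$ holds.

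Next I would convert this bound on $f'$ into the desired bound on $f$. Fix $x, y \in [0,1]$. On $[0,1]$ the zero-extension $\tilde{f'}$ agrees with $f'$, and by the identity $f(y) - f(x) = \int_{[x,y]} f' \dd\lambda$ for the continuous representative of $f$ (recalled before the statement) I can rewrite $\abs{f(x) - f(y)} = \abs{\int_{[x,y]} \tilde{f'} \dd\lambda}$, the absolute value absorbing the ordering of $x$ and $y$. Feeding this into the singularity-modulus inequality gives $\abs{x-y} \leq 2^{-\mu(n+1)} \Rightarrow \abs{f(x) - f(y)} < 2^{-n}$, which is exactly the estimate required for $n \mapsto \mu(n+1)$ to be a modulus of continuity of $f$ in the sense of Definition~\ref{def:modulus of continuity}. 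The remaining growth condition, strict monotonicity whenever non-zero, is inherited from the corresponding property of the $\Lp$-modulus $\mu$ from Definition~\ref{def:lp-moduli}, shifted by one.

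I do not expect a genuine obstacle: the lemma is essentially a clean corollary, which matches the phrase \emph{follows directly} in the surrounding text. The only points needing care are the collapse of the proposition's constant to $1$ in dimension one, and the bookkeeping that the singularity modulus is phrased for arbitrary $x, y \in \RR$ whereas we only use $x, y \in [0,1]$, together with the handling of the ordering of $x$ and $y$ inside the fundamental-theorem identity by taking absolute values.
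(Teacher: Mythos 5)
Your proposal is correct and is exactly the paper's intended argument: the paper gives no separate proof, stating only that the lemma ``follows directly from Proposition~\ref{resu: from lp-modulus to singularity modulus},'' and your specialization to $d=1$, $\lambda([0,1])=1$ (collapsing the shift to $n+1$) combined with the identity $f(y)-f(x)=\int_{[x,y]}f'\dd\lambda$ is precisely that route. The two bookkeeping points you flag (absolute values for the ordering of $x,y$, and inheritance of the strict-monotonicity condition from the $\Lp$-modulus) are handled correctly.
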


		The Sobolev space $W^{1,p}$ is defined as the set of functions from $\Lp$ that have a weak derivative which is also an $\Lp$-function.
		Sobolev spaces are of great importance in the theory of partial differential equations.
		It is well known that the Sobolev spaces can be characterized as spaces of functions with small $\Lp$-moduli (compare for instance \cite[Proposition 8.5]{MR2759829}).
		Since the named source uses different terminology and the result is stated for the whole space and not the unit interval we restate it and give a proof.
		\begin{lem}[small moduli]\label{resu:classes with small lp-moduli}
			The following are equivalent for $f\in\Lp$ with $1<p<\infty$:
			\begin{itemize}
				\item $f\in W^{1,p}$ and the continuous representative vanishes in $0$ and $1$.
				\item There is a $C\in \omega$ such that $n\mapsto n+C$ is an $\Lp$-modulus of $f$.
			\end{itemize}
			Furthermore, the constant $C$ can be chosen as any integer strictly larger than $\lb(\norm{f'}_p)$.
		\end{lem}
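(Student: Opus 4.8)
The plan is to route both implications through the zero-extension $\tilde f$, showing that each of the two conditions is equivalent to $\tilde f$ belonging to the Sobolev space $W^{1,p}(\RR)$ over the whole line. The reason this is the right pivot is that the $\Lp$-modulus controls translates of $\tilde f$ and not of $f$ itself, so the vanishing of the continuous representative at $0$ and $1$ is exactly the condition that lets the zero-extension be weakly differentiable across the endpoints. Since $f\in\Lp$ is assumed throughout, membership in $W^{1,p}$ amounts to producing a weak derivative in $L^p$.

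For the implication from the first condition to the second, I would first observe that $\tilde f$ lies in $W^{1,p}(\RR)$ with weak derivative $\widetilde{f'}$, the zero-extension of $f'$: testing against $\varphi\in C_c^\infty(\RR)$, the integration-by-parts boundary contributions at $0$ and $1$ vanish precisely because $f(0)=f(1)=0$, and clearly $\norm{\widetilde{f'}}_p=\norm{f'}_p$. As $\tilde f$ is then absolutely continuous on $\RR$, for $h>0$ one has $\tilde f(x+h)-\tilde f(x)=\int_0^h\widetilde{f'}(x+s)\dd s$, and Minkowski's integral inequality gives the Lipschitz-type translation bound $\norm{\tilde f-\tau_h\tilde f}_p\le|h|\,\norm{f'}_p$ (the case $h<0$ being symmetric). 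Inserting $|h|\le 2^{-(n+C)}$ yields $\norm{\tilde f-\tau_h\tilde f}_p\le 2^{-n}2^{-C}\norm{f'}_p$, which is strictly below $2^{-n}$ as soon as $2^C>\norm{f'}_p$, i.e. as soon as $C$ is an integer strictly larger than $\lb(\norm{f'}_p)$. Since $n\mapsto n+C$ is manifestly strictly increasing, this supplies the required $\Lp$-modulus and at the same time proves the \lq furthermore\rq\ claim.

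For the converse, assume $n\mapsto n+C$ is an $\Lp$-modulus of $f$. For each small $h$ pick the integer $n$ with $2^{-(n+C+1)}<|h|\le 2^{-(n+C)}$; the modulus condition then turns into a genuine Lipschitz bound $\norm{\tilde f-\tau_h\tilde f}_p\le 2^{C+1}|h|$ valid for $|h|\le 2^{-C}$. Hence the difference quotients $(\tau_h\tilde f-\tilde f)/h$ stay bounded in $L^p(\RR)$ as $h\to0$. At this point I would invoke reflexivity of $L^p$ for $1<p<\infty$: a sequence of such quotients has a weakly convergent subsequence with limit $g\in L^p(\RR)$, and testing against $\varphi\in C_c^\infty(\RR)$, together with the uniform convergence $(\varphi(\cdot-h)-\varphi)/h\to-\varphi'$, identifies $\int g\varphi=-\int\tilde f\varphi'$, so $g$ is the weak derivative of $\tilde f$. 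Thus $\tilde f\in W^{1,p}(\RR)$ and has an absolutely continuous representative; since $\tilde f$ is identically zero off $[0,1]$, continuity forces this representative to vanish at $0$ and $1$, and its restriction witnesses the first condition for $f$.

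The main obstacle is the converse direction, and within it the passage from the linear translation bound to the existence of a weak derivative. This is exactly where the hypothesis $1<p<\infty$ is indispensable: it is reflexivity, equivalently weak sequential compactness of norm-bounded sets, that produces the candidate derivative, and the argument breaks down at $p=1$, where the same translation bound only yields a function of bounded variation rather than a Sobolev function. The remaining ingredients are routine: the forward direction is a one-line Minkowski estimate, and the identification of the weak limit with the weak derivative is the standard difference-quotient computation.
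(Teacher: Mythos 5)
Your proof is correct and follows essentially the same route as the paper: the forward direction via the integral representation of $\tilde f$ and a Lipschitz-type translation estimate $\norm{\tilde f-\tau_h\tilde f}_p\leq \abs{h}\norm{f'}_p$, and the converse via the characterization of $W^{1,p}(\RR)$ by linearly bounded translates, which the paper simply cites as \cite[Proposition 8.5]{MR2759829} and you re-derive through reflexivity and weak compactness of difference quotients. The only genuine divergence is the endpoint condition in the converse: you deduce $f(0)=f(1)=0$ directly from continuity of the $W^{1,p}(\RR)$-representative of $\tilde f$, which vanishes almost everywhere off $[0,1]$, whereas the paper argues by contradiction with an explicit choice of $h$; your version is shorter and equally valid.
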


		\begin{proof}
			First assume that $f\in W^{1,p}$ and that the continuous representative vanishes at $0$ and $1$.
			In this case the extension $\tilde f$ to the whole real line by zero is continuous and its weak derivative is the extension of the weak derivative by zero.
			Use the version of Hölder's inequality from Corollary~\ref{cor:hoelder} to conclude
			\begin{align*}
				\norm{\tilde f - \tau_h\tilde f}_p & = \left(\int_{\RR} \abs{\int_x^{x+h} \tilde f'(t) \dd t}^p \dd x\right)^{\frac 1p} \leq h\left(\int_{\RR} \left(\int_0^1 \abs{\tilde f'(x+sh)} \dd s\right)^p \dd x\right)^{\frac 1p}\\
				& \stackrel{\ref{cor:hoelder}}\leq h\left(\int_{\RR} \int_0^1 \abs{\tilde f'(x+sh)}^p \dd s \dd x\right)^{\frac 1p} = h \norm{f'}_p.
			\end{align*}
			From this it is easy to see that $n+C$ is an $\Lp$-modulus of $f$ whenever $C$ is strictly larger than $\lb(\norm{f'}_p)$.

			For the other direction assume that $n+C$ is an $\Lp$-modulus of $f$.
			Recall that \cite[Proposition 8.5]{MR2759829} states that a function $g\in\LLL p{\RR}$ is an element of $W^{1,p}(\RR)$ if the inequality $\norm{g-\tau_h g}_p \leq D\abs{h}$ holds for all $h\in\RR$.
			$\tilde f$ fulfills this for $D:= 2^{2C+1}$:
			Given $h$ first check if there is a $n$ such that $2^{-\mu(n+1)}\leq \abs{h} < 2^{-\mu(n)}$.
			If so, then
			\[ \bnorm{\tilde f - \tau_h\tilde f}_p < 2^{-n} = 2^{-n+\mu(n+1)-\mu(n+1)} \leq 2^{C+1}\abs{h} \]
			If there is no such $n$, then $2^{-\mu(0)}\leq \abs{h}$ and using the norm bound from the $\Lp$-modulus by Lemma~\ref{resu:norm bound from modulus} conclude
			\[ \bnorm{\tilde f - \tau_h\tilde f}_p \leq 2\bnorm{\tilde f}_p < 2^{\mu(0)+1} \leq 2^{2C+1}\abs{h}. \]
			Thus, in any case
			\begin{equation}\label{eqn:linear modulus}\tag{h}
				\bnorm{\tilde f - \tau_h\tilde f}_p < 2^{2C+1} \abs{h}.
			\end{equation}
			It follows that the restriction of $\tilde f$ to $[0,1]$ is an element of the Sobolev space.
			Show that the continuous representative of $f$ vanishes on the boundary by contradiction:
			Assume $f(0)\neq 0$, w.l.o.g. $f(0)>0$.
			Then there exists some $\varepsilon$ and some interval $[0,\delta]$ such that $f(x) \geq \varepsilon$ for any $x\in[0,\delta]$.
			Set $h:= \min\big\{\delta,\big(\varepsilon2^{-2C-1}\big)^{1/(1-1/p)}\big\}$, then
			\[ \bnorm{\tilde f-\tau_h \tilde f}_p \geq \left(\int_0^h \abs{f}^p \dd\lambda \right)^{\frac 1p} \geq h^{\frac 1p} \varepsilon \geq  2^{2C+1} h = 2^{2C+1} \abs{h}, \]
			which contradicts \cref{eqn:linear modulus}.
			Therefore, $f$ vanishes in zero.
			The argument for the other end of the interval is identical.
		\end{proof}
		In the case $p=1$ one of the directions of the result fails: Characteristic functions of intervals have $n+1$ as $\Lone$-modulus while not being weakly differentiable.
		The other direction still holds true.

		In the remarks following Definition~\ref{def:modulus of continuity} of the modulus of continuity the corresponding class of functions was specified as the Lipschitz functions.
		In Proposition~\ref{resu:classes with small moduli} the class for the singularity modulus was proven to be $\LL\infty$.

	\subsection[Representing Sobolev spaces]{Representing \texorpdfstring{$W^{m,p}$}{Sobolev spaces}}

		Denote the $m$ times iterated weak derivative of a function $f$ by $f^{(m)}$.
		The \demph{Sobolev space $W^{m,p}$} is the space of all functions $f\in\Lp$ such that the weak derivatives $f',\ldots,f^{(m)}$ exist and are $\Lp$-functions.
		Equipped with the norm
		\begin{equation*}
			 \|f\|_{m,p}:= \sqrt[p]{\|f\|_p^p + \|f^{(m)}\|_p^p}
		\end{equation*}
		this space is a Banach space, and for $p=2$ a Hilbert space.
		In one dimension from $f^{(m)}$ being an $\Lp$ function it follows that $f^{(m-1)}$ is continuous.

		Recall the encoding $\bin\cdot$ of the dyadic numbers from the introduction and that for a length monotone string function $\length{\varphi}(\length{\str a})=\length{\varphi(\str a)}$.
		\begin{defi}\label{d:ximp}
			Define the \demph{second-order representation $\xi_{m,p}$ of $W^{m,p}$}: A length-monotone string function $\varphi$ is a $\xi_{m,p}$-name of $f\in W^{m,p}$ if for all strings $\str a, \str b$ such that $\bin{\str a},\bin{\str b}\in[0,1]$ and all $n\in \NN$
			\[ \abs{\int_{\bin{\str a}}^{\bin{\str b}} f\dd\lambda - \bin{\varphi(\langle \str a, \str b,\sdone^n\rangle)}} < 2^{-n}, \]
			and $\flength{\varphi}$ is an $\Lp$-modulus (see Definition~\ref{def:lp-moduli}) of the highest derivative $f^{(m)}$ of $f$.
		\end{defi}
		From now on always equip $W^{m,p}$ with the second-order representation $\xi_{m,p}$.
		The representations $\xip$ from Definition~\ref{d:xip} coincide with $\xi_{0,p}$, so no ambiguities arise.
		The space $\C([0,1])$ is always equipped with $\xic$.

	\subsection[One derivative]{\texorpdfstring{The space $W^{1,p}$}{One derivative}}
		
		Before investigating the space $W^{m,p}$ consider the simplest non-trivial case $m=1$.
		As a set $W^{1,p}$ is contained in $\Lp$.
		From the definition of the norm on $W^{1,p}$ it follows, that the inclusion mapping $W^{1,p}\hookrightarrow \Lp$ is continuous.
		\begin{thm}[Sobolev functions as $\Lp$-functions]\label{resu:sobolev functions as lp functions}
			The inclusion mapping $W^{1,p}\hookrightarrow \Lp$ is polynomial-time computable.
		\end{thm}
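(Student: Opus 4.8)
The plan is to exhibit an oracle machine that turns a $\xi_{1,p}$-name $\varphi$ of $f\in W^{1,p}$ into a $\xip$-name of the same $f$. Comparing Definitions~\ref{d:ximp} and~\ref{d:xip}, the two representations store exactly the same kind of information---dyadic approximations to integrals of $\tilde f$ over dyadic boxes---and differ only in what the length of a name is required to encode: for $\xi_{1,p}$ the size $\flength\varphi$ is an $\Lp$-modulus of the derivative $f'$, whereas for $\xip$ it must be an $\Lp$-modulus of $f$ itself. Hence the translation has two jobs: forward the integral data, and replace the modulus of $f'$ by a modulus of $f$.

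Forwarding the data is routine. On a $\xip$-query $\langle\str a,\str b,\sdone^n\rangle$ I would clamp the interval to $[0,1]$, i.e.\ replace $[\str a,\str b]$ by $[\str a,\str b]\cap[0,1]$, whose endpoints are again dyadic and lie in $[0,1]$. Since $\int_{[\str a,\str b]}\tilde f=\int_{[\str a,\str b]\cap[0,1]}f$ (and the integral is $0$ when the clamped interval is empty), a single query to $\varphi$ at the clamped endpoints with precision $\sdone^n$ returns a valid value; this costs one oracle call plus some dyadic bookkeeping.

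The real work is the modulus conversion, and this is where I expect the main obstacle. The naive bound $\norm{\tilde f-\tau_h\tilde f}_p\le\abs h\,\norm{f'}_p$ (via the fundamental theorem of calculus and Hölder, Corollary~\ref{cor:hoelder}) is only valid when the continuous representative of $f$ vanishes at $0$ and $1$; in general the extension-by-zero $\tilde f$ has jumps at the endpoints, exactly the obstruction appearing in Lemma~\ref{resu:classes with small lp-moduli}. I would therefore split $\norm{\tilde f-\tau_h\tilde f}_p^p$ (for $0<h\le 1$) into the interior part, where both $x$ and $x+h$ lie in $[0,1]$ and which is bounded by $(\abs h\,\norm{f'}_p)^p$ as above, and the two boundary strips of width $h$ near $0$ and $1$, each contributing at most $h\,\norm f_\infty^p$. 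Thus the resulting modulus of $f$ is governed by $\norm{f'}_p$ and $\norm f_\infty$.

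It remains to bound these two quantities from the data at hand. A bound $\norm{f'}_p<2^{\flength\varphi(0)}$ comes directly from Lemma~\ref{resu:norm bound from modulus} applied to the $\Lp$-modulus $\flength\varphi$ of $f'$ on $[0,1]$. For $\norm f_\infty$ I would use $f(x)-\int_0^1 f=\int_0^1(f(x)-f(y))\dd y$ together with $\abs{f(x)-f(y)}=\abs{\int_{[x,y]}f'}\le\norm{f'}_1\le\norm{f'}_p$, which gives $\norm f_\infty\le\abs{\int_0^1 f}+\norm{f'}_p$; the value $\abs{\int_0^1 f}$ is read off from one further query of $\varphi$. From these bounds one computes a strictly increasing affine function, roughly $\mu_f(n)=\max\{n+\lceil\lb(\norm{f'}_p)\rceil+1,\,pn+p\lceil\lb(\norm f_\infty)\rceil+2\}$, which the estimate above certifies to be an $\Lp$-modulus of $f$; padding every output of the machine to length $\mu_f$ of the input length (possible since the chosen dyadic encoding admits arbitrarily long codes) makes the produced name length-monotone with $\flength{\,\cdot\,}=\mu_f$. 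Finally, since $\mu_f(n)$ is bounded by a second-order polynomial in $(\flength\varphi,n)$ and the per-query work is one oracle call plus polynomial dyadic arithmetic, the whole construction runs in polynomial time, establishing the claim; it is in fact a polynomial-time translation of names.
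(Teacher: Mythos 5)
Your proof is correct, and while it shares the paper's overall architecture (forward the integral data unchanged, then convert the $\Lp$-modulus of $f'$ into an $\Lp$-modulus of $f$ using a norm bound on $f'$ and a supremum bound on $f$ extracted from one extra oracle query), the analytic core is genuinely different. The paper passes through an intermediate object: it first turns the $\Lp$-modulus of $f'$ into a modulus of \emph{continuity} of $f$ (Lemma~\ref{resu:Lp-modulus of derivative to modulus of continuity}), bounds $\norm{f}_\infty$ by the mean value theorem for integrals plus a telescoping sum along a chain of $2^{\mu(0)}$ points, and then invokes the general-purpose Lemma~\ref{resu:from cont. mod. to Lp-mod.} to convert the modulus of continuity back into an $\Lp$-modulus. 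You instead estimate $\norm{\tilde f-\tau_h\tilde f}_p$ directly, splitting into the interior part $\Omega\cap(\Omega+h)$, bounded by $\abs{h}\norm{f'}_p$ via the fundamental theorem of calculus and Corollary~\ref{cor:hoelder} (this is exactly the computation in the first half of Lemma~\ref{resu:classes with small lp-moduli}), and the two boundary strips of measure $\abs h$ each, bounded by $\abs h\,\norm f_\infty^p$ --- correctly identifying that the naive bound $\norm{\tilde f-\tau_h\tilde f}_p\le\abs h\norm{f'}_p$ fails unless $f$ vanishes at the endpoints. Your bound $\norm f_\infty\le\abs{\int_0^1 f}+\norm{f'}_p$ is also cheaper and sharper than the paper's telescoping argument. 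What the paper's detour buys is reuse: the modulus of continuity of $f$ is needed anyway for the companion Theorem~\ref{resu:Sobolev functions as continuous functions}, and Lemma~\ref{resu:from cont. mod. to Lp-mod.} is stated for general domains $\Omega$; what your route buys is a shorter, self-contained estimate whose two branches (slope $1$ from the interior, slope $\approx p$ from the boundary strips) make the shape of the resulting affine modulus transparent. The only details worth making explicit are the clamping of queried intervals to $[0,1]$ (which you do note) and rounding $p$ up to an integer so that $\mu_f$ maps $\omega$ to $\omega$; neither affects correctness.
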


		For the proof it is necessary to obtain an $\Lp$-modulus of a function from a modulus of continuity and some extra information.
		The corresponding result is interesting on its own behalf.
		Therefore, we state it separately and in more generality than needed.
		\begin{lem}\label{resu:from cont. mod. to Lp-mod.}
			Let $\mu$ be a modulus of continuity of some function $f\in \C(\Omega)$ and let $\nu$ be an $\Lp$-modulus of the characteristic function of $\Omega$.
			Then an $\Lp$-modulus of $f$ is given by
			\[ \eta(n):= \max\left\{\mu(n+ \lceil\lb(\lambda(\Omega))\rceil+1),\nu(n+\lceil\lb(\norm{f}_\infty)\rceil+1)\right\}. \]
		\end{lem}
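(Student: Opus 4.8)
The plan is to estimate $\norm{\tilde f-\tau_h\tilde f}_p$ directly for an arbitrary shift $h$ with $\abs h\le 2^{-\eta(n)}$ and to show it falls below $2^{-n}$, splitting $\RR^d$ according to where the zero–extension is switched on. I would write $I_h:=\{x:x\in\Omega\text{ and }x+h\in\Omega\}$ for the interior part and $B_h:=\{x:\text{exactly one of }x,x+h\text{ lies in }\Omega\}$ for the boundary part; off $I_h\cup B_h$ both $\tilde f(x)$ and $\tilde f(x+h)$ vanish, so these two sets carry the entire integral. On $I_h$ the integrand equals $\abs{f(x)-f(x+h)}$, whereas on $B_h$ it equals $\abs{f}$ evaluated at whichever of the two points lies in $\Omega$, hence is bounded by $\norm f_\infty$.

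For the interior part I would invoke the modulus of continuity. Since $\abs h\le 2^{-\eta(n)}\le 2^{-\mu(n+\lceil\lb(\lambda(\Omega))\rceil+1)}$ and $I_h\subseteq\Omega$ has measure at most $\lambda(\Omega)$, the strict pointwise bound from $\mu$ integrates to
\[ \left(\int_{I_h}\abs{f(x)-f(x+h)}^p\dd x\right)^{1/p} < \lambda(\Omega)^{1/p}\,2^{-(n+\lceil\lb(\lambda(\Omega))\rceil+1)}, \]
and the additive constant $\lceil\lb(\lambda(\Omega))\rceil+1$ is calibrated precisely to push the right-hand side down to $2^{-n-1}$ (for $\Omega=[0,1]$ one has $\lambda(\Omega)^{1/p}=1$ and this is exact).

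The genuinely new step is the boundary part, and it is here that $\nu$ and $\norm f_\infty$ enter. The key observation is that $B_h$ is exactly the set on which $\chi_\Omega$ and its shift disagree, so — using that $\tilde\chi_\Omega=\chi_\Omega$ and that $\abs{\chi_\Omega-\tau_h\chi_\Omega}$ is $\{0,1\}$-valued —
\[ \lambda(B_h)=\int_{\RR^d}\abs{\chi_\Omega(x)-\chi_\Omega(x+h)}^p\dd x=\norm{\chi_\Omega-\tau_h\chi_\Omega}_p^p. \]
Bounding the integrand on $B_h$ by $\norm f_\infty$ and using that $\nu$ is an $\Lp$-modulus of $\chi_\Omega$ (applicable because $\abs h\le 2^{-\nu(n+\lceil\lb(\norm f_\infty)\rceil+1)}$) I obtain, together with $\norm f_\infty\le 2^{\lceil\lb(\norm f_\infty)\rceil}$,
\[ \left(\int_{B_h}\abs{\tilde f(x)-\tilde f(x+h)}^p\dd x\right)^{1/p}\le \norm f_\infty\,\norm{\chi_\Omega-\tau_h\chi_\Omega}_p < 2^{\lceil\lb(\norm f_\infty)\rceil}\,2^{-(n+\lceil\lb(\norm f_\infty)\rceil+1)}=2^{-n-1}. \]
Adding the two halves via the triangle inequality in $\Lp$ gives $\norm{\tilde f-\tau_h\tilde f}_p<2^{-n}$.

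To finish I would verify the growth condition: as the pointwise maximum of two shifts of functions that are strictly increasing whenever nonzero, $\eta$ is again strictly increasing whenever nonzero, so it is a legitimate $\Lp$-modulus. I expect the identity $\lambda(B_h)=\norm{\chi_\Omega-\tau_h\chi_\Omega}_p^p$ — the realization that the $\Lp$-modulus of the characteristic function measures exactly the jump contributions created by extending $f$ by zero — to be the real content of the argument; the interior estimate and the conversion of the constants into the two offsets $\lceil\lb(\lambda(\Omega))\rceil+1$ and $\lceil\lb(\norm f_\infty)\rceil+1$ are then routine Hölder-type bookkeeping.
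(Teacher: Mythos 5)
Your argument is essentially the paper's own proof: the paper splits $\|\tilde f-\tau_h\tilde f\|_p$ over $\Omega\cap(\Omega+h)$ (handled by the modulus of continuity and $\lambda(\Omega)^{1/p}$) and the symmetric difference $\Omega\Delta(\Omega+h)$ (handled by $\|f\|_\infty$ times $\lambda(\Omega\Delta(\Omega+h))^{1/p}=\|\chi_\Omega-\tau_h\chi_\Omega\|_p$), which is exactly your $I_h$/$B_h$ decomposition and your key identity $\lambda(B_h)=\|\chi_\Omega-\tau_h\chi_\Omega\|_p^p$. The constant bookkeeping matches the paper's as well, and your additional check of the growth condition on $\eta$ is correct.
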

		\begin{proof}
			for sets $A$ and $B$ denote the symmetric difference by $A\Delta B := (A\cup B) \setminus (A\cap B)$.
			A function $\nu$ is an $\Lp$-modulus of the characteristic function of $\Omega$ if and only if from $|h|\leq 2^{-\nu(m)}$ it follows that $\lambda(\Omega\Delta(\Omega +h))^{1/p}< 2^{-m}$.
			Thus, for $\abs h \leq 2^{-\eta(n)}$
			\begin{align*}
				\|f-\tau_hf\|_p & \leq \|\chi_{\Omega\setminus(\Omega +h)\cup\Omega\setminus(\Omega-h)}f\|_p + \|\chi_{\Omega\cap(\Omega+h)}(f-\tau_h f)\|_p  \\
				& \leq \|f\|_\infty\cdot \lambda(\Omega\Delta(\Omega+h))^{\frac1p} + \left(\int_{\Omega\cap(\Omega+h))} |f-\tau_hf|^p\dd\lambda\right)^{\frac 1p} \\
				& < 2^{\lb(\norm f_\infty)}2^{-n-\lceil(\lb(\norm f_\infty))\rceil-1} + 2^{\frac{\lb(\lambda(\Omega))}{p}}2^{-n-\lceil\lb(\lambda(\Omega))\rceil-1} \leq 2^{-n}.
			\end{align*}
			Which proves the assertion.
		\end{proof}
		For $\Omega = [0,1]$ the characteristic function has $n\mapsto n+1$ as modulus and the previous result states that up to a bound on the norm, a modulus of continuity contains strictly more information about the function than an $\Lp$-modulus.

		\begin{proof}[Proof of Theorem~\ref{resu:sobolev functions as lp functions}]
			The following specifies an oracle Turing machine that transforms a $\xi_{1,p}$-name $\varphi$ of $f$ into a $\xi_{0,p}$-name of $f$:
			The approximations to the integrals for the $\xi_{0,p}$-name can be read from $\varphi$.
			To find the right length of the output, access to an $\Lp$-modulus of the function is needed.
			Since $\flength{\varphi}$ is an $\Lp$-modulus of $f'$, by Theorem~\ref{resu:Lp-modulus of derivative to modulus of continuity} $\mu(n):=\flength{\varphi}(n+1)$ is a modulus of continuity of $f$.
			Recall from Lemma~\ref{resu:from cont. mod. to Lp-mod.} that to obtain an $\Lp$-modulus of $f$ from a modulus of continuity of $f$ it suffices to have a bound on the supremum norm.
			By the mean value theorem for integration
			\[ \int_0^1 f\dd\lambda = f(y) \]
			for some $y\in [0,1]$.
			Let $\str a$ and $\str b$ be encodings of $0$ and $1$ as dyadic numbers.
			Then
			\begin{align*}
				\abs{f(y)} & \leq \abs{f(y) - \int_0^1 f\dd\lambda} + \abs{\int_0^1 f\dd\lambda-\bin{\varphi(\langle \str a,\str b,\varepsilon)}} + \abs{\bin{\varphi(\langle{\str a, \str b, \varepsilon})}} \leq \abs{\bin{\varphi(\langle{\str a, \str b, \varepsilon})}}+1.
			\end{align*}
			Choose some integer $Q$ such that $2^Q$ is a bound for $\abs{\bin{\varphi(\langle{\str a, \str b, \varepsilon})}}+1$.
			Bound the supremum norm of $f$ by using the modulus of continuity and the triangle inequality:
			Fix some $x\in[0,1]$ and set $x_i := x+(y-x)i2^{-\mu(0)}$, then $x_0=x$, $x_{2^{\mu(0)}} = y$ and $\abs{x_i-x_{i+1}}\leq 2^{-\mu(0)}$.
			Thus,
			\[ \abs{f(x)} \leq \sum_{i=0}^{2^{\mu(0)}-1} \abs{f\left(x_i\right)-f\left(x_{i+1}\right)}+\abs{f(y)} \leq 2^{\max\{\mu(0),Q\}+1}. \]
			Taking the supremum on both sides gives $\norm{f}_\infty\leq 2^{\max\{\mu(0),Q\}+1}$.

			Lemma~\ref{resu:from cont. mod. to Lp-mod.} now specifies an $\Lp$-modulus that can be computed from $\varphi$ in polynomial-time.
			Thus, the machine can pad the return values to an appropriate length.
		\end{proof}

		In one dimension, the Sobolev spaces consist of continuous functions and the inclusion mapping $W^{1,p}\hookrightarrow \C([0,1])$ is well known to be continuous (for $1<p\leq\infty$ it is compact).

		\begin{thm}[inclusion into continuous functions]\label{resu:Sobolev functions as continuous functions}
			The inclusion mapping $W^{1,p}\hookrightarrow \C([0,1])$ is polynomial-time computable.
		\end{thm}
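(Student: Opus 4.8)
The plan is to describe an oracle Turing machine that turns a $\xi_{1,p}$-name $\varphi$ of $f\in W^{1,p}$ into a $\xic$-name of the continuous representative of $f$. As in the proof of Theorem~\ref{resu:sobolev functions as lp functions}, the task splits into producing a modulus of continuity (encoded in the length of the output) and producing approximations to point values; the first part is immediate and the second part is the actual content.

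For the modulus, recall that $\flength{\varphi}$ is by definition an $\Lp$-modulus of $f'=f^{(1)}$, so Lemma~\ref{resu:Lp-modulus of derivative to modulus of continuity} shows that $\mu(n):=\flength{\varphi}(n+1)$ is a modulus of continuity of $f$, and it inherits the strict-growth-when-nonzero property from the $\Lp$-modulus. Exactly as in Theorem~\ref{resu:sobolev functions as lp functions}, this value is polynomial-time computable from $\varphi$ because $\varphi$ is length-monotone, so the machine can pad its outputs to make $\flength{\cdot}$ of the produced name a modulus of continuity of $f$.

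For the point values I would recover $f(x)$ by averaging the integrals that the name already provides over a short interval. Given a dyadic $x\in[0,1]$ and a precision requirement $n$, set $m:=\mu(n+1)$ and choose a dyadic subinterval $[a,b]\subseteq[0,1]$ of length $b-a=2^{-m}$ containing $x$ (always possible since $[0,1]$ has length $1\ge 2^{-m}$). Querying $\varphi$ for a $2^{-(m+n+1)}$-approximation $q$ to $\int_{a}^{b} f\dd\lambda$, the machine returns $2^m q$. The estimate is a two-term triangle inequality: since $\abs{t-x}\le b-a=2^{-\mu(n+1)}$ for every $t\in[a,b]$, the modulus of continuity gives
\[ \abs{2^m\!\int_a^b f\dd\lambda - f(x)} = \abs{2^m\!\int_a^b\!\big(f(t)-f(x)\big)\dd t} < 2^{-n-1}, \]
while the oracle error is amplified by the averaging factor $2^m=1/(b-a)$ to $2^m\cdot 2^{-(m+n+1)} = 2^{-n-1}$, so that $\abs{2^m q - f(x)}<2^{-n}$ as required. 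The realizer runs in polynomial time: evaluating $\mu(n+1)$ reduces to computing $\flength{\varphi}$, locating $[a,b]$ and forming $\langle\str a,\str b,\sdone^{m+n+1}\rangle$ is elementary, and only a single oracle query plus bounded dyadic arithmetic is used per output.

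The one point needing care — the genuine obstacle — is the behaviour at the endpoints $0$ and $1$. The centred averages $f_m$ from Definition~\ref{def:continuous approximations} are unusable there: since $\tilde f$ jumps at the boundary, $f_m(0)=2^m\int_{\cell 0m}\tilde f\dd\lambda$ converges to $\tfrac12 f(0)$ rather than $f(0)$. This is why the argument must use the \emph{one-sided} interval $[x,x+2^{-m}]$ or $[x-2^{-m},x]$ (taken inside $[0,1]$ according as $x\le 1-2^{-m}$ or not) in place of the symmetric cell $\cell xm$; with a one-sided interval the averaging estimate above goes through uniformly for all dyadic $x\in[0,1]$, including the two endpoints.
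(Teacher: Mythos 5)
Your proof is correct and takes essentially the same route as the paper's: extract the modulus of continuity $\mu(n):=\flength{\varphi}(n+1)$ from the $\Lp$-modulus of $f'$ via Lemma~\ref{resu:Lp-modulus of derivative to modulus of continuity}, then approximate $f(x)$ by the normalized integral of $f$ over a short dyadic interval containing $x$, splitting the $2^{-n}$ error budget between the oracle's precision and the modulus. The only cosmetic differences are that the paper invokes the mean value theorem for integrals where you bound $2^m\int_a^b(f(t)-f(x))\,\dd t$ directly, and that your one-sided intervals treat the endpoints $0$ and $1$ explicitly --- a boundary detail the paper's symmetric interval around $d$ quietly skips.
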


		\begin{proof}
			Let $\varphi$ be a $\xi_{1,p}$-name of a function $f\in W^{1,p}$.
			Describe an oracle Turing machine that transforms this name into a $\xic$-name of $f$:
			Assume the machine is given some input $\str c$ and provided $\varphi$ as oracle.
			Note that by Theorem~\ref{resu:Lp-modulus of derivative to modulus of continuity} the mapping $\mu(n):=\length{\varphi}(n+1)$ is a modulus of continuity of the continuous representative of $f$.
			Therefore the necessary length of the return value is known.
			If the input is not of the form $\str c = \langle \str a, \sdone^n\rangle$, where $\str a$ is the encoding of some dyadic number $d\in[0,1]$ return a sufficiently long sequence of zeros.
			If it is of that form an approximation to $f(d)$ can be obtained as follows:
			By the mean value theorem
			\[ 2^{\mu(n+1)+1}\int_{d-2^{-\mu(n+1)}}^{d+2^{-\mu(n+1)}}f \dd\lambda = f(y) \]
			for some $y\in[d-2^{-\mu(n+1)},d+2^{-\mu(n+1)}]$ and therefore
			\[ \abs{f(d) - 2^{\mu(n+1)+1}\int_{d-2^{-\mu(n+1)}}^{d+2^{-\mu(n+1)}}f \dd\lambda} < 2^{-n-1}. \]
			Let $\str b^\pm$ denote encodings of $d\pm2^{-\mu(n+1)}$.
			Such encodings are easily obtained from $\str a$.
			Then $2^{\mu(n+1)+1}\bin{\varphi(\langle \str b^-, \str b^+, \sdone^{\mu(n+1)+n+2} \rangle)}$ is an approximation to $f(d)$ and (a sufficiently long encoding is) a valid return value.
		\end{proof}

		Note that this result does not imply the previous Theorem~\ref{resu:sobolev functions as lp functions}:
		Polynomial time computability of the restriction of the integration operator from \cref{the integration operator} is known to fail on $\C([0,1])$ (for instance \cite[Example 6h]{MR3377508}).
		On $\Lp$ on the other hand this operator is polynomial-time computable by Theorem~\ref{resu:polynomial-time computablity of integration}.
		Thus, the inclusion mapping $\C([0,1])\hookrightarrow \Lp$ is not polynomial-time computable.

		\begin{cor}[differentiation]
			The operator
			\[ \frac{d}{dx}: W^{1,p} \to \Lp,\quad f\mapsto f' \]
			is polynomial-time computable.
		\end{cor}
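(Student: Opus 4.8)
The plan is to build a realizer that turns a $\xi_{1,p}$-name $\varphi$ of $f\in W^{1,p}$ into a $\xip$-name of its weak derivative $f'$. A $\xip$-name of $f'$ consists of two pieces of data: approximations to the integrals $\int_{[\str a,\str b]}\widetilde{f'}\dd\lambda$ over dyadic intervals, and a length that is an $\Lp$-modulus of $f'$. The second piece is essentially free: by Definition~\ref{d:ximp} the length $\flength{\varphi}$ of the input name is already required to be an $\Lp$-modulus of $f^{(1)}=f'$, so I would simply reuse $\flength{\varphi}$, padding every output to that length. Since $\flength{\varphi}$ is non-decreasing, this padding preserves length-monotonicity.

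The integral data reduces to evaluation of $f$. Recall from the beginning of this section that the continuous representative satisfies $f(y)-f(x)=\int_{[x,y]} f'\dd\lambda$. Since $\widetilde{f'}$ vanishes outside $[0,1]$, for dyadic endpoints $\str a,\str b$ one has
\[ \int_{[\str a,\str b]}\widetilde{f'}\dd\lambda = f(\beta)-f(\alpha), \qquad [\alpha,\beta]:=[\bin{\str a},\bin{\str b}]\cap[0,1], \]
with the convention that an empty intersection yields the value $0$; so an approximation to this integral is obtained from approximations to the two values $f(\alpha)$ and $f(\beta)$. By Theorem~\ref{resu:Sobolev functions as continuous functions} a $\xic$-name of $f$ can be computed from $\varphi$ in polynomial time, and evaluation is polynomial-time computable with respect to $\xic$ (take $\xi=\xic$ in Theorem~\ref{resu:minimality of the standard representation}). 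Querying this $\xic$-name at $\alpha$ and $\beta$ with precision $2^{-n-1}$ and subtracting therefore yields a $2^{-n}$-approximation to the integral.

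Putting this together, on input $\langle\str a,\str b,\sdone^n\rangle$ the realizer first computes (in polynomial time from $\str a$ and $\str b$) encodings of the clamped endpoints $\alpha,\beta\in[0,1]\cap\DD$, then queries the $\xic$-name of $f$ for $2^{-n-1}$-approximations of $f(\alpha)$ and $f(\beta)$, returns their difference, and pads it to length $\flength{\varphi}(\length{\langle\str a,\str b,\sdone^n\rangle})$; on inputs not of this shape it outputs a string of zeros of the same length. Every step is a composition of polynomial-time computable operations, so the whole realizer runs in polynomial time. The only points that require care are the bookkeeping with the extension by zero when the interval $[\bin{\str a},\bin{\str b}]$ sticks out of $[0,1]$, and the observation that the reused length $\flength{\varphi}$ is a genuine $\Lp$-modulus of $f'$; both are immediate from the definitions, so I do not expect a real obstacle beyond this bookkeeping. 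The substance of the corollary lies entirely in the already-established inclusion of Theorem~\ref{resu:Sobolev functions as continuous functions}.
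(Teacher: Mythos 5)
Your proposal is correct and follows essentially the same route as the paper: reuse $\flength{\varphi}$ as the $\Lp$-modulus of $f'$, obtain values of $f$ at dyadic points via the inclusion $W^{1,p}\hookrightarrow \C([0,1])$ of Theorem~\ref{resu:Sobolev functions as continuous functions}, and convert them into integrals of $f'$ through $f(y)-f(x)=\int_x^y f'\dd\lambda$. The paper's proof is terser and leaves the clamping to $[0,1]$ and the padding implicit, but the substance is identical.
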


		\begin{proof}
			A given $\xi_{1,p}$-name $\varphi$ of a function $f\in W^{1,p}$ can be transformed into a $\xi_{0,p}$ name of $f'$ in polynomial-time as follows:			
			An $\LL p$-modulus is contained in the $\xi_{1,p}$-name.
			It remains to compute the integrals.
			By the previous theorem it is possible to obtain approximations to the values of $f$ on dyadic numbers.
			Using the formula
			\begin{align*}
				f(y)-f(x) = \int_x^{y} f'\dd\lambda
			\end{align*}
			and the triangle inequality these can be converted to approximations of the integrals.
		\end{proof}

	\subsection[Higher derivatives]{\texorpdfstring{The space $W^{m,p}$}{Higher derivatives}}

		Recall from Definition~\ref{d:ximp} that a name of a $W^{m,p}$ function contains information about the integrals of the function over dyadic intervals and an $\Lp$-modulus of the highest derivative of $f$.
		If $m>1$ it is not so easy to combine information contained in the $\Lp$-modulus and in the integrals of the function.
		The key is to iteratively apply the mean value theorem:

		\begin{lem}\label{resu:mean value}
			Whenever $f\in W^{m,p}$ and $(x_i)_{i\in\{1,\ldots,2^{m-1}\}}\subseteq [0,1]$ are of pairwise distance at least $2^{-m}$ such that $\abs{f(x_i)}\leq C$, then there exists a $z\in[0,1]$ such that $f^{(m-1)}(z) \leq 2^{m^2-1}C$.
		\end{lem}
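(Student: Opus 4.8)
The plan is to obtain the point $z$ by applying the mean value theorem $m-1$ times, once per derivative, while keeping the successive collections of sample points well separated so that the resulting divided differences stay controlled. Since $f\in W^{m,p}$ in one dimension has a representative whose derivatives $f,f',\dots,f^{(m-1)}$ are all continuous, and $f^{(k)}$ is continuously differentiable with derivative $f^{(k+1)}$ for $k\le m-2$, the classical mean value theorem is available at every level, so the whole argument stays within classical calculus despite the $W^{m,p}$ hypothesis.

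Concretely, I would prove by induction on $k\in\{0,\dots,m-1\}$ the following invariant: there exist $2^{m-1-k}$ points in $[0,1]$ of pairwise distance at least $2^{-m}$ at which the absolute value of $f^{(k)}$ is bounded by $2^{(m+1)k}C$. The base case $k=0$ is exactly the hypothesis, since the $x_i$ are $2^{m-1}$ points of pairwise distance at least $2^{-m}$ with $\abs{f(x_i)}\le C$.

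For the inductive step I would sort the current points as $t_1<\dots<t_{2^{m-1-k}}$ and group them into the $2^{m-2-k}$ \emph{disjoint} adjacent pairs $(t_1,t_2),(t_3,t_4),\dots$. On each pair the mean value theorem yields a point $\xi_j\in(t_{2j-1},t_{2j})$ with $f^{(k+1)}(\xi_j)=\bigl(f^{(k)}(t_{2j})-f^{(k)}(t_{2j-1})\bigr)/(t_{2j}-t_{2j-1})$; the numerator is at most $2\cdot 2^{(m+1)k}C$ and the denominator at least $2^{-m}$, so $\abs{f^{(k+1)}(\xi_j)}\le 2^{m+1}\cdot 2^{(m+1)k}C=2^{(m+1)(k+1)}C$. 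The essential point is separation: because the pairs are disjoint, consecutive new points satisfy $\xi_j<t_{2j}<t_{2j+1}<\xi_{j+1}$, so $\xi_{j+1}-\xi_j>t_{2j+1}-t_{2j}\ge 2^{-m}$, and all further pairs are then even farther apart. This is precisely what pairing \emph{all} consecutive points would fail to guarantee (two mean-value points flanking a common sample point can cluster arbitrarily), and it is the reason for discarding half of the points at each stage; having $2^{m-1}$ points to begin with is exactly enough to survive $m-1$ such halvings.

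At level $k=m-1$ the invariant delivers a single point $z$ with $\abs{f^{(m-1)}(z)}\le 2^{(m+1)(m-1)}C=2^{m^2-1}C$, and in particular $f^{(m-1)}(z)\le 2^{m^2-1}C$, as claimed. The only genuinely delicate part is preserving the separation bound $2^{-m}$ through every level, which the disjoint-pairing device handles; everything else is the routine accumulation of the factor $2^{m+1}$ over $m-1$ steps, and this is where the exponent $(m+1)(m-1)=m^2-1$ comes from.
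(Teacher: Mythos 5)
Your proof is correct and follows essentially the same route as the paper's: an induction that halves the point set at each stage, applying the mean value theorem to disjoint adjacent pairs so that the separation $2^{-m}$ is preserved and the bound grows by a factor $2^{m+1}$ per level, yielding $2^{(m+1)(m-1)}C=2^{m^2-1}C$ after $m-1$ steps. The only difference is that you spell out the separation argument that the paper dismisses as obvious.
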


		\begin{proof}
			Recursively for any $k<m$ construct a family of points $(x^k_i)_{i\in\{1,\ldots,2^{m-k-1}\}}$ of pairwise distance at least $2^{-m}$ such that $f^{(k)}(x^k_i)\leq 2^{k(m+1)}C$.

			The case $k=0$ is taken care of by the assumption.
			Now assume availability of a family $x^{k-1}_i$ as needed.
			Since $f^{(k-1)}$ is a continuously differentiable function whenever $k<m$, the mean value theorem states that for any $j\in\{1,\ldots,2^{m-k-1}\}$ there is some element $x^k_j\in[x^{k-1}_{2j-1},x^{k-1}_{2j}]$ such that
			\[ f^{(k)}(x^k_j) = \frac{f^{(k-1)}(x^{k-1}_{2j-1}) - f^{(k-1)}(x^{k-1}_{2j})}{x^{k-1}_{2j-1}-x^{k-1}_{2j}} \]
			and therefore
			\[ \abs{f^{(k)}(x^k_j)} \leq 2\cdot 2^{(k-1)(m+1)} C \cdot 2^m = 2^{k(m+1)} C. \]
			Obviously, the distance of the points will not decrease.

			Setting $k=m-1$ proves the lemma.
		\end{proof}

		\begin{prop}[some Sobolev embeddings]
			The inclusion mapping $W^{m,p}\hookrightarrow W^{m-1,p}$ is polynomial-time computable.
		\end{prop}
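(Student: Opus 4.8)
The plan is to exhibit an oracle Turing machine that turns a $\xi_{m,p}$-name $\varphi$ of $f\in W^{m,p}$ into a $\xi_{m-1,p}$-name in polynomial time. The decisive observation is that the two representations demand the \emph{same} data about the integrals of $f$ itself over dyadic intervals, so those values can simply be read off $\varphi$. Consequently the whole problem reduces to manufacturing an $\Lp$-modulus of the new highest derivative $f^{(m-1)}$ out of the $\Lp$-modulus of $f^{(m)}$ carried by $\flength{\varphi}$, and then padding the outputs to the corresponding length. Since $m\geq 1$, the function $f$ and all of $f',\ldots,f^{(m-1)}$ are continuous, which is what makes the point-evaluation arguments below legitimate.

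First I would obtain a modulus of continuity of $f^{(m-1)}$: because $\flength{\varphi}$ is an $\Lp$-modulus of $f^{(m)}=(f^{(m-1)})'$, Lemma~\ref{resu:Lp-modulus of derivative to modulus of continuity} shows that $n\mapsto\flength{\varphi}(n+1)$ is a modulus of continuity of $f^{(m-1)}$. To upgrade this to an $\Lp$-modulus via Lemma~\ref{resu:from cont. mod. to Lp-mod.} (applied with $\Omega=[0,1]$, whose characteristic function has modulus $n\mapsto n+1$), the single missing ingredient is a bound on $\norm{f^{(m-1)}}_\infty$. Producing this bound is the crux of the argument and the place where Lemma~\ref{resu:mean value} enters.

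To get the norm bound I would first extract enough well-separated point values of $f$ from the integrals alone, \emph{without} needing a modulus of continuity of $f$. Choosing centres $x_i:=(2i-1)2^{-m}$ for $i\in\{1,\ldots,2^{m-1}\}$ together with small pairwise disjoint intervals around them, the mean value theorem for integration yields points $y_i$ of pairwise distance at least $2^{-m}$ with $f(y_i)$ equal to the normalised integral of $f$ over the $i$-th interval; querying $\varphi$ to fixed precision then gives a constant $C$ with $\abs{f(y_i)}\leq C$. Feeding the $y_i$ into Lemma~\ref{resu:mean value} produces a single $z$ with $\abs{f^{(m-1)}(z)}\leq 2^{m^2-1}C$, and the modulus of continuity from the previous paragraph lifts this point bound to a bound on $\norm{f^{(m-1)}}_\infty$ by the same telescoping triangle-inequality as in the proof of Theorem~\ref{resu:sobolev functions as lp functions}.

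With the norm bound in hand, Lemma~\ref{resu:from cont. mod. to Lp-mod.} delivers an $\Lp$-modulus of $f^{(m-1)}$ computable from $\varphi$, and the machine pads each return value to the induced length, reading the integral approximations off $\varphi$ whenever the input has the form $\langle\str a,\str b,\sdone^n\rangle$ and returning a sufficiently long string of zeros otherwise. Because $m$ is fixed, only a constant number ($2^{m-1}$) of oracle queries and elementary arithmetic are involved, so the realizer runs in polynomial time. The step I expect to be the main obstacle is the third paragraph: arranging the auxiliary points $y_i$ so that they are simultaneously well separated enough to satisfy the hypothesis of Lemma~\ref{resu:mean value} and obtainable purely from integrals over disjoint intervals, so that no prior modulus of continuity of $f$ itself is required.
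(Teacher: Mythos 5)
Your proposal follows essentially the same route as the paper's proof: read the integral data directly off $\varphi$, obtain a modulus of continuity of $f^{(m-1)}$ from Lemma~\ref{resu:Lp-modulus of derivative to modulus of continuity}, use the mean value theorem for integrals to produce separated points with bounded values of $f$ feeding Lemma~\ref{resu:mean value}, lift the resulting point bound to a supremum-norm bound by the telescoping argument, and convert everything into an $\Lp$-modulus of $f^{(m-1)}$ via Lemma~\ref{resu:from cont. mod. to Lp-mod.} before padding. You even make explicit the one detail the paper leaves to the reader, namely placing the auxiliary intervals of radius at most $2^{-m-1}$ around centres $(2i-1)2^{-m}$ so that the points delivered by the mean value theorem are guaranteed to be $2^{-m}$-separated.
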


		\begin{proof}
			Let $\varphi$ be a $\xi_{m,p}$-name of a function $f\in W^{m,p}$.
			Compute the value of a $\xi_{m-1,p}$-name of $f$ on a string $\str a$ as follows:
			To get an $\Lp$-modulus of $f^{(m-1)}$ from the $\Lp$-modulus of $f^{(m)}$ use the previous Lemma:
			By Theorem~\ref{resu:Lp-modulus of derivative to modulus of continuity} the function $\mu(n):= \flength{\varphi}(n+1)$ is a modulus of continuity of $f^{(m)}$.
			Use the mean value theorem for integrals like in the proof of Theorem~\ref{resu:Sobolev functions as continuous functions} to produce a family of points and a constant $C$ that fulfill the assumptions of Lemma~\ref{resu:mean value}.
			The lemma provides an explicit bound for the values of $f^{(m-1)}$.
			Combine this with the modulus of continuity like at the end the proof of Theorem~\ref{resu:Sobolev functions as continuous functions} to get an integer bound $Q$ on $\lb(\norm{f^{(m-1)}}_\infty)$.
			By Lemma~\ref{resu:from cont. mod. to Lp-mod.}
			\[ n\mapsto \max\left\{\mu(n+1),n+Q+1\right\} \]
			is an $\Lp$-modulus of $f^{(m-1)}$.
			This function can be computed in polynomial-time and the padded return values of $\varphi$ are valid return values.
		\end{proof}
		The algorithm specified in this proof accesses the oracle about $2^m$ times.
		This does not lead to exponential time consumption as $m$ is fixed, however it might lead to large constants in the polynomials for the running time.

		The following Theorems can be proven by induction, where Theorems~\ref{resu:sobolev functions as lp functions} and \ref{resu:Sobolev functions as continuous functions} are the base cases and the previous proposition is the induction step.
		\begin{thm}\label{resu:higher sobolev functions as lp functions}
			The inclusion $W^{m,p}\hookrightarrow \Lp$ is polynomial-time computable.
		\end{thm}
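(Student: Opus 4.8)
The plan is to prove the statement by induction on $m$, using that polynomial-time computability of maps between represented spaces is closed under composition; this closure is a standard feature of the Kawamura--Cook framework and ultimately rests on the second-order polynomials being closed under composition together with the fact that a polynomial-time realizer produces output names whose size is bounded by a second-order polynomial in the size of the input.

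For the base case $m=1$ there is nothing to do: this is exactly Theorem~\ref{resu:sobolev functions as lp functions}, which asserts that $W^{1,p}\hookrightarrow\Lp$ is polynomial-time computable. For the induction step, fix $m\geq 2$ and assume the inclusion $W^{m-1,p}\hookrightarrow\Lp$ is polynomial-time computable. I would then factor the inclusion in question as the composition
\[ W^{m,p}\hookrightarrow W^{m-1,p}\hookrightarrow \Lp, \]
which agrees with $W^{m,p}\hookrightarrow\Lp$ as a set-theoretic map. The first arrow is polynomial-time computable by the preceding Proposition on Sobolev embeddings (each space carrying its representation $\xi_{m,p}$, resp.\ $\xi_{m-1,p}$), and the second by the induction hypothesis. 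Composing a polynomial-time realizer of the former with one of the latter and invoking the closure property above yields a polynomial-time realizer of $W^{m,p}\hookrightarrow\Lp$, completing the induction.

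The only genuinely load-bearing point to check is the composition step itself: one must confirm that the intermediate $\xi_{m-1,p}$-name output by the realizer of the Proposition has size bounded by a second-order polynomial in the size of the incoming $\xi_{m,p}$-name, so that applying the realizer of the induction hypothesis to it costs time that is still second-order polynomial in the original input. I expect the main source of bookkeeping here to be the growth of the constants rather than any structural obstacle: the Proposition's realizer queries its oracle on the order of $2^{m}$ times (via the iterated mean value argument of Lemma~\ref{resu:mean value}), so after $m-1$ compositions the running-time polynomials accumulate constants that blow up with $m$. Since $m$ is held fixed, however, this affects only the constants and not the second-order polynomial character of the bound, and the conclusion follows. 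All the analytic content has already been spent in Theorem~\ref{resu:sobolev functions as lp functions} and the Sobolev-embedding Proposition, so no further difficulty is anticipated.
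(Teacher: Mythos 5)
Your proposal is correct and is essentially the paper's own argument: the paper proves this theorem by induction with Theorem~\ref{resu:sobolev functions as lp functions} as the base case and the Sobolev-embedding Proposition ($W^{m,p}\hookrightarrow W^{m-1,p}$) as the induction step, composing the polynomial-time realizers. Your remark about the $2^m$ oracle queries affecting only the constants (since $m$ is fixed) matches the paper's own comment following that Proposition.
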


		\begin{thm}\label{resu:higher sobolev functions as continuous functions}
			The inclusion $W^{m,p}\hookrightarrow \C([0,1])$ is polynomial-time computable.
		\end{thm}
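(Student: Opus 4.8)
The plan is to prove the statement by induction on $m$, exactly as announced before the theorem. The base case $m=1$ is precisely Theorem~\ref{resu:Sobolev functions as continuous functions}, so there is nothing to do there. For the inductive step I would assume that the inclusion $W^{m-1,p}\hookrightarrow \C([0,1])$ is already polynomial-time computable and observe that the inclusion $W^{m,p}\hookrightarrow \C([0,1])$ factors as
\[ W^{m,p}\hookrightarrow W^{m-1,p}\hookrightarrow \C([0,1]), \]
where the first arrow is polynomial-time computable by the previous proposition and the second by the induction hypothesis.

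It then remains to argue that polynomial-time computability of maps between (second-order) represented spaces is preserved under composition. I would fix polynomial-time realizers $F$ and $G$ of the two inclusions together with second-order polynomials $P$ and $Q$ bounding their running times, and consider $G\circ F$ as a realizer of the composite inclusion. On input $\str a$ the machine for $F$ runs for at most $P(\flength{\varphi},\length{\str a})$ steps and hence writes at most that many symbols, so $\flength{F(\varphi)}(\sdone^n)\leq P(\flength{\varphi},n)$. Using that second-order polynomials are monotone in their function argument and that, by the recursive definition in \Cref{sec:sub:second-order complexity theory}, substituting a second-order polynomial into the function slot of another second-order polynomial again yields a second-order polynomial, the running time $Q(\flength{F(\varphi)},\length{\str a})$ of $G$ on the intermediate name is bounded by a second-order polynomial in $\flength{\varphi}$ and $\length{\str a}$. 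This bounds the running time of $G\circ F$, so the composite is polynomial-time computable. Theorem~\ref{resu:higher sobolev functions as lp functions} would be obtained identically, with Theorem~\ref{resu:sobolev functions as lp functions} as the base case.

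I do not expect a genuine obstacle: all the real content already sits in the base cases and in the embedding proposition, and since $m$ is fixed the induction has only constantly many steps, so no blow-up in the running time occurs. The one point deserving care—and the closest thing to a subtlety—is the bookkeeping just sketched, namely that composing two second-order-polynomial-time realizers again yields a second-order-polynomial running time; this is exactly where one uses closure of second-order polynomials under composition and the fact that the intermediate space carries a second-order representation, so that the size $\flength{F(\varphi)}$ of the intermediate name is controlled.
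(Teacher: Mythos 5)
Your proposal matches the paper exactly: the paper proves this theorem (and Theorem~\ref{resu:higher sobolev functions as lp functions}) by induction on $m$, with Theorem~\ref{resu:Sobolev functions as continuous functions} as the base case and the proposition on the embedding $W^{m,p}\hookrightarrow W^{m-1,p}$ as the induction step, leaving the closure of polynomial-time computability under composition implicit. Your additional bookkeeping about second-order polynomials being closed under substitution and the intermediate name's length being bounded by the running time is the standard (correct) justification for that implicit step.
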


		Finally consider the differentiation operator:
		\begin{cor}\label{resu:differentation}
			The $k$-wise differentiation operator
			\[ \frac{d^k}{dx^k}:W^{m,p} \to W^{m-k,p}, \quad f \mapsto f^{(k)} \]
			is polynomial-time computable for all $k\leq m$.
		\end{cor}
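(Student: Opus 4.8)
The plan is to factor the $k$-wise differentiation operator into $k$ single differentiations and to treat the single step directly. Concretely, write
\[ \frac{d^k}{dx^k} = \underbrace{\frac{d}{dx}\circ\cdots\circ\frac{d}{dx}}_{k\text{ times}}, \]
where the innermost factor sends $W^{m,p}$ to $W^{m-1,p}$, the next sends $W^{m-1,p}$ to $W^{m-2,p}$, and so on down to $W^{m-k,p}$. Since $k\le m$ is fixed and is not part of the input, and since the composition of finitely many polynomial-time computable functions between represented spaces is again polynomial-time computable (a standard closure property of second-order polynomial-time computability), it suffices to show that the single differentiation operator $\frac{d}{dx}\colon W^{j,p}\to W^{j-1,p}$ is polynomial-time computable for every $1\le j\le m$. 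For $j=1$ this is exactly the earlier differentiation corollary for $W^{1,p}\to\Lp$, so the work lies in the general single step.

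For the single step I would describe an oracle Turing machine that turns a $\xi_{j,p}$-name $\varphi$ of $f\in W^{j,p}$ into a $\xi_{j-1,p}$-name of $f'$. By Definition~\ref{d:ximp} such an output name carries two pieces of data. The modulus part is immediate: the highest derivative of $f'$ is $(f')^{(j-1)}=f^{(j)}$, and $\flength{\varphi}$ is already an $\Lp$-modulus of $f^{(j)}$ by the assumption on $\varphi$; hence the machine can reuse $\flength{\varphi}$ verbatim and only needs to pad its outputs to the appropriate length. The integral part uses the fundamental relation $\int_{\bin{\str a}}^{\bin{\str b}} f'\dd\lambda = f(\bin{\str b})-f(\bin{\str a})$, valid because $f$ is the continuous representative. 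By Theorem~\ref{resu:higher sobolev functions as continuous functions} (or Theorem~\ref{resu:Sobolev functions as continuous functions} when $j=1$) the inclusion $W^{j,p}\hookrightarrow\C([0,1])$ is polynomial-time computable, so from $\varphi$ the machine can obtain, in polynomial time, a $\xic$-name of $f$ and thus $2^{-n-1}$-approximations to $f(\bin{\str a})$ and $f(\bin{\str b})$; their difference is a $2^{-n}$-approximation to the required integral. Padding this value to length $\flength{\varphi}$ yields a valid return value, and on inputs not of the form $\langle\str a,\str b,\sdone^n\rangle$ the machine returns a sufficiently long string of zeros.

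The main thing to check carefully is bookkeeping rather than a genuine obstacle. I would verify that reusing $\flength{\varphi}$ as the modulus is consistent with the padding of the integral approximations, so that the output stays length-monotone and its length equals the prescribed $\Lp$-modulus of $f^{(j)}$, and that the precision requests passed to the $\C([0,1])$-translation depend polynomially on the input size so that the composite machine stays within a second-order polynomial time bound. Finally, in assembling the $k$ single steps, I would note that each intermediate name is produced in polynomial time and that the fixed number $k$ of compositions amounts to substituting second-order polynomials into one another a constant number of times, which again yields a second-order polynomial; this completes the proof for all $k\le m$.
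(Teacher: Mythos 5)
Your proposal is correct and follows essentially the same route as the paper: the paper's proof likewise obtains values of $f$ on dyadic points via the inclusion $W^{m,p}\hookrightarrow\C([0,1])$, converts them to integrals of $f'$ through $f(x)-f(y)=\int_y^x f'\dd\lambda$, and iterates this $k$ times, which is exactly your decomposition into $k$ single differentiation steps. Your additional bookkeeping (reuse of $\flength{\varphi}$ as the modulus of the output name and closure of second-order polynomials under a fixed number of compositions) just makes explicit what the paper leaves implicit.
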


		\begin{proof}
			By the previous theorem obtain approximations to the values of $f$ on dyadic elements.
			By means of
			\[ f(x)-f(y) = \int_y^x f'\dd\lambda \]
			convert these into approximations of the integrals over $f'$.
			Iterate this process $k$-times to obtain approximations to the integrals over $f^{(k)}$.
		\end{proof}

%!TEX root=Lp.tex
\section[Motivating the use of the Lp-modulus]{Motivating the use of the \texorpdfstring{$\Lp$}{Lp}-modulus}\label{sec:motivationg the use of the lp-modulus}

	This last chapter provides evidence that the $\Lp$-modulus is far of from an arbitrary choice as the length parameter for a representation of $\Lp$.
	The origin of the notion of an $\Lp$-modulus as replacement for the modulus of continuity for $\Lp$-spaces is a classification theorem of the compact subsets of $\Lp$-spaces.
	Before considering the $\Lp$ case recall the following well known theorem from analysis:
	\begin{thm}[\aaT]\label{resu:aaT classical}
		A subset of $\C([0,1])$ is relatively compact if and only if it is bounded and equicontinuous.
	\end{thm}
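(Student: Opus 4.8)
The plan is to establish both implications, relying on the fact that $\C([0,1])$ with the supremum metric is complete, so a subset is relatively compact precisely when every sequence drawn from it admits a subsequence converging uniformly to some element of $\C([0,1])$; equivalently, when the set is totally bounded. I would dispatch the direction ``relatively compact implies bounded and equicontinuous'' quickly and reserve the real effort for the converse.

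For the forward direction, assume $F$ is relatively compact. Boundedness is immediate, since the compact closure $\overline F$ is bounded in supremum norm. For equicontinuity, fix $\varepsilon>0$ and use total boundedness to cover $F$ by finitely many balls of radius $\varepsilon/3$ centred at functions $g_1,\ldots,g_N$. Each $g_i$ is continuous on the compact interval $[0,1]$, hence uniformly continuous, and so admits a $\delta_i>0$ with $\abs{g_i(x)-g_i(y)}<\varepsilon/3$ whenever $\abs{x-y}<\delta_i$. Putting $\delta:=\min_i\delta_i$ and, for any $f\in F$, choosing $g_i$ with $\norm{f-g_i}<\varepsilon/3$, the triangle inequality $\abs{f(x)-f(y)}\le\abs{f(x)-g_i(x)}+\abs{g_i(x)-g_i(y)}+\abs{g_i(y)-f(y)}$ produces a common modulus, which is exactly equicontinuity.

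For the converse, assume $F$ is bounded and equicontinuous, and take an arbitrary sequence $(f_n)\subseteq F$. I would first run the classical diagonal argument over a countable dense subset of $[0,1]$ (the dyadic points being a convenient choice): at each fixed dyadic point the values $(f_n)$ form a bounded sequence of reals, so the Bolzano--Weierstrass theorem yields a convergent subsequence, and by nesting these selections and passing to the diagonal one obtains a single subsequence $(f_{n_k})$ converging at every dyadic point. The decisive step is then to promote this pointwise convergence on a dense set to uniform convergence: given $\varepsilon>0$, equicontinuity supplies a $\delta>0$ with $\abs{f(x)-f(y)}<\varepsilon/3$ for all $f\in F$ whenever $\abs{x-y}<\delta$, and covering $[0,1]$ by finitely many intervals of length below $\delta$ with one dyadic point chosen in each reduces the uniform estimate to finitely many points, where $(f_{n_k})$ is Cauchy. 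A triangle-inequality argument then shows $(f_{n_k})$ is uniformly Cauchy, and completeness of $\C([0,1])$ delivers a uniform limit, which is itself continuous.

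I expect the main obstacle to be exactly this promotion step in the converse: transferring convergence from a dense set to uniform convergence on all of $[0,1]$. This is the point at which equicontinuity, rather than mere pointwise boundedness, becomes indispensable, and where the finite subcover of the interval must be meshed carefully with the diagonal subsequence so that the $\varepsilon/3$ estimates combine.
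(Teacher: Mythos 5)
Your argument is correct and is the standard textbook proof: sequential compactness via a diagonal extraction over a countable dense set, with equicontinuity used to upgrade pointwise convergence on the dense set to uniform Cauchyness, and the forward direction by an $\varepsilon/3$-argument from a finite net of uniformly continuous centres. Note, however, that the paper does not actually prove Theorem~\ref{resu:aaT classical}; it cites it as a well-known fact, and the proof of the quantitative refinement (Theorem~\ref{resu:aaT}) explicitly declines to repeat the classical argument. The closest thing to a proof in the paper is that refinement itself, and it proceeds by a genuinely different route: rather than extracting convergent subsequences, it directly exhibits an explicit finite $2^{-n}$-net of piecewise linear functions $f_\sigma$ indexed by a set of size $(2^{n+C+1}+1)3^{2^{l(n)}}$, which establishes total boundedness (hence relative compactness, by completeness of $\C([0,1])$) \emph{and} yields the metric-entropy upper bound needed in \Cref{sec:motivationg the use of the lp-modulus}. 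Your sequential approach is shorter and suffices for the qualitative statement, but it is non-quantitative: the Bolzano--Weierstrass and diagonal steps give no control over how many balls are needed, so it could not replace the net construction where the paper actually uses this circle of ideas. One small point worth making explicit in your forward direction: the centres $g_i$ of the covering balls should be taken in $\C([0,1])$ (e.g.\ in $F$ itself), so that their uniform continuity on the compact interval is available; this is immediate but is the place where compactness of $[0,1]$ enters that half of the argument.
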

	Equicontinuity of a subset of $\C([0,1])$ is equivalent to the existence of a common modulus of continuity of all of the elements.
	Thus, this theorem provides a direct connection between compactness of a set of functions and their moduli of continuity.

	A similar theorem is known for $\Lp$-spaces, where equicontinuity is replaced by the existence of a common $\Lp$-modulus.
	\begin{thm}[\fkT]\label{resu:fkT classical}
		Let $1\leq p <\infty$.
		A subset $F$ of $\LLL p{[0,1]}$ is relatively compact if and only if it is bounded and there is a function $\mu:\omega\to\omega$ that is an $\Lp$-modulus of all of the functions from $F$.
	\end{thm}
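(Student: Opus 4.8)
The plan is to prove the two implications separately, using completeness of $\LLL p{[0,1]}$ to replace \emph{relative compactness} by \emph{total boundedness} throughout. For the necessity direction I would argue as follows. Relative compactness yields boundedness at once, so it remains to produce a common $\Lp$-modulus. Fix $n$ and cover $F$ by finitely many $\Lp$-balls of radius $2^{-n-2}$ about functions $g_1,\dots,g_N\in\LLL p{[0,1]}$ (total boundedness). Each $g_i$ has an $\Lp$-modulus, so there is $m_i$ with $\norm{\tilde g_i-\tau_h\tilde g_i}_p<2^{-n-2}$ whenever $\abs h\leq 2^{-m_i}$. Using that $\tau_h$ is an isometry on $\LLL p{\RR}$ and that extension by zero is isometric, a three-term triangle inequality gives $\norm{\tilde f-\tau_h\tilde f}_p<2^{-n}$ for every $f\in F$ as soon as $\abs h\leq 2^{-\max_i m_i}$. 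Collecting these thresholds over all $n$ and forcing strict monotonicity where nonzero assembles a single function $\mu$ that is simultaneously an $\Lp$-modulus of every member of $F$.

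For the sufficiency direction I would exploit the continuous approximations $f_m$ from Definition~\ref{def:continuous approximations} together with the lemmas already established for them. Suppose $F$ is bounded by $B$ in $\Lp$ and admits the common $\Lp$-modulus $\mu$. Fix $n$ and set $m:=\mu(n+1)$. By the approximation estimate of Lemma~\ref{resu:continuous approximations}, $\norm{\tilde f-f_m}_p<2^{-n-1}$ for every $f\in F$, and crucially this bound is uniform over $F$ because $\mu$ is common. At this fixed $m$ the family $\{f_m\mid f\in F\}$ consists of continuous functions that are (i) uniformly bounded, since Hölder's inequality in the form of Corollary~\ref{cor:hoelder} gives $\norm{f_m}_\infty\leq 2^{m/p}B$, and (ii) equicontinuous, since by Lemma~\ref{resu:modulus of the continuous approximations} the function $n'\mapsto\mu(n'+\lceil 1/p\rceil m)$ is a modulus of continuity of $f_m$ for all $f\in F$ at once.

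I would then invoke the classical \aaT Theorem~\ref{resu:aaT classical}, applied on a fixed compact interval containing the supports of all the $f_m$ (say $[-1,2]$), to conclude that $\{f_m\mid f\in F\}$ is relatively compact, hence totally bounded, in the supremum norm. Since on $[-1,2]$ one has $\norm g_p\leq 3^{1/p}\norm g_\infty$, total boundedness in $\norm\cdot_\infty$ transfers to total boundedness in $\Lp$: cover $\{f_m\mid f\in F\}$ by finitely many $\Lp$-balls of radius $2^{-n-1}$. Combining this finite cover with the uniform estimate $\norm{\tilde f-f_m}_p<2^{-n-1}$ through the triangle inequality produces a finite cover of $F$ by $\Lp$-balls of radius $2^{-n}$. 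As $n$ was arbitrary, $F$ is totally bounded, and by completeness of $\LLL p{[0,1]}$ it is relatively compact.

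The main obstacle I anticipate is the bookkeeping around extension by zero: the approximations $f_m$ live on all of $\RR$ and are in general nonzero on a small neighbourhood of the boundary of $[0,1]$, so the \aaT step must be carried out on a slightly enlarged compact interval, and the transfer of total boundedness back to $\LLL p{[0,1]}$ must use that extension by zero is an isometric embedding into $\LLL p{\RR}$. The remaining care is routine: assembling the per-$n$ thresholds into a single monotone $\mu$ in the necessity direction, and checking that the constants in the uniform norm bound and in the equicontinuity modulus genuinely depend only on $F$ and $m$, not on the individual function $f$.
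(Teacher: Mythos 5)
Your proof is correct, and the sufficiency direction is essentially the argument the paper itself deploys: the paper states Theorem~\ref{resu:fkT classical} as a known classical fact without proof, but its proof of the upper bound in the quantitative Theorem~\ref{resu:fkT} uses exactly your chain — uniform approximation by the $f_m$ via Lemma~\ref{resu:continuous approximations}, a common modulus of continuity via Lemma~\ref{resu:modulus of the continuous approximations}, the \aaT theorem, and the comparison of supremum-norm balls with $\Lp$-balls. The necessity direction (a finite $2^{-n-2}$-net plus the isometry of $\tau_h$ and of extension by zero) is the standard covering argument and is also sound.
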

	These statements are only qualitative.
	Quantitative refinements can be related optimality results for second-order representations of these spaces.
	For these refinements a notion of \lq size\rq\ for compact sets is needed.

	\subsection{Metric entropies and spanning bounds}\label{sec:sub:metric entropy}

		It is well known that in a complete metric space a subset is relatively compact if and only if it is totally bounded.
		The following notion is a straight forward quantification of total boundedness and can be used to measure the \lq size\rq\ of compact subsets of metric spaces.
		It was first considered in \cite{MR0112032}, where many of the names we use originate.
		A comprehensive overview can be found in \cite{lorentz1966}.
		These notions have been applied to computable analysis before \cite{MR1952428} and were also used in \cite{MR2130066}.
		For the following let $M$ be a metric space and $d$ its metric.

		\begin{defi}\label{def:metric entropy and size}
			A function $\nu:\omega \to \omega$ is called \demph{modulus of total boundedness} of a subset $K$ of $M$, if for any $n\in\omega$ there are $2^{\nu(n)}$ balls of radius $2^{-n}$ that cover $K$.
			The smallest modulus of total boundedness is called the \demph{metric entropy} or \demph{size} of the set and denoted by $|K|$.
		\end{defi}

		Thus
		\[ \size{K}(n) = \min\{k\in\omega\mid K \text{ can be covered by }2^k\text{ balls of radius }2^{-n}\}. \]
		Like the smallest modulus of continuity of a function, the metric entropy of a set is usually hard to get hold of.
		Moduli of total boundedness as upper bounds can more often be chosen computable.
		In a complete metric space a closed set permits a metric entropy if and only if it is compact.

		A modulus of total boundedness is an upper bound on the size of a compact set.
		For providing lower bounds, another notion is more convenient.

		\begin{defi}\label{def:spanning bound}
			A function $\eta:\omega\to\omega$ is called a \demph{spanning bound} of a subset $K\subseteq M$, if for any $n$ there exist elements $x_1,\ldots,x_{2^{\eta(n)}}$ such that
			\[ i\neq j \quad \Rightarrow\quad d(x_i,x_j)\geq 2^{-n+1}. \]
			If there is a biggest spanning bound, it is called the \demph{capacity} of $K$ and denoted by $\capa{K}$.
		\end{defi}

		The condition on the $x_i$ in the definition can be read as \lq the $2^{-n}$-balls around the points are disjoint\rq.
		There is a biggest spanning bound if and only if the set $K$ is relatively compact.
		The following is straight forward to verify:

		\begin{prop}\label{resu:lower bound}
			Let $K\subseteq M$ be a subset, $\nu$ be a metric entropy of $K$ and $\eta$ a spanning bound.
			Then $\eta(n)\leq \nu(n)$, and furthermore $\size{K}(n)\leq \capa{K}(n+1)+1$.
		\end{prop}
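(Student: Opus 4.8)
The plan is to reduce the proposition to the classical comparison of packing and covering numbers, the only delicate point being that $\size{\cdot}$ and $\capa{\cdot}$ record integer exponents of two rather than the raw numbers.

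First I would establish $\eta(n)\le\nu(n)$ by pigeonhole. Fix $n$. As $\eta$ is a spanning bound there are points $x_1,\dots,x_{2^{\eta(n)}}\in K$ with $d(x_i,x_j)\ge 2^{-n+1}$ for $i\ne j$, and $\nu$ furnishes a cover of $K$ by $2^{\nu(n)}$ balls of radius $2^{-n}$; in particular each $x_i$ lies in one of these balls. No ball can contain two of the points, for if $x_i$ and $x_j$ both lay in the ball of radius $2^{-n}$ about some centre $c$, the triangle inequality would give $d(x_i,x_j)\le d(x_i,c)+d(c,x_j)<2^{-n}+2^{-n}=2^{-n+1}$, contradicting their separation. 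Assigning to each $x_i$ a ball containing it is therefore injective, so $2^{\eta(n)}\le 2^{\nu(n)}$ and hence $\eta(n)\le\nu(n)$; specialising to $\eta=\capa K$ and $\nu=\size K$ yields $\capa K(n)\le\size K(n)$.

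For the bound $\size K(n)\le\capa K(n+1)+1$ the idea is that a maximal separated family is automatically a cover. Let $P$ be the largest number of points of $K$ that are pairwise at distance at least $2^{-n}$, and fix such a family $x_1,\dots,x_P$ attaining the maximum. By maximality every $y\in K$ lies within distance strictly less than $2^{-n}$ of some $x_i$ --- otherwise $y$ could be adjoined to the family without destroying the separation --- so the balls of radius $2^{-n}$ about $x_1,\dots,x_P$ cover $K$. Consequently the least number $C$ of radius-$2^{-n}$ balls needed to cover $K$ satisfies $C\le P$.

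It remains to convert $C$ and $P$ into the integer-exponent quantities, and this ceiling--floor accounting is the one step requiring care. By the definitions $\size K(n)=\lceil\lb(C)\rceil$ (the least $k$ with $2^k\ge C$), while $\capa K(n+1)=\lfloor\lb(P)\rfloor$ (the largest $k$ with $2^k\le P$, the level $n+1$ matching the separation $2^{-(n+1)+1}=2^{-n}$ used above). Combining $C\le P$ with the elementary inequality $\lceil t\rceil\le\lfloor t\rfloor+1$ gives $\size K(n)=\lceil\lb(C)\rceil\le\lceil\lb(P)\rceil\le\lfloor\lb(P)\rfloor+1=\capa K(n+1)+1$. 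The main obstacle is precisely this additive $+1$: in terms of the raw numbers the maximal packing only yields $\size K(n)\le\capa K(n+1)$, and the extra unit is exactly the gap between the ceiling built into $\size{\cdot}$ and the floor built into $\capa{\cdot}$.
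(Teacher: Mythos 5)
Your proof is correct: the pigeonhole argument for $\eta(n)\leq\nu(n)$ and the maximal-packing-is-a-cover argument for $\size{K}(n)\leq\capa{K}(n+1)+1$ are exactly the classical comparison the paper delegates to its citation of Kolmogorov--Tikhomirov, and your ceiling/floor accounting correctly isolates the additive $+1$ that the paper itself attributes to rounding to powers of two. Nothing is missing.
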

		A proof can for instance be found in \cite[Theorem IV]{MR0112032}, however, the result presented there is a little sharper since rounding to powers of two is avoided.

		This implies comparability of the size and the capacity in the sense that
		\[ \capa{K}(n) \leq \size{K}(n)\leq \capa{K}(n+1)+1. \]
		This paper uses spanning bounds to provide lower bounds to the size of sets.
		The capacity is not mentioned again.

	\subsection{Connecting metric entropy and complexity}\label{sec:connecting metric entropy and complexity}

		For sake of completeness this chapter states the result connecting the metric entropy to computational complexity.
		Some of the results presented in this chapter are in a slightly different form contained in \cite{CIE2016}.
		There are several examples of similar observations \cite{MR1911553,MR3239272}.

		As already mentioned, neither the norm nor the metric of $\C([0,1])$ or $\Lp$ are polynomial-time computable with respect to the representations considered in this paper.
		It is, however, possible to give resource bounds:
		The operations are polynomial space computable.
		Since space restricted computation in presence of oracles is tricky (compare \cite{MR973445,MR3259646}), this paper considers the exponential-time computations instead.
		\begin{defi}\label{def:exponential time computability}
			An oracle Turing machine $M^?$ is said to run in \demph{exponential-time} if there are constants $A,B,C\in\omega$ such that the computation $M^\psi(\str a)$ of $M^?$ with oracle $\varphi$ on input $\str a$ terminates after at most $2^{A\cdot \flength{\varphi}(\length{\str a}+B)+C\length{\str a}}$ steps.
		\end{defi}
		Call a function between represented spaces \demph{exponential-time computable} if it has an exponential-time computable realizer (compare \Cref{sec:sub:second-order complexity theory}).
		This notion of exponential-time computability is highly adapted for the concrete application at hand and quite restrictive:
		While an exponential running time is only bounded by a second-order polynomial if it is constant, not all second-order polynomials can be bounded by exponentials.
		As soon as there are iterations of the first order argument, no such bound exists.

		Furthermore, the following notions are needed:
		\begin{defi}
			A function $l:\omega\to\omega$ is called a \demph{length} of a second-order representation if each element of the represented space has a name of length at most $l$.
		\end{defi}
		It can be proven that an open representation of a compact space always has a length.
		The standard representations of the real numbers, the continuous functions, $\Lp$-spaces and integrable functions discussed in this paper so far do not have a  length.
		But since all of them are open mappings, their range restrictions to compact subsets do have a length.
		The proof of Theorem~\ref{resu:minimality of the singular representation d} makes use of the finite length of the restriction of the standard representation of the reals to any bounded set.

		\begin{thm}\label{resu:metric entropy and complexity}
			Let $M$ be a compact metric space of at least linear metric entropy.
			\begin{enumerate}
				\item Assume that there exists a representation of $M$ of length $l$ such that the metric is computable in exponential-time.
				Then there exist some $A,B\in\omega$ such that
				\[ \size{M}(n)\leq 2^{Al(n)+B}. \]
				\label{resu:metric entropy and complexity item:bounded entropy}
				\item Let $l:\omega\to\omega$ be monotone such that $\size{M}(n)\leq2^{l(n)}$.
				Then there exists a representation $\xi$ of $M$ that has length $l$ such that the metric is computable in exponential-time.\label{resu:metric entropy and complexity item:bounded time}
			\end{enumerate}
		\end{thm}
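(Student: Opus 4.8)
The plan is to prove the two implications separately; both rest on the fact that an exponential-time oracle machine can read only exponentially many oracle cells, which is what ties the running time of the metric to the number of $2^{-n}$-balls needed to cover $M$.

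For the first part I would fix a length-$l$ representation and an oracle machine $N^?$ computing $d$ in exponential time with constants $A_0,B_0,C_0$ as in Definition~\ref{def:exponential time computability}, together with a name $\varphi_x$ with $\flength{\varphi_x}\le l$ for each $x$. On a precision request $\sdone^{n+1}$ and oracle $\langle\varphi_x,\varphi_y\rangle$ the machine halts within $T:=2^{A_0 l(n+1+B_0)+C_0(n+1)}$ steps; monotonicity of $\flength{\cdot}$ and of the pairing in both arguments bounds $\flength{\langle\varphi_x,\varphi_y\rangle}$ by $l$ up to the constants. First I would attach to each $x$ the transcript of the \emph{diagonal} run $N^{\langle\varphi_x,\varphi_x\rangle}(\sdone^{n+1})$, i.e.\ the string of at most $T$ answer-bits it inspects; there are at most $2^{O(T)}$ such transcripts. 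The crucial point is that equal transcripts force $d(x,y)<2^{-n}$: since the queried positions are determined by the bits read so far, two names with the same diagonal transcript agree on exactly the bits inspected along that common path, and an induction on the steps shows that the mixed run $N^{\langle\varphi_x,\varphi_y\rangle}$ then follows the very same path, outputting the same $2^{-n-1}$-approximation of $d(x,x)=0$. Hence the transcript map partitions $M$ into at most $2^{O(T)}$ classes of diameter $<2^{-n}$, so $\size{M}(n)\le O(T)$, which is of the required form $2^{Al(n)+B}$ after absorbing the additive $C_0 n$ into the constants; this is where the hypothesis of at least linear entropy is used.

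For the second part I would build a hierarchical-net, Cauchy-type representation. Fix for every scale $m$ a $2^{-m}$-net $C_m$ of cardinality $\le 2^{\size{M}(m)}\le 2^{2^{l(m)}}$. A name of $x$ shall encode, for each $m$, the index (of size $\le 2^{2^{l(m)}}$) of a net point $c_m(x)\in C_m$ with $d(x,c_m(x))<2^{-m}$; this index carries at most $2^{l(m)}$ bits, which I would \emph{spread} over queries $\langle\sdone^m,\str j\rangle$, with $\str j$ ranging over the $\le 2^{l(m)}/l(m)$ chunks of length $\le l(m)$ and each query returning one chunk (padding so the name is length-monotone of length exactly $l$). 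The metric machine, on request $\sdone^n$, reads the scale-$(n+2)$ indices of $x$ and $y$ from both names — at most $2^{l(n)}$ cells, comfortably within the exponential budget — thereby locating $c:=c_{n+2}(x)$ and $c':=c_{n+2}(y)$, and returns an approximation to $d(c,c')$, which by the triangle inequality differs from $d(x,y)$ by at most $2^{-n-1}$.

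The main obstacle is this very last step: the net points are abstract elements of $M$, so the machine has no a priori access to the distances $d(c,c')$, and storing all pairwise net-point distances in the names is far too expensive, since there are $2^{2^{l(m)}}$ points at scale $m$ whereas a length-$l$ name carries only about $2^m l(m)$ bits up to scale $m$. I expect the resolution to require equipping the construction with a \emph{computable system of approximate net-point distances}: choosing the nets nested and encoding, alongside each index, just enough relative metric data to reconstruct $d(c,c')$ hierarchically while keeping the per-scale data within the length budget, and checking that this reconstruction stays within the bound of Definition~\ref{def:exponential time computability}. This balancing is the technical heart of the theorem; the remaining bookkeeping — that the representation is well defined, surjective, of length $l$, and that reading indices plus evaluating stored distances runs in exponential time — is routine.
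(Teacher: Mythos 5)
Your argument for item (1) is essentially the paper's: count the communication transcripts an exponential-time machine can produce, and observe that two points whose diagonal transcripts coincide are forced to be $2^{-n}$-close, so the transcript classes give a cover by $2^{-n}$-balls. This is sound in outline (the step where equal diagonal transcripts force the mixed run $N^{\langle\varphi_x,\varphi_y\rangle}$ down the same path needs the bits of a paired answer to decompose position-by-position into bits of the two components, which holds for the standard pairing and is easiest when all names are padded to length exactly $l$). Be aware, though, that what this argument literally yields is $\size{M}(n)\leq 2^{A_0 l(n+1+B_0)+C_0(n+1)+c}$, and your remark that the additive $C_0 n$ is ``absorbed into the constants using linear entropy'' is an assertion, not an argument: linear metric entropy of $M$ gives no lower bound on $l$, so neither the term $C_0 n$ nor the shift $n\mapsto n+1+B_0$ disappears for free. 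The paper's sketch is equally silent on this point, so I only flag it.

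The genuine gap is in item (2), and it stems from a miscounting of the information capacity of a name. You reject the natural completion --- storing approximate pairwise distances of the net points inside each name --- on the grounds that ``a length-$l$ name carries only about $2^m l(m)$ bits up to scale $m$.'' That figure counts only the queries of length at most $m$. A name is a total string function; the distance table for the scale-$m$ net is stored at queries of length roughly $2\cdot2^{l(m)}+m$ (a query spells out the two indices together with a precision requirement), and at that input length there are about $2^{2^{l(m)+1}}$ available queries, each answered by a short dyadic approximation of length about $m+\lceil\lb(\diam(M))\rceil$. So the table fits comfortably, and, crucially, the \emph{length} of a second-order representation constrains only the size of each individual answer as a function of the query length, not the total information content, so adjoining this component survives the padding to length $l$. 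This is exactly what the paper's construction does: take the Cauchy representation with respect to a dense sequence whose initial segments are nets, flatten the exponentially long indices over many short answers (as you do), and add to every name one and the same oracle for $(i,j)\mapsto d(x_i,x_j)$. Nothing requires the representation to be computable, so the names may carry this non-computable metric data outright; the ``hierarchical reconstruction of relative metric data'' you anticipate is unnecessary. With that component in place your metric machine reads the two scale-$(n+2)$ indices, makes a single query to the distance component, and answers via the triangle inequality, all within the budget of Definition~\ref{def:exponential time computability}. As written, however, your proof of item (2) stops exactly at its declared ``technical heart'' and is therefore incomplete.
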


		\begin{proof}[Sketch of the proof]
			To prove \Cref{resu:metric entropy and complexity item:bounded entropy} use the folklore fact that a running time restricts the access a machine has to the oracles (compare for instance \cite{MR1911553}).
			Make this a quantitative statement by bounding the number of possible communication sequences.
			From this obtain a bound on the number of pairs $\langle \varphi,\psi\rangle$ that can be distinguished in a computation of the norm up to precision $2^{-n}$.
			This leads to a bound on the size of the set.

			To prove \Cref{resu:metric entropy and complexity item:bounded time} let $l$ be such that $\size M(n)\leq 2^{l(n)}$.
			Then there exists a sequence such that the balls around the first $2^{2^{l(n)}+\lceil\lb(n+1)\rceil}$ elements cover $M$.
			Consider the Cauchy representation of $M$ with respect to this sequence (according to Definition~\ref{def:metric spaces}).
			This representation has length $2^{l(n)}+\lceil{\lb(n+1)}\rceil$.
			Add an oracle for the function $(n,m)\mapsto d(x_n,x_m)$ to each name, truncate the representation and flatten it to obtain one of constant length.
			Now pad the length of each name to be $l(n)$.
			This is the desired representation.
		\end{proof}

		\begin{exa}[Lipschitz functions]
			Consider as $K$ the set of Lipschitz functions that vanish in $0$ and have a common Lipschitz constant $2^L$.
			This set is compact by the \aaT Theorem from Theorem~\ref{resu:aaT classical}.
			It is easy to verify that $\size{K}(n)=\lceil\lb(3)2^{n+L}\rceil$ (see for instance \cite{MR1952428}).
			Let $\xi$ be an open representation of this set such that the metric is computable in exponential time.
			As an open representation of a compact set, this representation has a length $l$.
			From the first item of the previous theorem it follows that there exists constants $A,B\in\omega$ such that
			\[ \size{K}(n) =\lceil\lb(3)2^{n+L}\rceil \leq 2^{Al(n)+B} \]
			and therefore $l(n) \geq (n+L-B)/A$, i.e. $\xi$ has at least linear length.

			The second item of the previous theorem specifies a representation of linear length of $K$ that renders the metric exponential-time computable.
			In this special case we already knew that such a representation exists:
			The range restriction of the standard representation $\xic$ of continuous functions to $K$ has length $n\mapsto n+C$, where $C$ depends on the Lipschitz constant, and it renders the metric exponential-time computable.
		\end{exa}

		The rest of this chapter aims to generalize the above example by replacing the set of Lipschitz functions with more general compact sets.
		The compact subsets are fully classified by the \aaT and \fkT Theorems.
		By specifying the size of these compact sets it is possible to verify that  $\xic$ and $\xip$ have the minimal length for any representation that renders the metric exponential-time computable on these.
		This justifies the modulus of continuity and the $\Lp$-modulus as the right parameters for these function spaces.

	\subsection{\aaT and \fkT}

		The quantitative refinements of both the \aaT and \fkT Theorems have been investigated before in different contexts:
		There has been extensive work on these topics in approximation theory (for instance \cite{MR0112032} or \cite{MR1262128}).
		These results can, however, not straight forwardly be transfered to the context of this paper.
		In approximation theory the notion of moduli considered differs by convention.
		As a result, the theorems usually talk about the inverse modulus instead of the modulus itself.
		Furthermore, the results are often only stated or valid for small moduli.
		A very popular class is for instance the class corresponding to Hölder continuous functions for the continuous functions.

		There have been some attempts to apply the results to computable metric spaces: Most prominently \cite{MR1952428}.
		However, the results seem rather restricted.

		\begin{defi}\label{def:aaT sets}
			Define the family of \demph{\aaT-sets} $(K^\infty_{l,C})_{l\in \omega^\omega,C\in\omega}$ in $\C([0,1])$ by
			\[ K^\infty_{l,C} := \left\{f\in\C([0,1])\mid \text{$f$ has $l$ as modulus and $\|f\|_\infty\leq 2^{C}$}\right\}. \]
		\end{defi}

		The classical \aaT Theorem~\ref{resu:aaT classical} states that a set of functions is relatively compact if and only if it is contained in some $K^\infty_{l,C}$.
		The quantitative refinement of that statement can be found in \cite{MR1262128}.
		The proof given there is very similar to the one here.
		A restricted version is also proven in \cite{MR1952428}.

		\begin{thm}[\aaT]\label{resu:aaT}
			A set $K\subseteq C([0,1])$ is relatively compact if and only if it is contained in $K^\infty_{l,C}$ for some $l,C$.
			Furthermore:
			\[ 2^{l(\max\{n-2,0\})+\min\{n-2,0\}}+n+C\leq \size{K^\infty_{l,C}}(n)\leq 2^{l(n)+1} + n + C +2. \]
		\end{thm}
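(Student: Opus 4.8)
The statement bundles a qualitative equivalence with two quantitative estimates, and I would treat them in that order. For the qualitative part the classical \aaT \Cref{resu:aaT classical} does essentially all the work: a set of the form $K^\infty_{l,C}$ is uniformly bounded by $2^C$, and having the single modulus $l$ in common is precisely equicontinuity, so $K^\infty_{l,C}$ is relatively compact and hence so is any $K$ contained in it. Conversely, relative compactness yields through \Cref{resu:aaT classical} both a uniform bound (choose $C$ with $2^C$ above it) and a common modulus, which by the normalisation discussed after \Cref{def:modulus of continuity} may be taken strictly increasing while nonzero, i.e.\ of the admissible form $l$, giving $K\subseteq K^\infty_{l,C}$.

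For the upper bound I would discretise. Cut $[0,1]$ into the $2^{l(n)}$ cells of width $2^{-l(n)}$ with endpoints $x_j:=j2^{-l(n)}$ and send each $f\in K^\infty_{l,C}$ to the integer staircase $a_j:=\lfloor 2^nf(x_j)\rfloor$. Because $l$ is a modulus, $\abs{f(x_{j+1})-f(x_j)}<2^{-n}$, so $\abs{a_{j+1}-a_j}\le1$, leaving three choices at each of the $2^{l(n)}$ steps, while $\norm f_\infty\le 2^C$ confines $a_0$ to a range of at most $2^{n+C+2}$ integers. The number of staircases is therefore at most $2^{n+C+2}\cdot 3^{2^{l(n)}}$, whose binary logarithm lies below $2^{l(n)+1}+n+C+2$ because $\lb 3<2$ absorbs $3^{2^{l(n)}}$ into $2^{2^{l(n)+1}}$. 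It remains to check that all functions sharing a staircase fit into one ball of radius $2^{-n}$: they agree to within $2^{-n}$ at every $x_j$ and each varies by less than $2^{-n}$ across a cell, and centring the ball at the midline $\frac{1}{2}(\sup+\inf)$ of the class together with the fact that the supremum norm is hyperconvex (a bounded set sits in a ball of radius half its diameter) controls the radius by the pointwise oscillation of the class. Pinning this radius to exactly $2^{-n}$, rather than to a constant multiple of it, is the delicate bookkeeping of this direction, and it is where the slack in the exponent is spent.

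For the lower bound I would exhibit a large $2^{-n+1}$-separated subset of $K^\infty_{l,C}$ and quote the inequality $\eta(n)\le\size{K}(n)$ between a spanning bound and the metric entropy from \Cref{resu:lower bound}. I would build it from two independent ingredients. The constant functions taking values in $2^{-n+1}\ZZ\cap[-2^C,2^C]$ give about $2^{n+C}$ mutually $2^{-n+1}$-distant functions, all of modulus $0\le l$. Independently, I would place $2^{l(n-2)}$ back-to-back tent functions of height $2^{-n}$ on the cells of width $2^{-l(n-2)}$ and let each tent point up or down; two sign patterns that disagree somewhere differ by $2\cdot2^{-n}=2^{-n+1}$ at the corresponding peak, yielding $2^{2^{l(n-2)}}$ profiles. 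Superposing an offset on a profile, and spacing offsets by $2^{-n+1}$ so that distinct offsets already separate the functions at the shared tent feet where every profile vanishes, produces roughly $2^{\,2^{l(n-2)}+n+C}$ pairwise $2^{-n+1}$-separated elements, so $2^{l(n-2)}+n+C$ is a spanning bound and \Cref{resu:lower bound} turns it into the stated lower estimate.

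The step I expect to cost the most is verifying that these height-$2^{-n}$ tents genuinely admit $l$ as a modulus. A tent of height $2^{-n}$ on a base of width $2^{-l(n-2)}$ has slope $2^{\,l(n-2)+1-n}$, and the requirement $\abs{x-y}\le 2^{-l(m)}\Rightarrow\abs{f(x)-f(y)}<2^{-m}$ for \emph{every} $m$ collapses, after separating the cases $m\le n$ and $m>n$, to the assertion that $m\mapsto l(m)-m$ is nondecreasing with a one-step margin. This is exactly the consequence of $l$ being strictly increasing whenever nonzero (\Cref{def:modulus of continuity}); it is what forces the index shift to $n-2$ and, through the boundary behaviour when $n\le 2$, produces the $\max$/$\min$ bookkeeping in the exponent, which I would dispatch as a separate small case.
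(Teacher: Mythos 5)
Your proposal follows essentially the same route as the paper: the qualitative claim via the classical \aaT Theorem~\ref{resu:aaT classical}, the upper bound by counting the roughly $2^{n+C+2}\cdot 3^{2^{l(n)}}$ three-valued increment patterns over the $2^{l(n)}$ modulus-cells (the paper realises these as explicit piecewise-linear centres rather than staircase classes, a cosmetic difference), and the lower bound by sign patterns of height-$2^{-n}$ bumps combined with constant offsets, using the strictly-increasing-when-nonzero convention on the modulus and Proposition~\ref{resu:lower bound}. The constant-factor slack in the covering radius that you flag as the delicate step is present in the paper's argument as well, so your account matches its proof in substance.
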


		\begin{proof}
			The first assertion follows from the classical \aaT Theorem~\ref{resu:aaT classical} and the proof is not repeated here.
			To provide the upper bound on the size of $K^\infty_{l,C}$ fix some $n\in\omega$.
			A collection of balls of size $2^{-n}$ that cover $K^\infty_{l,C}$ can be constructed as follows:
			Consider the index set
			\[ I := \{-2^{n+C},\ldots,2^{n+C}\}\times\{0,1,-1\}^{2^{l(n)}}. \]
			For $\sigma =(\sigma_0,\sigma_1,\ldots,\sigma_{2^{l(n)}})\in I$ define a piecewise linear function $f_\sigma:[0,1]\to \RR$ by
			\[ f_\sigma(x) =
				\begin{cases}
					\sigma_0 2^{-n} & \text{ if } x=0 \\
			 		2^{-n}\left(\sum_{i=0}^{j-1} \sigma_i + \sigma_j(2^{l(n)}x - j)\right) & \text{ if } x\in\left(\frac{j-1}{2^{l(n)}},\frac{j}{2^{l(n)}}\right]\text{ for some $j\in\NN$.}
			 	\end{cases}
			\]
			\begin{minipage}{.55\textwidth}
				The $2^{-n}$-balls centered at $f_\sigma$ cover $K^\infty_{l,C}$ and
				\[ \#I=(2^{n+C+1}+1)3^{2^{l(n)}}\leq 2^{2^{l(n)+1} + n + C + 2}. \]
				Since $n$ was arbitrary, the right hand side is an upper bound on the size of $K^\infty_{l,C}$.

				To establish the lower bound replace $f_\sigma$ with the function $g_\sigma$ that may or may not have a bump of size $2^{-n-1}$ in the $i$-th interval of size $2^{-l(n)}$.
				The extra condition that a modulus of continuity has to be strictly increasing when non-zero implies that whenever the value of $l$ on $n$ allows the function to vary by $2^{-n}$ over an interval of length $2^{-l(n)}$ the subsequent values of $l$ will not disallow this behavior.
				Thus $g_\sigma\in K^\infty_{l,C}$.
				For any two different elements $\sigma$ and $\sigma'$ of $I$ it holds that

				\vspace{.1cm}
			\end{minipage}
			\hfill
			\begin{minipage}{.5\textwidth}
				\centering
				\begin{tikzpicture}
					\draw[->] (0,-1.5) -- (0,3);
					\draw[->] (0,0) -- (4.5,0);
					\draw (-.1,.0) -- (.1,.0);
					\draw (-.1,.5) -- (.1,.5);
					\draw (-.1,1) -- (.1,1);
					\draw (-.1,1.5) -- (.1,1.5);
					\draw (-.1,2) -- (.1,2);
					\draw (-.1,2.5) -- (.1,2.5);
					\draw (-.1,-.5) -- (.1,-.5);
					\draw (-.1,-1) -- (.1,-1);
					\node at (-.3,2.5) {$2^{C}$};
					\node at (-.4,.5) {$2^{-n}$};
					\node at (-.55,1.5) {$\frac{\sigma_0}{2^{n}}$};
					\draw[color=blue] (0,1.5) -- (.5,1) -- (1,1.5) -- (1.5,1.5);
					\node at (.5,-.25) {$2^{-l(n)}$};
					\draw (.5,.1) -- (.5,-.1);
					\draw (1,.1) -- (1,-.1);
					\draw (1.5,.1) -- (1.5,-.1);
					\node at (2.25,.1) {\ldots};
					\draw (3,.1) -- (3,-.1);
					\draw (3.5,.1) -- (3.5,-.1);
					\draw (4,.1) -- (4,-.1);
					\draw[color=blue] (3,-1) -- (3.5,-1) -- (4,-.5);
				\end{tikzpicture}
				\hspace{.2cm}
				
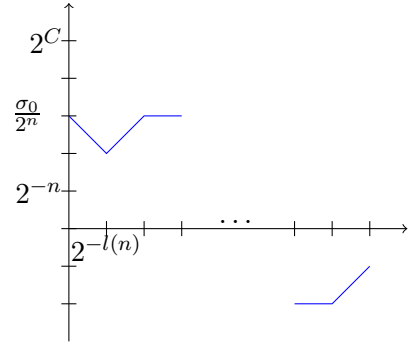
\captionof{figure}{The function $f_{\sigma}$ for $\sigma= (\sigma_0,-1,1,0,\ldots,0,1)$.}
			\end{minipage}
			$\|g_\sigma - g_{\sigma'}\|_\infty\geq 2^{-n-1}$.
			Thus, $n+2\mapsto 2^{l(n)} + n + C + 2$	is			a spanning bound in the sense of Definition~\ref{def:spanning bound}.
			Since any spanning bound of a set has to be smaller than its size by Proposition~\ref{resu:lower bound}, the lower bound on $\bsize{K^\infty_{l,C}}$ follows.
		\end{proof}

		For $\Lp$-spaces replace the sets $K^\infty_{l,C}$ by the following sets:
		\begin{defi}\label{def:fkT sets}
			Define the Family of \demph{\fkT-sets} $(K^p_{l})_{l\in \omega^\omega}$ in $\LLL p{[0,1]}$ by
			\[ K^p_l := \{f\in\Lp([0,1])\mid \text{ $f$ has $l$ as $\Lp$-modulus}\}. \]
		\end{defi}
		In this case there is no need to include an upper bound to the norm, since this bound can be extracted from an $\Lp$-modulus by Lemma~\ref{resu:norm bound from modulus}.
		\begin{thm}[\fkT]\label{resu:fkT}
			A set $K\subseteq \LLL p{[0,1]}$ is relatively compact if and only if it is contained in $K^p_l$ for some $l$.
			There exists a second-order polynomial $P$ such that
			\[ \size{K^p_l}(n)\leq 2^{P(l,n)}. \]
			Whenever $n\geq 3$ and $l(n-3)\geq 9$ it holds that $2^{l(n-3)-4}-1 \leq \size{K^p_l}(n)$.
		\end{thm}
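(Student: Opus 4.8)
The statement bundles three claims: the qualitative characterization of relative compactness, the upper bound $\size{K^p_l}(n)\le 2^{P(l,n)}$, and the lower bound. The plan is to treat them separately, mirroring the proof of the Arzelà--Ascoli version in Theorem~\ref{resu:aaT}. The characterization I would simply read off the classical Fréchet--Kolmogorov Theorem~\ref{resu:fkT classical}: a set is relatively compact iff it is bounded and admits a common $\Lp$-modulus. Since a norm bound is already implied by any $\Lp$-modulus (Lemma~\ref{resu:norm bound from modulus}), the hypothesis ``bounded with common modulus $l$'' collapses to ``contained in $K^p_l$'', so no separate boundedness assumption survives.

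For the upper bound I would fix $n$, partition $[0,1]$ into $2^{m}$ dyadic cells with $m=l(n+2)$, and replace $f$ by the step function $g$ taking on each cell the cell-average of $f$. By convexity of $t\mapsto t^p$ inside each cell and then the substitution $y=x+h$ with $\abs h\le 2^{-m}$, I would bound $\norm{f-g}_p^p$ by an average of $\norm{\tilde f-\tau_h\tilde f}_p^p$ over $\abs h\le 2^{-m}$; as $m=l(n+2)$ the modulus forces this below $2^{-n-1}$. The cell-values are bounded by Corollary~\ref{cor:hoelder} together with the norm estimate of Lemma~\ref{resu:norm bound from modulus}, so I would round each to a multiple of $2^{-n-2}$, contributing at most $2^{-n-1}$ of additional error. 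Counting the quantized step functions gives $\log_2(\#)\le 2^{l(n+2)}\bigl(2l(n+2)+n+3\bigr)$, which is dominated by $2^{P(l,n)}$ for the second-order polynomial $P(l,n)=3l(n+2)+n+3$. This is the easy half.

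For the lower bound I would build a large packing and invoke Proposition~\ref{resu:lower bound}. The idea is to place piecewise-linear tent bumps of a common height on a grid of cells of width of order $2^{-l(n-3)}$, each bump vanishing at its cell endpoints, and to index candidate functions by the set of occupied cells. Making the bumps piecewise linear is the crucial device: then $\tilde f\in W^{1,p}(\RR)$, so by the linear-modulus characterization (Lemma~\ref{resu:classes with small lp-moduli}) one has $\norm{\tilde f-\tau_h\tilde f}_p\le\abs h\,\norm{\tilde f'}_p$, and the whole $\Lp$-modulus is governed by the single quantity $\norm{\tilde f'}_p$. Strict monotonicity of $l$ enters precisely here: from $l(n-3)\ge 9$ it yields $l(j)\ge l(n-3)+(j-(n-3))$ for all $j\ge n-3$, so one bound $\norm{\tilde f'}_p<2^{l(n-3)-(n-3)}$ verifies the modulus at every fine scale simultaneously, while the coarse scales and the possible jump of $l$ up from $0$ are absorbed by the trivial estimate $\norm{\tilde f-\tau_h\tilde f}_p\le 2\norm f_p$ and the forced smallness of $\norm f_p$. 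Functions with different occupation patterns differ by at least the $\Lp$-norm of one bump (times a Hamming-distance factor), which I would arrange to be $\ge 2^{-n+1}$; the pattern count then gives a spanning bound of order $2^{l(n-3)}$, and Proposition~\ref{resu:lower bound} turns it into the stated bound $2^{l(n-3)-4}-1$.

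The main obstacle is this last construction: the simultaneous balancing of bump height, bump width, and number of occupied cells so that every function genuinely carries $l$ as an $\Lp$-modulus at all scales while distinct functions stay $2^{-n+1}$-separated. Because a large $l(n-3)$ together with a jump of $l$ from $0$ forces every element of $K^p_l$ to have small $\Lp$-norm, individual bumps may be driven small, so to still extract $\sim 2^{l(n-3)}$ well-separated functions I expect to need a constant-rate error-correcting code on the occupation patterns, ensuring distinct selected functions differ in a constant fraction of their bumps. The otherwise mysterious constants (``$\ge 9$'', ``$-3$'', ``$-4$'') should be exactly the slack that absorbs the $p$-dependent factors coming from Hölder's inequality and from the tent norms in this balancing.
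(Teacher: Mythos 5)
Your decomposition into the three claims matches the paper's, and your lower-bound plan is essentially the paper's own argument: the constant-rate error-correcting code you anticipate is exactly the paper's Lemma~\ref{resu:coding theory} (a Gilbert--Varshamov-type count yielding $2^{2^{l(n-3)-4}-1}$ strings of length $2^{l(n-3)}$ at pairwise Hamming distance $2^{l(n-3)-2}$); the bumps are tents of width $w(n)=2^{-l(n-3)}$ and height $h(n)=(p+1)^{1/p}2^{-n+1}$; membership in $K^p_l$ is verified precisely as you describe, with the trivial estimate $\norm{f_\sigma-\tau_yf_\sigma}_p\le 2\norm{f_\sigma}_p<2h(n)$ handling all scales $m<n-3$ and Lemma~\ref{resu:classes with small lp-moduli} applied to $\norm{f'_\sigma}_p\le 2h(n)/w(n)$ handling $m\ge n-3$ (this is where the strict monotonicity of the modulus enters, as you noted); and Proposition~\ref{resu:lower bound} converts the packing into the stated size bound. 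Where you genuinely diverge is the upper bound: the paper does not discretize directly but passes through the continuous approximations of Lemma~\ref{resu:continuous approximations}, transfers the $\Lp$-modulus to a modulus of continuity via Lemma~\ref{resu:modulus of the continuous approximations}, and then invokes the quantitative \aaT Theorem~\ref{resu:aaT}, which produces the iterated term $l(n+2+\lceil d/p\rceil l(n+1))$ in the exponent --- still a second-order polynomial, but responsible for the gap the paper itself laments before Theorem~\ref{resu:improved upper fkT}. Your direct route (cell averages on a mesh of width $2^{-l(n+2)}$, a Jensen-plus-shift estimate for the approximation error, quantization of the cell values using the norm bound from Lemma~\ref{resu:norm bound from modulus}) avoids the iteration and gives an exponent linear in $l(n+2)$, closer to what the paper only recovers later via smoother approximations; the one bookkeeping point is that by Corollary~\ref{cor:hoelder} the cell averages can be as large as $2^{l(n+2)/p}\norm{f}_p$, so the per-cell quantization count acquires an extra additive $l(n+2)/p+l(0)$, which is harmless for the claimed $2^{P(l,n)}$ form. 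Both arguments prove the theorem as stated.
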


		\begin{proof}[Proof of the upper bound]
			Using the lemmas from \Cref{sec:sub:equivalence to the standard representation}:
			By Lemma~\ref{resu:continuous approximations} any function $f\in K^p_l$ has a continuous function $f_{l(n+1)}$ in its $2^{-n-1}$ neighborhood.
			Lemma~\ref{resu:modulus of the continuous approximations} guarantees $f_{l(n+1)}\in K^\infty_{l'}$ for  $l'(m):=m+1+\left\lceil d/p\right\rceil l(n+1)$.
			The \aaT Theorem~\ref{resu:aaT} proves
			\[ \size{K^\infty_{l'}}(n)\leq 2^{l'(n)+1} + n + \lceil\lb(\bnorm{f_{l(n+1)}}_\infty\rceil +2 \leq 2^{l\left(n+2+\left\lceil\frac dp\right\rceil l(n+1)\right) +1} + n + l\left(1+\left\lceil\frac dp\right\rceil l(1)\right) +2. \]
			The balls of radius $2^{-n-1}$ in supremum norm are included in the balls in $\LL p$-norm.
			Therefore, the $2^{-n}$ balls in $\Lp$ around the same centers cover $K^p_l$.
		\end{proof}

		To find a lower bound, we use the technique Lorentz used in \cite{lorentz1966} for the prove of his Lemma 8.
		Namely we use the following lemma from coding theory:

		\begin{lem}\label{resu:coding theory}
			For any natural number $N \geq 500$ and $M<\frac N3$ there exists a set $I$ of binary strings of length $N$ that differ pairwise in at least $M$ places and such that
			\[ \#{I}=\left\lfloor2^{\frac N{16}-1}\right\rfloor. \]
		\end{lem}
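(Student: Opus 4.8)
\emph{The plan is to invoke the Gilbert--Varshamov bound via a greedy construction.} Call a set $C$ of binary strings of length $N$ \emph{admissible} if any two distinct elements differ in at least $M$ places, and build one greedily: starting from $C=\emptyset$, repeatedly adjoin any string lying at Hamming distance at least $M$ from every string already chosen, stopping only when no such string remains. The resulting $C$ is maximal, so every point of $\{0,1\}^N$ lies within Hamming distance $M-1$ of some element of $C$; that is, the balls of radius $M-1$ centred at the codewords cover the whole cube. Writing $V:=\sum_{i=0}^{M-1}\binom{N}{i}$ for the cardinality of one such ball, this covering gives
\[ \#C\cdot V \ge 2^N, \qquad\text{hence}\qquad \#C\ge \frac{2^N}{V}. \]

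Next I would bound $V$ from above using the hypothesis $M<\tfrac N3$. Since $M-1<\tfrac N3\le \tfrac N2$, the standard entropy estimate for partial sums of binomial coefficients applies: with $H(x):=-x\lb(x)-(1-x)\lb(1-x)$ the binary entropy function one has $V\le 2^{N H((M-1)/N)}$. As $H$ is strictly increasing on $[0,\tfrac12]$ and $(M-1)/N<\tfrac13$, this yields $V< 2^{N H(1/3)}$. A direct computation gives $H(\tfrac13)=\lb(3)-\tfrac23<\tfrac{15}{16}$ (equivalently $\lb(3)<\tfrac{77}{48}$), so that $V<2^{\frac{15}{16}N}$.

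Combining the two estimates gives
\[ \#C \ge \frac{2^N}{V} > \frac{2^N}{2^{\frac{15}{16}N}} = 2^{\frac N{16}} > 2^{\frac N{16}-1}\ge \left\lfloor 2^{\frac N{16}-1}\right\rfloor, \]
so the admissible set $C$ already has strictly more than $\lfloor 2^{N/16-1}\rfloor$ elements; deleting elements until exactly that many remain produces the desired set $I$. The hypotheses $M<\tfrac N3$ and $N\ge 500$ enter only to guarantee a comfortable margin in the entropy estimate (and to keep $\lfloor 2^{N/16-1}\rfloor\ge 1$), so a cruder version of the argument would already suffice. I expect the only genuinely delicate point to be the volume estimate: one must make sure the entropy bound is applied in its valid range $M-1\le N/2$ and that the numerical gap $1-H(\tfrac13)=\tfrac53-\lb(3)>\tfrac1{16}$ is verified, since everything else is the routine greedy covering argument. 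An alternative route, closer to Lorentz's treatment in \cite{lorentz1966}, replaces the greedy construction by a probabilistic deletion (alteration) argument at the cost of slightly lossier constants, which is presumably the source of the factor $2^{-1}$ and of the threshold $N\ge500$.
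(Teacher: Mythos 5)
Your proposal is correct and is essentially the paper's own argument: both are the greedy Gilbert--Varshamov construction, differing only in how the Hamming-ball volume $\sum_{i\le M-1}\binom{N}{i}$ is bounded --- you use the binary-entropy estimate together with $H(1/3)=\lb(3)-2/3<15/16$, whereas the paper bounds $(M+1)\binom{N}{M}$ via Stirling's formula, which is the sole source of its hypothesis $N\ge 500$. Your variant therefore dispenses with $N\ge 500$ entirely, consistent with Remark~\ref{rem:coding theory}.
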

		\begin{proof}
			Prove the stronger statement that there is a set $I$ of strings of length $N\geq 500$ that differ in at least $M= \frac N3$ bits whenever
			\[ \#{I}\leq\Big\lfloor2^{\frac N{16}}\Big(\frac e\pi2^{-\frac32}+1\Big)^{-1}\Big\rfloor. \]	

			Proceed by induction over the size $\#I$ of the set $I$.
			For $\# I=2$ choose the constant zero string and the constant one string.
			Now assume that $I$ is a set of strings that differ pairwise in at least $M$ elements and that has strictly less elements than the number specified above.
			Use Stirling's Formula to estimate the number of strings that differ in less than $M$ digits from one of the elements of $I$:
			\begin{align*}
				\#I \sum_{i=0}^M \binom Ni &\stackrel{M\leq \frac N2}\leq \#I(M\binom NM +1) \stackrel{\text{Stirling}}\leq \#I \left(\frac{Me}{2\pi} \frac{N^{N+\frac12}}{M^{M+\frac12}(N-M)^{N-M +\frac12}}+1\right) \\
				& \stackrel{M=\frac N3}= \#I\left(\frac{\sqrt Ne}{6\pi}3^{N+1}2^{-\frac23 N-\frac 12}+1\right) %\leq \#I\left(\frac e{2^{\frac32}\pi}+1\right)2^{\lb(2-\frac23)N+\frac{\lb(N)}{2}} \\
				%&
				\stackrel{N\geq 500}\leq \#I \left(\frac e{2^{\frac32}\pi}+1\right)2^{-\frac N{16}}2^N.
			\end{align*}
			By induction hypothesis the right hand side is strictly smaller than $2^N$.
			Since the left hand side is an integer it is at most $2^N-1$.
			Thus, at least one of the $2^N$ strings of length $N$ does not lie in the union of these sets and can be added to $I$ to increase its size by one.
		\end{proof}
		\begin{rem}\label{rem:coding theory}
			From coding theory it is known that these bounds are not optimal.
			In particular the assumption $N\geq 500$ can be removed.
			See for instance \cite{MR1948693}.
		\end{rem}

		\begin{proof}[Proof of the lower bound in Theorem~\ref{resu:fkT}]
			Fix some $n\in \omega$.
			The assumption $l(n-3) \geq 9$ guarantees that Lemma~\ref{resu:coding theory} can be applied with $N:=2^{l(n-3)}$ and $M:=2^{l(n-3)-2} = \frac N4$ to find a set $I$ of strings of length $2^{l(n-3)}$ such that the elements differ pairwise in at least $2^{l(n-3)-2}$ digits and $\#I = 2^{2^{l(n-3)-4}-1}$.
			Consider the functions
			\[ f:\RR\to [0,1],\quad x\mapsto \max\big\{0,1-2\bsize{x-\frac12}\big\},\quad
			 	w(n) := 2^{-l(n-3)}\text{ and }
				h(n) := (p+1)^{\frac1p}2^{-n+1}.
			\]
			\noindent
			\begin{minipage}{.55\textwidth}
				For each $\sigma \in I$ define a function $f_\sigma$ by
				\[ f_\sigma := h(n)\sum_{i=1}^{2^{l(n-3)}} \sigma_i f\left(\frac {x-iw(n)}{w(n)}\right). \]
				That is: Divide $[0,1]$ into intervals of width $w(n)$ and consider the set of functions that may or may not have a hat of height $h(n)$ in each of the intervals (see \Cref{fig:fsigma two}).
				Since at most one hat is put in each interval for each string $\sigma$ and $x\in[0,1]$ it is true that almost everywhere $f_\sigma(x)<h(n)$ and therefore $\norm {f_\sigma}_p < h(n)$.
				For the weak derivative of $f_\sigma$ it holds that $\norm {f'_\sigma}_\infty \leq {2h(n)/w(n)}$.
			\end{minipage}
			\begin{minipage}{.44\textwidth}
				\vspace{-.5cm}
				\centering
				\begin{tikzpicture}
					\draw[->] (0,0) -- (0,2);
					\draw[->] (0,0) -- (4.5,0);
					\draw (-.1,1.5) -- (.1,1.5);
					\node at (-.4,1.5) {$h(n)$};
					\draw[thick,color=blue] (0,0) -- (.5,0);
					\draw (.5,.1) -- (.5,-.1);
					\node at (.5,-.25) {\tiny$w(n)$};
					\node at (1.25,.1) {\ldots};
					\draw (2,.1) -- (2,-.1);
					%\node at (1.75,-.25) {\tiny$(i-1)w(n)$};
					\draw (2.5,.1) -- (2.5,-.1);
					\node at (2.75,-.25) {\tiny{$iw(n)$}};
					\node at (3.25,.1) {\ldots};				
					\draw (4,.1) -- (4,-.1);
					\draw[color=blue] (2,0) -- (2.25,1.5) -- (2.5,0);
				\end{tikzpicture}
				
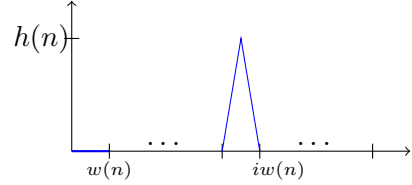
\captionof{figure}{$f_{\sigma}$ for $\sigma$ with $\sigma_1=0$ and $\sigma_i=1$.}\label{fig:fsigma two}
				\vspace{-.1cm}
			\end{minipage}
			
			To obtain the spanning bound prove that the $f_\sigma$ are elements of $K^p_l$ of pairwise distance more than $2^{-n}$:					
			To show that these functions are elements of $K^p_l$, claim that the smallest valid modulus function $\mu$ such that $\mu(n-3) = l(n-3)$ is an $\Lp$-modulus of $f_\sigma$.
			Indeed:
			$h(n)\leq 2^{-n+2}$ since $x\mapsto (x+1)^{1/x}$ is decaying on the positive real line and takes value $2$ in $1$.
			Therefore, for an arbitrary shift $y$ and any $\sigma$
			\begin{equation*}\label{eq:difference}
				\|f_{\sigma} - \tau_y f_\sigma\|_p \leq 2\|f_\sigma\|_p < 2 h(n) \leq 2^{-n+3}.
			\end{equation*}
			Thus, for any $m<n-3$ zero is a valid value of an $\Lp$-modulus of $f_\sigma$.
			To see the statement for $m\geq n-3$ use Lemma~\ref{resu:classes with small lp-moduli}, which says that it suffices to estimate the $\Lp$-norm of the weak derivative of $f_\sigma$:
			\[ \|f'_\sigma\|_p \leq \frac{2h(n)}{w(n)} = 2^{l(n-3)-n+2} < 2^{l(n - 3)- n + 3}. \] 
			Thus, $m\mapsto m + l(n - 3)-n + 3$ is an $\Lp$-modulus of $f_\sigma$.
			
			Finally estimate the pairwise distance:
			The set $I$ was chosen such that whenever $\sigma\neq\sigma'$, then $\sigma$ and $\sigma'$ differ in at least $M=2^{l(n-3)-2}$ places.
			Thus
			\[ \|f_\sigma-f_{\sigma'}\|_p \geq 2h(n) \left(\frac{M w(n)}{p+1}\right)^{\frac 1p} = 2^{-n+2-\frac2p}\geq 2^{-n}. \]

			This proves the assertion.
		\end{proof}

	\subsection{Smoother approximations}
		The upper bound specified in Theorem~\ref{resu:fkT} contains an iteration of the $\Lp$-modulus while the lower bound does not.
		This leads to a huge gap between the upper and the lower bound for fast growing $\Lp$-moduli.
		In this chapter the upper bound is improved by introducing another representation that uses a sequence of approximating functions with improved regularity.

		Recall, that the convolution $h\convo f$ integrable functions $h,f$ is defined by
		\[ h\convo f := \int_{\RR^d} h(x-y)f(y) \dd y = \int_{\RR^d} h(y) f(x-y)\dd y. \]
		Recall the following well known result from the theory of convolution:
		\begin{prop}[derivatives]\label{resu:convolution differentiation formula}
			Whenever $f$ is integrable and $g$ is weakly differentiable, then $g\convo f$ is weakly differentiable and
			\[ \frac{\partial (g\convo f)}{\partial x_i} = \frac{\partial g}{\partial x_i} \convo f. \]
		\end{prop}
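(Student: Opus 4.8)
The plan is to verify the defining property of the weak derivative directly from the definition. Since the convolution is formed by integration over all of $\RR^d$, the relevant notion is weak differentiability in the standard sense, with test functions $\phi$ that are smooth and compactly supported (generalizing the endpoint-vanishing test functions used for $[0,1]$ earlier). Fixing such a $\phi$, I would establish
\[ \int_{\RR^d} (g\convo f)(x)\,\frac{\partial \phi}{\partial x_i}(x)\dd x = -\int_{\RR^d}\left(\frac{\partial g}{\partial x_i}\convo f\right)(x)\,\phi(x)\dd x, \]
which is precisely the assertion that $\frac{\partial g}{\partial x_i}\convo f$ is the weak $i$-th partial derivative of $g\convo f$.

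First I would expand the convolution on the left-hand side as $\int g(x-y)f(y)\dd y$ and apply Fubini's theorem to interchange the order of integration, rewriting the expression as
\[ \int_{\RR^d} f(y)\left(\int_{\RR^d} g(x-y)\,\frac{\partial \phi}{\partial x_i}(x)\dd x\right)\dd y. \]
Then, for each fixed $y$, I would treat the inner integral: substituting $x\mapsto x+y$ and observing that the translate $\phi(\,\cdot\,+y)$ is again a valid test function, the weak differentiability of $g$ lets me transfer the derivative off the test function and onto $g$ at the cost of a sign. Undoing the substitution yields $-\int \frac{\partial g}{\partial x_i}(x-y)\,\phi(x)\dd x$ for the inner integral.

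Finally I would substitute this back and apply Fubini once more to recombine the iterated integral into $\int\left(\frac{\partial g}{\partial x_i}\convo f\right)(x)\,\phi(x)\dd x$, the desired right-hand side. The main point to secure is that the two interchanges of integration are legitimate, i.e. that the relevant double integrals converge absolutely; I expect this to be the only real obstacle, and it should follow from $f$ being integrable, the local integrability of $g$ and $\frac{\partial g}{\partial x_i}$, and the compact support of $\phi$, which confines the $x$-integration to a bounded set. I would also remark at the outset that both convolutions $g\convo f$ and $\frac{\partial g}{\partial x_i}\convo f$ are well defined as integrable functions under the standing assumptions (via Young's inequality), so that the statement is meaningful in the first place.
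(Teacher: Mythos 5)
Your argument is correct, and it is the standard one. Note, however, that the paper does not prove this proposition at all: it is explicitly ``recalled'' as a well-known fact from the theory of convolution, so there is no in-paper proof to compare against. Your route --- test against $\phi\in C_c^\infty(\RR^d)$, expand the convolution, apply Fubini, use the weak differentiability of $g$ against the translated test function $\phi(\cdot+y)$, and apply Fubini again --- is exactly the textbook derivation. The two points you flag are indeed the only ones needing care, and you handle both adequately: (i) the double integrals converge absolutely because $\phi$ confines $x$ to a compact set and $|g|\convo|f|$ is locally integrable by Young's inequality, and (ii) the statement only makes sense once one knows $g\convo f$ and $\frac{\partial g}{\partial x_i}\convo f$ are (locally) integrable, which again follows from Young provided $g$ and its weak derivative lie in some $\LL q$. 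The proposition as stated is slightly informal about the latter hypothesis; in the paper's only application $g$ is the mollifier $g^D_m$, which is bounded with compact support and has essentially bounded weak derivatives, so all the integrability requirements in your argument are satisfied there.
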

		Moreover, recall the following formula for the $\Lp$-norms of convoluted functions:
		\begin{prop}[norm estimate]\label{resu:convolution norm formula}
			Whenever $g\in \Lp$ where $1\leq p\leq \infty$ and $f\in\Lone$, then $g\convo f\in\Lp$ and
			\[ \norm{g\convo f}_{p} \leq \norm{g}_p \norm{f}_1. \]
		\end{prop}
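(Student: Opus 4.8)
The plan is to recognize this as the $L^1$ case of Young's convolution inequality and reduce it to Hölder's inequality (Theorem~\ref{thm:hoelder}) together with Fubini's theorem. First I would dispose of the endpoint $p=\infty$ by the pointwise estimate
\[ \abs{(g\convo f)(x)} \leq \int_{\RR^d}\abs{g(x-y)}\abs{f(y)}\dd y \leq \norm{g}_\infty\norm{f}_1, \]
where the last inequality uses $\abs{g(x-y)}\leq\norm{g}_\infty$ for almost every $y$; taking the essential supremum over $x$ then gives the claim.

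For $1\leq p<\infty$ let $q$ denote the conjugate exponent. The key trick I would employ is to split the weight $\abs{f(y)}=\abs{f(y)}^{1/p}\abs{f(y)}^{1/q}$ inside the convolution integral and apply Hölder's inequality with the two factors $u(y):=\abs{g(x-y)}\abs{f(y)}^{1/p}$ and $v(y):=\abs{f(y)}^{1/q}$. Since $\norm{v}_q=\norm{f}_1^{1/q}$, this yields for each $x$
\[ \abs{(g\convo f)(x)}\leq\left(\int_{\RR^d}\abs{g(x-y)}^p\abs{f(y)}\dd y\right)^{\frac1p}\norm{f}_1^{\frac1q}. \]
I would then raise this to the $p$-th power, integrate over $x$, and interchange the order of integration; the inner integral $\int_{\RR^d}\abs{g(x-y)}^p\dd x$ collapses to $\norm{g}_p^p$ by translation invariance of the Lebesgue measure. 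The exponent bookkeeping $1+\tfrac pq=p$ finally gives $\norm{g\convo f}_p^p\leq\norm{g}_p^p\norm{f}_1^p$, which is the assertion. For the degenerate case $p=1$ (where $q=\infty$) the splitting reduces to a direct application of Tonelli to $\int\int\abs{g(x-y)}\abs{f(y)}\dd y\dd x$, yielding $\norm{g\convo f}_1\leq\norm{g}_1\norm{f}_1$ without the Hölder step.

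The only genuine obstacle here is measure-theoretic rather than computational: one must justify that $g\convo f$ is well defined for almost every $x$ and that the interchange of integrals is legitimate. I expect both to follow by applying Tonelli's theorem to the non-negative measurable function $(x,y)\mapsto\abs{g(x-y)}^p\abs{f(y)}$, whose double integral is finite by exactly the computation sketched above. Finiteness of this double integral forces the inner integral to be finite for almost every $x$, so the defining convolution integral converges absolutely almost everywhere, which simultaneously secures well-definedness and retroactively validates the use of Fubini in the main estimate.
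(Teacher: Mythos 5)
Your proof is correct. The paper itself offers no proof of this proposition --- it is explicitly recalled as a well-known fact from the theory of convolution --- so there is nothing to compare against; your argument is the standard textbook proof of the $\Lone$--$\Lp$ case of Young's inequality (splitting $\abs{f(y)}=\abs{f(y)}^{1/p}\abs{f(y)}^{1/q}$, applying H\"older and Tonelli, and handling the endpoints $p=1$ and $p=\infty$ separately), and all the steps, including the exponent bookkeeping $1+p/q=p$ and the measure-theoretic justification of well-definedness almost everywhere, check out.
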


		The sequence of continuous approximations $f_n$ from Definition~\ref{def:continuous approximations} can be understood to arise from the function $f$ by convoluting with the function sequence 
		\[ g_n:=2^{n}\chi_{[-2^{-n-1},2^{-n-1}]}. \quad\text{I.e.}\quad f_n= g_n\convo f. \]
		From this point of view Definition~\ref{d:xip} requires a $\xip$-name of a function $f$ to fulfill
		\[ \abs{\bin{\varphi(\str a,1^n)} - f_k(z)} < 2^{-n} \]
		whenever $\str a$ is an encoding of $\cell zk$.
		Furthermore, it is possible to translate between an encoding of $\cell zk$ and an the pair $\langle z,1^k\rangle$ in polynomial time.

		Let $\abs\cdot_\infty$ denote the supremum norm on $\RR^d$.
		Replacing the sequence $g_n$ with the following mollifier sequence lifts the approximations from being continuous to being weakly differentiable (thus the $D$):
		\vspace{.1cm}

		\noindent
		\begin{minipage}{.52\textwidth}
			\begin{defi}\label{def:mollifier functions}
				Define the \demph{mollifier sequence} $(g^D_m)_{m\in\NN}$ of functions $g^D_m:\RR^d\to\RR$ by
				\[ g^D_0(x):=\max\left\{1-\abs x_\infty,0\right\} \]
				and
				\[ g^D_m(x):= d2^{d(m-1)} g^D_0(2^m x). \]
			\end{defi}

			The function $g^D_1$ is illustrated in \Cref{fig:g_1}.
			One easily verifies that the support of $g^D_m$ is the ball $\left[-2^{-m}, 2^{-m}\right]^d$ of radius $2^{-m}$ around zero in supremum norm, that for any $m$
			\[ \int_{\RR^d} g^D_m \dd\lambda = \int_{\left[-2^{-m}, 2^{-m}\right]^d} g^D_m d\lambda = 1, \]
			\vspace{.1cm}
		\end{minipage}
		\begin{minipage}{.5\textwidth}
			\centering
			\begin{tikzpicture}
				\begin{axis}[grid=major,axis lines=middle,inner axis line style={=>},ticks=none,samples=45,view={30}{40}]
					\addplot3[surf,shader=interp,domain=-1.5:1.5] {max(0,1-max(abs(x+y),abs(y-x)))};
					\addplot3[domain=1:0] ({x-1},{0},{x});
					\addplot3[domain=1:0] ({0},{x-1},{x});
					\addplot3[domain=1:0] ({1-x},{0},{x});
					\addplot3[domain=1:0] ({x-1},{-x},{0});
					\addplot3[domain=1:0] ({1-x},{-x},{0});						
				\end{axis}
			\end{tikzpicture}
			
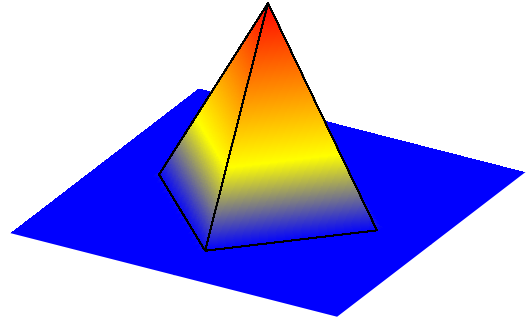
\captionof{figure}{The function $g^D_1$ for $d=2$.}\label{fig:g_1}
		\end{minipage}
		and that $g^D_0$ is weakly (partially) differentiable and the weak derivatives are essentially bounded with
		\[ \norm{\frac{d g^D_0}{dx_i}}_\infty= 1. \]
		Furthermore, if $\grad g^D_0$ denotes the vector of partial derivatives and $\norm\cdot_\infty$ the maximum of the supremum norms of the components of a vector, then for the gradient of $g^D_m$
		\[ \|\grad g^D_m\|_\infty = d 2^{d(m-1)+m}. \]

		\begin{defi}\label{def:differentiable approximations}
			For a function $f\in \Lp$ define the sequence $(f^D_m)_{m\in\NN}$ of \demph{differentiable approximations} $f^D_m:\RR^d\to\RR$ by
			\[ f^D_m:= g_m\star \tilde f= \int_{\RR^d} g^D_m(y) \tilde f(x-y)\dd y. \]
		\end{defi}

		The following representation replaces the information about the continuous approximations provided by $\xi_p$ by less information about smoother approximations.
		\begin{defi}\label{d:xipd}
			Define a second-order representation $\xip^D$ of $\LLL p{\Omega}$:
			A length-monotone string function $\varphi$ is a $\xip^D$-name of a function $f\in \LLL p{\Omega}$ if and for all $n,k\in \NN$ and each string $\str a$ encoding a dyadic number
			\[ \abs{\bin{\varphi(\langle \str a,\sdone^k, \sdone^n\rangle)} - f^D_k(\bin{\str a})} < 2^{-n}, \]
			and $\flength \varphi$ is an $\Lp$-modulus of $f$.
		\end{defi}

		It is possible to specify a bound on the supremum norm of the gradient of the \lq differentiable approximations\rq\ $f^D_n$:		
		\begin{lem}[gradient estimate]\label{resu:gradient estimate}
			For any $f\in\Lp$ the functions $f^D_m$ are weakly differentiable with
			\[ \norm{\grad f^D_m}_\infty \leq d 2^{d(m-1)+m}\|f\|_1 \]
		\end{lem}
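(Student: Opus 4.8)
The plan is to reduce the whole statement to the two convolution facts recorded immediately above, namely the differentiation formula of Proposition~\ref{resu:convolution differentiation formula} and the norm estimate of Proposition~\ref{resu:convolution norm formula}, combined with the gradient norm $\norm{\grad g^D_m}_\infty = d2^{d(m-1)+m}$ of the mollifier that was already computed after Definition~\ref{def:mollifier functions}.

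First I would check that $\tilde f$ is genuinely an $\Lone(\RR^d)$-function so that it may serve as the integrable factor in both propositions. Since $\Omega$ is bounded and $f\in\Lp(\Omega)$, Corollary~\ref{cor:hoelder} gives $f\in\Lone(\Omega)$, and extending by zero leaves the $\Lone$-norm unchanged, so $\norm{\tilde f}_1 = \norm f_1 < \infty$.

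Next, fix a coordinate $i\in\{1,\ldots,d\}$. The mollifier $g^D_m$ is weakly differentiable, and $\tilde f$ is integrable, so Proposition~\ref{resu:convolution differentiation formula} applies to $f^D_m = g^D_m\star\tilde f$: it shows that $f^D_m$ is weakly differentiable with
\[ \frac{\partial f^D_m}{\partial x_i} = \frac{\partial g^D_m}{\partial x_i}\star\tilde f. \]
Now apply the norm estimate of Proposition~\ref{resu:convolution norm formula} in the case $p=\infty$, with $\partial g^D_m/\partial x_i\in\LL\infty$ in the role of the $\Lp$-factor and $\tilde f$ the $\Lone$-factor, and use that each component's supremum norm is at most the maximum $\norm{\grad g^D_m}_\infty$:
\[ \bnorm{\frac{\partial f^D_m}{\partial x_i}}_\infty \leq \bnorm{\frac{\partial g^D_m}{\partial x_i}}_\infty\norm{\tilde f}_1 \leq \norm{\grad g^D_m}_\infty\norm f_1 = d2^{d(m-1)+m}\norm f_1. \]
Taking the maximum over $i$ yields the asserted bound on $\norm{\grad f^D_m}_\infty$.

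I do not anticipate a real obstacle here; the only point requiring a word of care is the legitimacy of invoking the two convolution propositions, i.e.\ verifying $\tilde f\in\Lone(\RR^d)$ and noting that $p=\infty$ is admissible in Proposition~\ref{resu:convolution norm formula} (it is stated for all $1\le p\le\infty$). Everything else is a direct substitution of the previously computed mollifier gradient norm.
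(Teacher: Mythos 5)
Your proof is correct and follows essentially the same route as the paper's: apply the differentiation formula of Proposition~\ref{resu:convolution differentiation formula} to write $\partial f^D_m/\partial x_i = (\partial g^D_m/\partial x_i)\convo\tilde f$, bound it via Proposition~\ref{resu:convolution norm formula} with the computed mollifier gradient norm, and take the maximum over $i$. Your preliminary check that $\tilde f\in\Lone(\RR^d)$ (via Corollary~\ref{cor:hoelder} and boundedness of $\Omega$) is a point the paper leaves implicit, so it is a welcome addition rather than a deviation.
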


		\begin{proof}
			From the properties of the mollifiers $g_m$ following Definition~\ref{def:mollifier functions} and the formulas for the convolution from Propositions~\ref{resu:convolution differentiation formula} and \ref{resu:convolution norm formula} get 
			\[ \norm{\frac{\partial f^D_m}{\partial x_i}}_\infty = \norm{\frac{\partial g^D_m}{\partial x_i}\convo f}_\infty \leq\norm{\frac{\partial g^D_m}{\partial x_i}}_\infty \norm f_1 = d2^{d(m-1)+m}\|f\|_1. \]
			Taking the supremum over $i$ proves the assertion.
		\end{proof}

		How good an approximation $f^D_m$ is to $f$ can be read from an $\Lp$-modulus.
		The smoothness, however, comes at a price that depends on the dimension (compare to Lemma~\ref{resu:continuous approximations}).

		\begin{lem}[approximation]\label{resu:differentiable approximations}
			Let $\mu$ be an $\Lp$-modulus of $f$, then
			\[ \|\tilde f-f^D_{\mu(n)}\|_p < 2^{-n+d}. \]
		\end{lem}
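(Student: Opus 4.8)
The plan is to mirror the proof of Lemma~\ref{resu:continuous approximations}, exploiting that $g^D_m$ is a nonnegative mollifier with $\int_{\RR^d} g^D_m \dd\lambda = 1$ and support contained in the box $[-2^{-m},2^{-m}]^d$ (the facts recorded after Definition~\ref{def:mollifier functions}). Because $g^D_m$ integrates to one, the difference $\tilde f - f^D_m$ can be written as a $g^D_m$-weighted average of the translation differences $\tilde f - \tau_{-y}\tilde f$, which are exactly the quantities the $\Lp$-modulus controls. The whole argument then reduces to pushing the $p$-th power inside the averaging integral and to bookkeeping the dimensional constants coming from the size of the support.

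First I would use $\int g^D_m \dd\lambda = 1$ to write
\[ \tilde f(x) - f^D_m(x) = \int_{\RR^d} g^D_m(y)\bigl(\tilde f(x) - \tilde f(x-y)\bigr)\dd y, \]
so that $\abs{\tilde f(x) - f^D_m(x)} \le \int_{S} g^D_m(y)\,\abs{\tilde f(x)-\tilde f(x-y)}\dd y$, where $S:=[-2^{-m},2^{-m}]^d$ is the support box. To imitate Lemma~\ref{resu:continuous approximations} I would bound the mollifier by its crude ceiling $g^D_m\le 2^{dm}$ on $S$ (valid since $\max g^D_m = d2^{d(m-1)}$ and $d\le 2^d$), and then apply the version of Hölder's inequality from Corollary~\ref{cor:hoelder} to the inner integral over $S$, whose Lebesgue measure is $\lambda(S)=2^{d}2^{-dm}$. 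Raising to the $p$-th power, integrating in $x$, and applying Fubini turns the estimate into a constant multiple of $\int_{S}\bnorm{\tilde f - \tau_{-y}\tilde f}_p^p\dd y$, in complete analogy with the continuous case.

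Then I would set $m:=\mu(n)$. Every shift $y\in S$ satisfies $\abs y_\infty \le 2^{-m}=2^{-\mu(n)}$, so the defining property of the $\Lp$-modulus from Definition~\ref{def:lp-moduli} gives $\bnorm{\tilde f-\tau_{-y}\tilde f}_p < 2^{-n}$ uniformly. Collecting the exponents — the factor $2^{dmp}$ from the $p$-th power of the pointwise bound $2^{dm}$, together with $\lambda(S)^{p}=2^{dp}2^{-dmp}$ obtained by combining the Hölder factor $\lambda(S)^{p-1}$ and the measure $\lambda(S)$ of $S$ from the final integration of the constant $2^{-pn}$ — the $2^{\pm dmp}$ cancel and one is left with $2^{dp}2^{-pn}$, whence $\bnorm{\tilde f - f^D_{\mu(n)}}_p < 2^{-n+d}$.

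The one genuinely new feature compared to Lemma~\ref{resu:continuous approximations}, and the point to be careful about, is the dimensional factor $2^{d}$: the support $S$ is the full box $[-2^{-m},2^{-m}]^d$, which is $2^d$ times larger in measure than the cell $\cell 0m$ used for the continuous approximations, and this is precisely the price depending on the dimension mentioned before the statement. Beyond this bookkeeping I expect no real obstacle; the averaging step does, however, crucially rely on the nonnegativity and normalization of $g^D_m$ (so that $g^D_m\dd\lambda$ is a probability measure), without which one could not pull the $p$-th power inside. A cleaner alternative would be to apply Minkowski's integral inequality directly to the average, which even yields the sharper bound $2^{-n}$, but the Hölder route above is the one that reproduces the stated constant $2^{-n+d}$.
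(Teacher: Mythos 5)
Your proof is correct and follows essentially the same route as the paper's: the averaging identity from $\int_{\RR^d} g^D_m\,\dd\lambda=1$, Hölder's inequality from Corollary~\ref{cor:hoelder} to push the $p$-th power inside the inner integral, Fubini, and then the $\Lp$-modulus with $m:=\mu(n)$. The only difference is cosmetic: the paper keeps $g^D_m$ as a weight via the inequality $\left(g^D_m(y)\lambda([-2^{-m},2^{-m}]^d)^{1-\frac1p}\right)^p\leq 2^{d(p-1)}g^D_m(y)$ (exploiting $g^D_0\leq 1$), whereas you use the crude pointwise ceiling $g^D_m\leq 2^{dm}$ and integrate over the support box; both yield the stated bound $2^{-n+d}$.
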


		\begin{proof}
			Like in the proof of \ref{resu:modulus of the continuous approximations} use $\int_{\RR^d} g^D_m \dd\lambda = 1$ to see that
			\begin{align*}
				\|\tilde f-f^D_{m}\|_p^p %& = \int_{\RR^d} \left|\tilde f(x) - \int_{\RR^d} g^D_{m}(y) \tilde f(x-y) \dd y\right|^p \dd x \\
				%& = \int_{\RR^d} \left| \int_{\RR^d} (\tilde f(x) - \tilde f(x-y))g^D_{m}(y)\dd y\right|^p\dd x \\
				& \leq \int_{\RR^d} \left( \int_{\RR^d} |\tilde f(x)-\tilde f(x-y)| g^D_m(y) \dd y\right)^p \dd x.
			\end{align*}
			Next note that from $g^D_0\leq 1$ and $p \geq 1$ it follows that
			\begin{equation}\label{eq:powers of gm}
				\begin{split}
					\left(g^D_m(y)\lambda([-2^{-m},2^{-m}]^d)^{1-\frac1p}\right)^p & = 2^{dm(p-1)} d^p2^{dp(m-1)}g^D_0(2^my)^p\\
					& \leq d^p 2^{d (m-p)}g^D_0(2^my) = 2^{d(p-1)}g^D_m(y).
				\end{split}
			\end{equation}
			Using the version of Hölder's inequality from Corollary~\ref{cor:hoelder} conclude
			\begin{align*}
				\left(\int_{\RR^d}|\tilde f-\tau_y\tilde f| g^D_m \dd \lambda\right)^p & \leq \left(\lambda([-2^{-m},2^{-m}]^d)^{\frac 1q}\norm{(\tilde f-\tau_y\tilde f) g^D_m}_p\right)^p \\
				& \stackrel{(\ref{eq:powers of gm})}\leq 2^{d(p-1)} \int_{[-2^{-m},2^{-m}]^d} |\tilde f-\tau_y\tilde f|^p g^D_m \dd \lambda 
			\end{align*}
			Therefore applying Fubini leads to
			\begin{align*}
				\|f-f^D_{m}\|_p^p %& \leq 2^{d(p-1)}\int_{[-2^{-m},2^{-m}]^d}\int_{\RR^d}|f - \tau_y f|^p\dd\lambda g^D_m(y)\dd y \\
				& = 2^{d(p-1)}\int_{[-2^{-m},2^{-m}]^d}\|\tilde f - \tau_y \tilde f\|_p^p g^D_m(y)\dd y.
			\end{align*}
			From this the assertion follows by setting $m:=\mu(n)$ and using that $\mu$ is an $\Lp$-modulus.
		\end{proof}

		Now it is possible to prove the main result of this section:

		\begin{thm}\label{resu:exponential time computability of the norm}
			With respect to $\xip^D$ the norm of $\Lp(\Omega)$ is exponential-time computable relative to $p$.
		\end{thm}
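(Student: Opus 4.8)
The plan is to reduce the computation of $\norm f_p$ to a large but finite Riemann sum over the differentiable approximation $f^D_m$, exploiting that $\xip^D$ provides exactly the values of $f^D_k$ on dyadic points. First I would fix a dyadic box $B\supseteq\Omega$ large enough to contain the support of every $f^D_m$ (possible since $\Omega$ is bounded and $g^D_m$ is supported in $[-1,1]^d$ by \Cref{def:mollifier functions}), so that $\norm{f^D_m}_{p,B}=\norm{f^D_m}_p$. Given a precision requirement $\sdone^n$ and writing $\mu:=\flength\varphi$, I would set $m:=\mu(n+d+1)$; then \Cref{resu:differentiable approximations} gives $\norm{\tilde f-f^D_m}_p<2^{-n-1}$, so by the reverse triangle inequality (and $\norm{\tilde f}_p=\norm f_p$) it suffices to approximate $\norm{f^D_m}_p$ up to $2^{-n-1}$. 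Rather than the norm itself I would compute the integral $J:=\int_B\abs{f^D_m}^p\dd\lambda$ and take a $p$-th root at the very end: since $\abs{a^{1/p}-b^{1/p}}\le\abs{a-b}^{1/p}$ for $a,b\ge0$ and $p\ge1$, computing $J$ to absolute precision $2^{-(n+1)p}$ yields the norm to precision $2^{-n-1}$.

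The quantitative input is two explicit bounds on $f^D_m$. A sup-norm bound follows from the convolution structure, $\norm{f^D_m}_\infty\le\norm{g^D_m}_\infty\norm f_1=d2^{d(m-1)}\norm f_1$, where $\norm f_1\le\lambda(\Omega)^{1-1/p}\norm f_p$ by \Cref{cor:hoelder} and $\norm f_p$ is in turn bounded in terms of $\mu(0)$ by \Cref{resu:norm bound from modulus}; call the resulting bound $R$. A Lipschitz bound comes from the gradient estimate \Cref{resu:gradient estimate}, giving $\norm{\grad f^D_m}_\infty\le d2^{d(m-1)+m}\norm f_1$; call the resulting Lipschitz constant $L$. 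Both $\lb R$ and $\lb L$ are then $O(dm+\mu(0))$ with constants depending only on $d$ and $\Omega$. I would tile $B$ into dyadic cubes of side $2^{-s}$, query via the defining inequality of $\xip^D$ in \Cref{d:xipd} (that is, evaluate $\varphi(\langle\str a,\sdone^m,\sdone^t\rangle)$) a $2^{-t}$-approximation of $f^D_m$ at each cube's centre, raise the absolute value to the $p$-th power using the given name of $p$, weight by $2^{-ds}$ and sum. The discretization error is at most $\lambda(B)\,pR^{p-1}L\,\sqrt d\,2^{-s}$ and the accumulated query error at most $\lambda(B)\,pR^{p-1}2^{-t}$ (both using that $t\mapsto\abs t^p$ is $pR^{p-1}$-Lipschitz on $[-R,R]$), so choosing $s$ and $t$ of size $O\big(pn+p(dm+\mu(0))\big)$ makes both smaller than $2^{-(n+1)p-1}$.

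The final and most delicate step is the running-time analysis against the very restrictive \Cref{def:exponential time computability}, which permits $\flength\varphi$ to occur only once, with no iteration. Since $m=\mu(n+d+1)=\flength\varphi(n+d+1)$ and $\mu(0)=\flength\varphi(0)\le\flength\varphi(n+d+1)$, the quantities $\lb R,\lb L,s,t$ are all $O\big(\flength\varphi(n+d+1)+n\big)$ with constants depending on $p$ and $d$ — crucially a single application of $\flength\varphi$. The dominant cost is the number of cubes, $\lambda(B)2^{ds}=2^{O(pd^2\flength\varphi(n+d+1)+pdn)}$; each cube contributes one oracle call plus arithmetic, and by the design of the encoding $\bin\cdot$ only a prefix of length $O(t+\lb R)$ of each (possibly padded) answer must be read, so the per-cube cost is polynomial in $\flength\varphi(n+d+1)$ and $n$. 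Hence the total time is bounded by $2^{A\flength\varphi(n+B)+Cn}$ with $A,B,C$ depending on $p,d,\Omega$, which is exactly exponential time relative to $p$. The hard part is precisely this bookkeeping: I must ensure that neither the chosen precisions nor the oracle-answer lengths force an evaluation of $\flength\varphi$ at an argument that itself grows like $\flength\varphi$, which would create a forbidden nested iteration; reading only short prefixes of the padded outputs is what keeps everything inside a single application of $\flength\varphi$.
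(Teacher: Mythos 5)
Your proposal is correct and follows essentially the same route as the paper's proof: reduce to the differentiable approximation $f^D_m$ with $m=\mu(n+d+1)$ via Lemma~\ref{resu:differentiable approximations}, evaluate it on an exponentially fine dyadic grid through the oracle, control the discretization with the gradient estimate of Lemma~\ref{resu:gradient estimate} together with the norm bound of Lemma~\ref{resu:norm bound from modulus}, sum $p$-th powers and take a $p$-th root, and check that only one application of $\flength{\varphi}$ enters the time bound. The only difference is cosmetic: the paper treats the grid values as a step function $F_n$ and bounds $\abs{\norm{f}_p-\norm{F_n}_p}\leq\norm{f-F_n}_p$ directly, which lets it dispense with your sup-norm bound $R$ and the Lipschitz constant of $t\mapsto\abs{t}^p$.
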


		\begin{proof}
			Denote the constant one vector $(1,\ldots,1)$ by $e$.
			Let $A$ be such that the distance of $\Omega$ to the complement of the box $K:=[-2^Ae,2^Ae]$ of radius $2^A$ around zero is bigger than one.
			Specify an oracle Turing machine that computes the norm as follows:
			Given a $\xip^D$-name $\varphi$ of $f\in\LLL p{\Omega}$ as oracle and a precision requirement $\sdone^n$ the machine first computes
			\[ N:=\mu(n+d+1),\quad M:= d(N - 1) + N + A + \mu(0)+\lceil\lb(d)\rceil \]
			and $k:=n+(d+1)(\mu(N)+2)+\mu(0)$.
			Let $\DD_M$ be the set of $z\in\DD^d\cap K$ of the form $m2^{-M}$.
			For each $z\in\DD_M$ the machine chooses some binary encoding $\str a_z\in \albe^*$ (for example the unique canceled encoding such that the encoding of the enumerator has no leading zeros) and queries $\varphi$ for
			\[ d_{z,n}:= \bin{\varphi(\langle\str a_{z},\sdone^N,\sdone^{k}\rangle)}. \]
			After each query, it adds the $p$-th power of the result to the sum of the previous queries.
			In the end, the $p$-th root of the number is returned.

			To see that this machine returns a correct approximation, note that it returns the $\Lp$-norm of the function
			\[ F_{n}:= \sum_{z\in\DD_M} d_{z,n}\chi_{\cell zM}. \]
			To see that $F_n$ is a good approximation to $f$ in the $\Lp$ norm, write
			\begin{align}
				\|f-F_n\|_p & \leq \|\tilde f-f^D_{N}\|_p + \norm{f^D_N - F_n}_p
			\end{align}
			Each of the summands of the right hand side is smaller than $2^{-n-1}$:
			The first term is taken care of by the choice of $N$ together with the approximation property of the sequence $f^D_N$ from Lemma~\ref{resu:differentiable approximations}.
			For the second summand, note that each $x\in\Omega$ is $2^{-M}$ close to some $z\in K$ and that for these $z$
			\[ \abs{f^D_N(z)-F_{n}(z)} = \abs{f_N^D(z) - d_{z,n}} <2^{dN-k}. \]
			By choice of $M$, the gradient estimate of $f^D_N$ from Lemma~\ref{resu:gradient estimate} together with the bound on the $\Lp$-norm from Lemma~\ref{resu:norm bound from modulus} and since $F_{n}$ is piecewise constant it follows that
			\[ \norm{f^D_N-F_{n}}_p \leq\lambda(\Omega)^{\frac 1p} \norm{f^D_N - F_{n}}_\infty < 2^{-n-1}. \]
			Thus, from $\big|{\norm{f}_p-\norm{F_n}_p}\big| \leq \norm{f-F_n}_p$ it follows that the return value is indeed a valid $2^{-n}$ approximation to the norm of $f$.

			The assertion now follows from the fact that the machine carries out a loop that takes time linear in $\flength{\varphi}(n+C)$ (provided that approximations to $p$ are given) an exponential number of times.
		\end{proof}

		Using the content of \Cref{sec:connecting metric entropy and complexity}, the next result can be regarded as a corollary of the above. However, since it is also possible to give an independent proof without relying on that section we list it as a theorem.
		Recall the \fkT sets from Definition~\ref{def:fkT sets}.

		\begin{thm}\label{resu:improved upper fkT}
			There are constants $A,B,C\in\NN$ such that 
			\[ \size{K^p_l}(n)\leq 2^{A l(n+d+1) + l(0)+B} + n + C. \]
		\end{thm}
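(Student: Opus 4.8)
The plan is to mirror the proof of the upper bound in Theorem~\ref{resu:fkT}, but to replace the continuous approximations $f_m$ by the differentiable approximations $f^D_m$ from Definition~\ref{def:differentiable approximations}. The whole gain is that, by the gradient estimate in Lemma~\ref{resu:gradient estimate}, each $f^D_N$ is genuinely Lipschitz and hence admits a \emph{linear} modulus of continuity; feeding a linear modulus into the quantitative \aaT Theorem~\ref{resu:aaT} produces an exponent that is \emph{additive} in the relevant parameters, rather than the iterated second-order polynomial $P(l,n)$ forced upon Theorem~\ref{resu:fkT} by the modulus $\mu(n+\lceil d/p\rceil m)$ of the merely continuous $f_m$. (Alternatively, as noted before the statement, one can deduce the bound from the exponential-time computability of the norm in Theorem~\ref{resu:exponential time computability of the norm} together with \Cref{resu:metric entropy and complexity item:bounded entropy}; I will instead give the self-contained argument.)

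Concretely, fix $n$ and set $N:=l(n+d+1)$. For every $f\in K^p_l$, Lemma~\ref{resu:differentiable approximations} gives $\norm{\tilde f-f^D_N}_p<2^{-n-1}$. To place $f^D_N$ into an \aaT-set I need two uniform bounds. First, combining the gradient estimate $\norm{\grad f^D_N}_\infty\le d\,2^{d(N-1)+N}\norm f_1$ from Lemma~\ref{resu:gradient estimate} with Corollary~\ref{cor:hoelder} and a norm bound on $f$ shows that $l'(m):=m+K_0$ is a modulus of continuity of $f^D_N$, where $K_0\le (d+1)N+s+O(1)$ and $s$ is the exponent of the norm bound discussed below. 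Second, Young's inequality $\norm{f^D_N}_\infty\le\norm{g^D_N}_\infty\norm f_1=d\,2^{d(N-1)}\norm f_1$ bounds the supremum norm, giving $\norm{f^D_N}_\infty\le 2^{C'}$ with $C'\le dN+s+O(1)$. Thus every $f^D_N$ lies in one set $K^\infty_{l',C'}$. Applying Theorem~\ref{resu:aaT} at scale $n+1+E$, where $E$ is a constant chosen so that a supremum-norm ball of radius $2^{-(n+1+E)}$ is contained in an $\Lp$-ball of radius $2^{-n-1}$ (this only costs the factor $\lambda(\Omega)^{1/p}$), and then invoking the triangle inequality, the centres produced by \aaT form a $2^{-n}$-net of $K^p_l$ in $\Lp$. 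Hence $\size{K^p_l}(n)\le\size{K^\infty_{l',C'}}(n+1+E)\le 2^{l'(n+1+E)+1}+(n+1+E)+C'+2$, whose exponent equals $n+K_0+E+2$.

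The main obstacle is that this exponent still carries the free summand $n$ from the covering scale, whereas the target exponent only involves $l(n+d+1)$ and $l(0)$; I must absorb $n$ into $(d+1)N$ with \emph{universal} constants. Two ingredients do this. (i) A sharp norm bound: rerunning the telescoping estimate of Lemma~\ref{resu:norm bound from modulus} at an arbitrary scale $k$ instead of $0$ yields $\lb\norm f_p<l(k)-k$ for every $k$, hence $\lb\norm f_p<s$ with $s:=\min_{k}\bigl(l(k)-k\bigr)$; this is exactly the $s$ featuring in the Lipschitz and supremum estimates above. (ii) Strict monotonicity of $l$ (Definition~\ref{def:lp-moduli}): once $l$ is nonzero it increases by at least $1$ per step, and at the index $k_0$ realising $s$ one checks $l(n+d+1)-s\ge n+d+1$, i.e.\ $n+s\le l(n+d+1)-d-1=N-d-1$. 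Substituting into $n+K_0\le n+(d+1)N+s+O(1)$ makes the offset $s$ cancel and leaves exponent $\le(d+2)\,l(n+d+1)+O(1)\le(d+2)\,l(n+d+1)+l(0)+O(1)$, so $A=d+2$ works. The purely additive terms $(n+1+E)+C'+2=n+d\,l(n+d+1)+s+O(1)$ split into the genuine $+n$ (matching the statement) and an $l$-dependent remainder $d\,l(n+d+1)+s$ that is dominated by the exponential term via $x\le 2^x$ and folded into $B$.

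Finally the degenerate cases are immediate and are precisely the $n$ for which the monotonicity step of (ii) would otherwise break down: if $l\equiv 0$ then $K^p_l=\{0\}$, and whenever $l(n)=0$ the disjoint-support shift underlying (i) forces $\norm f_p<2^{-n}$, so $K^p_l$ fits in a single $2^{-n}$-ball and the asserted inequality holds with room to spare. I expect step (ii), and in particular verifying that the sharp norm exponent $s$ cancels the offset produced when bounding $n$ by $l(n+d+1)$, to be the delicate point; everything else is bookkeeping on top of the lemmas of \Cref{sec:sub:equivalence to the standard representation} and \Cref{sec:smoother approximations}.
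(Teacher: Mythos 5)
Your proposal is correct and follows exactly the first of the two routes the paper itself sketches for this theorem: rerun the upper-bound argument of Theorem~\ref{resu:fkT} with the differentiable approximations $f^D_N$ and the lemmas of the smoother-approximation section in place of the continuous approximations, so that the gradient estimate yields a linear modulus and the quantitative \aaT bound becomes additive rather than iterated. The paper's proof is only a two-sentence sketch, and your step (ii) --- sharpening Lemma~\ref{resu:norm bound from modulus} to $\lb\norm{f}_p < \min_k(l(k)-k)$ so that this offset cancels against the bound $n \leq l(n+d+1)-s-d-1$ and removes the stray $n$ from the exponent --- is a genuine and correctly handled detail that the sketch leaves implicit.
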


		\begin{proof}
			We give two sketches:
			On the one hand it is possible to follow the proof of Theorem~\ref{resu:fkT} and replace the lemmas from \Cref{sec:sub:equivalence to the standard representation} by the lemmas of this section.
			On the other hand one can add an oracle for $p$ to each $\Lp$-name of a function and in this way construct a representation such that applying the first item of Theorem~\ref{resu:metric entropy and complexity} proves the claim.
		\end{proof}
		While the first sketch relies on the quantitative version of the \aaT Theorem~\ref{resu:aaT} and is therefore bound to the unit interval, the second sketch is not bound to such a restrictive setting:
		It remains correct for more general domains and higher dimensions.

%!TEX root = Lp.tex
\section*{Conclusion}
	
	A comprehensive summary of the content of the paper can be found in the introduction.
	Thus, this conclusion concentrates on high level comments and mentioning additional results or improvements of results and points out some loose ends.
	The remarks follow the general outline of the paper.

	Like for the standard representation of continuous functions, the minimality property of the singular representation from Theorem~\ref{resu:minimality of the singular representation d} also applies to arbitrary range-restrictions.
	The discontinuity of the singular representation can be strengthened:
	Theorem~\ref{resu:discontinuity norm d} remains true if the norm topology is replaced by the weak topology.
	The singular representation seems to have a straight forward generalization to locally integrable functions.
	On locally integrable functions the usual norms do not make sense anymore and different topologies are considered.
	It would be interesting to find out whether the topology of the singular representation coincides with one previously considered by analysts.

	Some might argue that the choice of an integration operator is too restrictive.
	At least in higher dimensions the restriction of the integral operator to only integrate over boxes seems very severe.
	This restrictive setting, however, seems unavoidable.
	Polynomial-time computability of many possible extensions is ruled out by hardness results proven in \cite{MR1137517}.
	The same holds for the approach to consider a function to be a functional on the continuous functions.

	For the definition of the $\Lp$-modulus in Definition~\ref{def:lp-moduli} the function was extended to the whole space by zero before integrating.
	This is a convention: One could instead have integrated over the intersection of the domains of the function and the shifted function.
	However, in this case the property that exactly the $\Lp$-functions allow a modulus is lost.
	If one uses this modification to define a representation technical difficulties are encountered when a proof of equivalence to the Cauchy representation is attempted.
	For complexity considerations it seems impossible to progress on this path without restricting the domains.

	The first part of the proof of Theorem~\ref{resu:equivalence to the standard representation} can be seen to show the stronger statement of computable openness of the representation as introduced in \cite{kihara2014point}.
	Furthermore, if exponential-time computability is introduced to allow a full second-order polynomial in the exponent (in contrast to Definition~\ref{def:exponential time computability}, where no function argument iteration is allowed in the exponent), the second part of the proof on page \pageref{proof:second part of equivalence} shows exponential-time translatability to the Cauchy representation.
	From this a weaker form of exponential-time computability of the norm than that from Theorem~\ref{resu:exponential time computability of the norm} follows.

	Recall from the introduction, that in practice maximization is considered difficult while integration is considered feasible.
	This is reflected in the second-order representations introduced in this paper:
	Neither the representation $\xis$ nor the representations $\xip$ allow in a straight-forward way to maximize a continuous or smooth  function.
	Indeed\footnote{thanks to Akitoshi Kawamura for pointing this out.}, modifications of the smooth functions considered by Ko and Friedman in \cite{MR666209} show that the maximization operator will not preserve polynomial-time computability with respect to these representations unless $\p=\np$.

	While ordinary differential equations are a field of application for Sobolev spaces (compare for instance \cite[chapter 8.4]{MR2759829}), partial differential equations are by far the most important application.
	However, many of the arguments from \Cref{sec:sobolev spaces} cannot be translated in the most straight forward way to higher dimensions.
	For instance: Existence of a weak derivative does not imply continuity in higher dimensions.
	For the inclusions to make sense in higher dimensions further assumptions are necessary.
	Even if these assumptions are met, Theorem~\ref{resu:Lp-modulus of derivative to modulus of continuity} cannot be straight forwardly replaced.
	Indeed, the argument from Proposition~\ref{resu: from lp-modulus to singularity modulus} cannot carry over to higher dimensions in the straight forward way, as it would only mention derivatives of first order and it is known that existence of the first weak partial derivatives does not imply continuity.

	Partial differential equations have received increased interest in computable analysis in the last years.
	Compare for instance \cite{MR1694445,MR2351947,MR2275415,MR2275411}.
	There is a plethora of results for solving partial differential equations from numerical analysis.
	It seems reasonable to assume that formulating these algorithms in a rigorous framework and lifting results from the references above to a complexity theoretical level should be closely connected tasks.
	For instance many of the results from \cite{MR2275411} are interesting in one dimension already.

	All results from this paper that mention exponential-time computability can be improved to use polynomial-space computability instead.
	The model of space bounded computation in presence of oracles, however, is not completely straight forward:
	The right model of oracle access is a stack of finite depth (compare \cite{MR3259646,MR973445}).
	
	The representations $\xip$ and the representation $\xip^D$ from the last chapter can be combined to a representation featuring both polynomial-time computability of integrals and exponential-time computability of the norm.
	However, it does not seem reasonable to add the information provided by $\xip^D$:
	It increases the amount of information that has to be provided to specify a function for the sake of improving the runtime of an exponential-time computable (so not feasible) operation on input of big $\Lp$-modulus.
	The first sketch of a proof of Theorem~\ref{resu:improved upper fkT} suggests that convoluting with even smoother functions does not lead to further improvements in performance:
	The dominant term in the running time is independent of smoothness:
	The a supremum norm estimate obtained from Proposition~\ref{resu:convolution norm formula}.

	Classification theorems for the compact subsets of function spaces are of interest to analysts and approximation theorists for reasons independent of those sketched in this paper.
	They have been investigated for a long time and are well developed.
	The link between quantitative versions of these results and optimality results for running times provides a rich resource for finding interesting representations.
	Such results are in particular known for Banach space valued
        functions.

	\bibliography{bib}{}
	\bibliographystyle{alphaurl}
	\newpage
\end{document}